\pdfoutput=1
\documentclass[a4paper,UKenglish,cleveref, autoref, thm-restate]{lipics-v2019}

\usepackage{hyperref}
\usepackage{amssymb, amsmath}
\usepackage{amsthm}
\usepackage{dsfont}
\usepackage[ruled,vlined,linesnumbered]{algorithm2e}
\usepackage{tikz}
\usetikzlibrary{shapes,backgrounds}

\bibliographystyle{plainurl}

\newcommand{\val}{\mbox{Val}}
\newcommand{\M}{\text{\sc max} }
\newcommand{\m}{\text{\sc min} }
\newcommand\comments[1]{}
\newcommand{\pro}[1]{{\mathbb P}\left(#1\right)}
\newcommand{\pros}[3]{{\mathbb P}^{#1}_{#2}\left(#3\right)}
\newcommand{\esp}[1]{{\mathbb E}\left(#1\right)}
\newcommand{\esps}[3]{{\mathbb E}^{#1}_{#2}\left(#3\right)}   
 
\everymath{\displaystyle}

\DeclareMathOperator*{\argmax}{arg\,max}

\title{A Generic Strategy Improvement Method for Simple Stochastic Games} 


\author{David Auger}{Université Paris Saclay, UVSQ, DAVID}{david.auger@uvsq.fr}{}{}

\author{Xavier Badin de Montjoye}{Université Paris Saclay, UVSQ, DAVID}{xavier.badin-de-montjoye2@uvsq.fr}{}{}

\author{Yann Strozecki}{Université Paris Saclay, UVSQ, DAVID}{yann.strozecki@uvsq.fr}{}{}

\authorrunning{D. Auger and X. Badin De Montjoye and Y. Strozecki} 

\Copyright{D. Auger and X. Badin de Montjoye and Y. Strozecki} 

\ccsdesc[500]{Theory of computation~Algorithmic game theory}

\keywords{Simple Stochastic Games, Strategy Improvement, Parametrized Complexity, Stopping, Meta Algorithm, f-strategy} 





\acknowledgements{The authors want to thank Pierre Coucheney for many interesting discussions on SSGs.}

\nolinenumbers 

\hideLIPIcs  

\EventEditors{John Q. Open and Joan R. Access}
\EventNoEds{2}
\EventLongTitle{42nd Conference on Very Important Topics (CVIT 2016)}
\EventShortTitle{CVIT 2016}
\EventAcronym{CVIT}
\EventYear{2016}
\EventDate{December 24--27, 2016}
\EventLocation{Little Whinging, United Kingdom}
\EventLogo{}
\SeriesVolume{42}
\ArticleNo{23}

\begin{document}

\maketitle

\begin{abstract}
    We present a generic strategy improvement algorithm (GSIA) to find an optimal strategy of simple stochastic games (SSG). We prove the correctness of GSIA, and derive a general complexity bound, which implies and improves on the results of several articles. First, we remove the assumption that the SSG is stopping, which is usually obtained by a polynomial blowup of the game. Second, we prove a tight bound on the denominator of the values associated to a strategy, and use it to prove that all strategy improvement algorithms are in fact fixed parameter tractable in the number $r$ of random vertices. All known strategy improvement algorithms can be seen as instances of GSIA, which allows to analyze the complexity of converge from below by Condon~\cite{condon1993algorithms} and to propose a class of algorithms generalising Gimbert and Horn's algorithm~\cite{gimbert2008simple,lmcs1119}. These algorithms
    terminate in at most $r!$ iterations, and for binary SSGs, they do less iterations than the current best deterministic algorithm  given by Ibsen-Jensen and Miltersen~\cite{ibsen2012solving}.
\end{abstract}


\section{Introduction}

A \emph{simple stochastic game}, or SSG, is a two-player turn-based zero-sum game with perfect information introduced by Condon~\cite{condon1992complexity}. It is a simpler version of \emph{stochastic games}, previously defined by Shapley~\cite{Shapley1095}. An SSG is played by two players {\sc max} and {\sc min} moving a pebble on a graph. Vertices are divided into \m vertices, \M vertices, random vertices and a target vertex for {\sc max}. When the pebble reaches a \m or \M vertex,  corresponding players move the pebble to a neighbouring vertex of their choice. If it reaches a random vertex, the next vertex is chosen at random following some probability law. Finally, when the pebble reaches the target vertex, \m pays $1$ to \M. The goal of \m is to minimise the probability to reach the target vertex while \M must maximise this probability.

We study the algorithmic problem of {\it solving} an SSG, i.e. finding a pair of optimal strategies in an SSG,
or equivalently the optimal value vector of the optimal probabilities for \M to reach the sink from each vertex. There are always optimal strategies for both players that are positional~\cite{condon1992complexity}, i.e. stationary and deterministic, but the number of positional strategies is exponential in the size of the game. Consequently, finding a pair of optimal  strategies is a problem not known to be in $\textsc{FP}$, but it is in $\textsc{PPAD}$~\cite{juba2005hardness}, a class included in $\textsc{FNP}$.


Simple Stochastic Games can be used to simulate many classical games such as parity games, mean or discounted payoff games~\cite{Anderson2009,chatterjee2011parity}. Moreover, stochastic versions of these games are equivalent to SSGs~\cite{Anderson2009}, which underlines that SSGs are an important model to study. SSGs have applications in different domains such as model checking of modal $\mu$-calculus~\cite{stirling1999bisimulation}, or modelling  autonomous urban driving~\cite{Chen2013urbandriving}.


There are three known methods to solve SSGs: strategy improvement, value iteration and quadratic programming. A strategy improvement algorithm (SIA) starts with a strategy for one player and improves it until it is optimal, whereas value iteration algorithms (VIA) update a value vector by elementary operations, which converges to the optimal value vector of the game. Implementations of those algorithms have been written and compared in~\cite{kvretinsky2020comparison}. 

Denote by $n$ be the number of \M vertices and $r$ be the number of random vertices in an SSG.
For SSGs with \M  vertices of outdegree $2$, the best known deterministic algorithm is an SIA which makes at worst $\displaystyle O\left(2^{n}/n\right)$ iterations (see~\cite{tripathi2011strategy}), and the best known randomised algorithm is a SIA described by Ludwig in~\cite{ludwig1995subexponential}, which runs in $\displaystyle 2^{O\left(\sqrt{n}\right)}$. 

 Gimbert and Horn give an SIA in~\cite{gimbert2008simple}, running in $\displaystyle O^*\left(r!\right)$ iterations, namely a superpolynomial dependency in $r$ only ($O^*$ omits polynomial factors in $r$ and $n$). For SSGs where random vertices have a probability distribution $(1/2,1/2)$ (coin toss), Ibsen-Jensen and Miltersen present a VIA of complexity in $\displaystyle O^*\left(2^{r}\right)$~\cite{ibsen2012solving}. It turns out that all SIA runs in $\displaystyle O^*\left(2^{r}\right)$ on SSGs with probability distribution $(1/2,1/2)$, as we prove in this article. The same complexity of $O^*\left(2^{r}\right)$ is obtained for general SSGs with a more involved randomised algorithm in~\cite{auger2019solving}.
 
 Most of the aforementioned algorithms rely on the game being stopping, meaning that it structurally ends in a sink with probability $1$. This condition is not restrictive since any SSG can be transformed into a stopping SSG while keeping the same optimal strategies. However, this transformation incurs a quadratic blow-up of the game and cannot be used in real life application. In this paper we give bounds in $O\left(2^{r}Poly(n)\right)$ computational time for some kinds of algorithm. Using the stopping transformation would have induced a $O\left(2^{nr^2}Poly(n)\right)$ complexity. The stopping restriction has been lifted for quadratic programming in~\cite{kvretinsky2020comparison} and before that for SIA and VIA in~\cite{Chatterjee2008valueiteration,CHATTERJEE2013reachability}.

\subsection*{Contributions}

We introduce GSIA, a new meta-algorithm to solve SSGs in Sec.~\ref{sec:GIS}. This algorithm proves simultaneously the correctness of multiple algorithms (\cite{condon1993algorithms,gimbert2008simple,dai2009new, tripathi2011strategy,ibsen2012solving,auger2019solving}). In Sec.~\ref{sec:complexity}, we give a general complexity bound that matches or improves on previous bounds obtained by ad-hoc methods. We show that all these algorithms are fixed-parameter tractable in the number of random vertices. Moreover, we do not rely on the fact that the game is stopping, which was commonly used in the aforementioned papers.
The proof of correctness  relies on a notion of concatenation for strategies and an analysis of absorbing sets in the game, while the complexity bound is derived from a new and tight characterisation of the values of an SSG. Finally, in Sec.~\ref{sec:GSIinstances}, we show how GSIA can be used to derive new algorithms, generalising classical ones. In particular, we exhibit a class of algorithms which generalise Gimbert and Horn's algorithm
and use less iterations than Ibsen-Jensen and Miltersen's algorithm.

We emphasise that our goal here is not to define a new
algorithm that would have a better --but still exponential-- complexity bound than the state of the art (a sub-exponential algorithm for SSGs, like the ones found for parity games \cite{calude2020deciding, colcombet2019universal}, would already be a significant improvement),
but rather to wrap-up a lot of previous research
by showing that all known SIA for SSGs, despite having emerged in different contexts and having ad-hoc proofs of convergence, are in fact instances of a general pattern that can be
further expanded, and actually share the
best known complexity bounds.


\section{Simple Properties of Simple Stochastic Games}\label{sec:definitions}


    \subsection{Simple Stochastic Games}

       We give a generalised definition of \emph{Simple Stochastic Game}, a two-player zero-sum game with turn-based moves and perfect information introduced by Anne Condon~\cite{condon1993algorithms}. 
        
        \begin{definition}
            A Simple Stochastic Game (SSG) is a directed graph $G$, together with:
            \begin{enumerate}
                \item A partition of the vertex set $V$ in four parts $V_{\M}$, $V_{\m}$, $V_R$ and $V_S$ (all possibly empty, except $V_S$), satisfying the following conditions:
                \begin{enumerate}
                    \item every vertex of $V_{\M}$, $V_{\m}$ or $V_R$ has at least one outgoing arc;
                    \item every vertex of $V_S$ has exactly one outgoing arc which is a loop on itself.
                \end{enumerate}
                \item For every $x\in V_R$, a probability distribution $p_x(\cdot)$ with rational values, on the outneighbourhood of $x$.
                \item For every $x \in V_S$, a value $\val(x)$ which is a rational number in the closed interval $[0,1]$.
            \end{enumerate}
        \end{definition}
        
        In the article, we denote $|V_{\M}|$ by $n$ and $|V_R|$ by $r$. 
        Vertices from $V_{\M}$, $V_{\m}$, $V_R$ and $V_S$ are respectively called \M vertices, \m vertices, random vertices and sinks. For $x \in V$, we denote by $N^{+}(x)$ the set of outneighbours of $x$. We assume that for every $x \in V_R$ and $y \in V$, $y \in N^{+}(x)$ if and only if $p_x(y) > 0$.
        
        The game is played as follows. The two players are named \M and {\sc min}. A token is positioned on a starting vertex $x$. If $x$ is in $V_{\M}$ (resp. $V_{\m}$) the \M player (resp. the \m player) chooses one of the outneighbours of $x$ to move the token to. If $x$ is in $V_R$, the token is randomly moved to one of the outneighbours of $x$ according to the probability distribution $p_x(\cdot)$, independently of everything else. This process continues until the token reaches a sink $s$ and then, player \m has to pay $\val(s)$ to player \M and the game stops. 
        The problem we study is to \emph{find the best possible strategies} for \m and {\sc max}, and the expected value that \m has to pay to \M while following those strategies. 
        
    
        We consider a slightly restricted class of SSGs where the probability distribution on each random vertex has a given precision and the value of the sinks are $0$ and $1$.
        
        \begin{definition}
        
            For $q$ a positive integer, we say that an SSG is a $q$-SSG if there are only two sinks of value $0$ and $1$, and for all $x \in V_{R}$, there is an integer $q_x \leq q$ such that the probability distribution $p_x(\cdot)$ can be written as $p_x(x') = \displaystyle \frac{\ell_{x,x'}}{q_x}$ for all $x'$ where $\ell_{x,x'}$ is a natural number.
        
        \end{definition}
        
        As an example, let $x$ be a random vertex of a $2$-SSG, and let $u \in N^+(x)$, then $p_x(u)$ can be equal to $0$, $1/2$ or $1$. The case $p_x(u) = 0$ is forbidden by definition, and if $p_x(u) = 1$, then $x$ is of degree one and can be removed (by redirecting arcs entering $x$ directly to $u$), without changing anything about the outcome of the game.
        Hence, we suppose without loss of generality that each random vertex of a $2$-SSG has degree $2$ and has probability distribution $(1/2,1/2)$. This definition matches the one of a {\it binary SSG}, given by Condon and used in most articles on SSGs, except that we allow here \M and \m vertices to have an outdegree larger than $2$.
    
    \subsection{Play, History and Strategies}
    
        \begin{definition}
            A {\bf play} in $G$ is an infinite sequence of vertices
            $X= (x_0, x_1, x_2, \cdots)$ such that  for all $t \geq 0$,
            $(x_t,x_{t+1})$ is an arc of $G$. 
        \end{definition}
        
        If for a play $X=(x_t)$ there is some $t \geq 0$ with $x_t = s \in V_S$, then all subsequent vertices in the play are also equal to $s$. In this case, we say that the play { \bf reaches} sink vertex $s$ and we define the {\bf value of the play} $\val(X)$ as $\val(s)$. If the play reaches no sink, then we set $\val(X) = 0$.
        
        A {\bf history} of $G$ is a finite directed path $h = (x_0, x_1, \cdots, x_k)$. If the last vertex $x_k$ is a \M vertex (resp. \m vertex), we say that $h$ is a \M history (resp. \m history).
        
        \begin{definition}
            A {\bf general \M strategy} (resp. general \m strategy) is a map $\sigma$ assigning to every \M history
            (resp. \m history) $h=(x_0, x_1, \cdots, x_k)$ a vertex $\sigma(h)$ which is an outneighbour of $x_k$. The set of these strategies is denoted by $\Sigma_{gen}^{\M}$ (resp. $\Sigma_{gen}^{\m}$).        \end{definition}
        
        For $\sigma \in \Sigma_{gen}^{\M}$ and $\tau \in \Sigma_{gen}^{\m}$, given a starting vertex $x_0$, we recursively define a random play $X = (X_0, X_1, \cdots)$ of $G$ in the following way. At $t=0$ let $X_0 = x_0$, and for $t \geq 0$:
        
        \begin{itemize}
            \item if $X_t \in V_{\M}$, define $X_{t+1} = \sigma(X_0, X_1, \cdots, X_t)$;
            \item if $X_t \in V_{\m}$, define $X_{t+1} = \tau(X_0, X_1, \cdots, X_t)$;
            \item if $X_t \in V_{R}$, then $X_{t+1}$ is an outneighbour of $X_t$ chosen following the probability distribution $p_{X_t}(\cdot)$, independently of everything else;
            \item if $X_t \in V_{S}$, define  $X_{t+1} = X_t$.
        \end{itemize}
        
        This defines a distribution on plays which we denote by $\pros{x_0}{\sigma, \tau}{\cdot}$, or simply $\pro{\cdot}$ if strategies and starting vertex are clear from context. 
        The corresponding expected value and conditional expected values are denoted by $\esps{x_0}{\sigma, \tau}{\cdot \vert \cdot}$, or simply $\esp{\cdot \vert \cdot}$.
        
        We now define \textbf{positional strategies} which only depend on the last vertex in the history:
        
        \begin{definition}
        
            A general \M strategy $\sigma$ (resp. \m strategy) is said to be positional if for any \M vertex $x$ (resp. \m vertex) and any history $h = (x_0, \ldots, x)$, we have $\sigma(h) = \sigma((x))$ where $(x)$
            is the history containing only $x$ as a start vertex. The set of positional \M strategies (resp. \m strategies) is denoted $\Sigma^{\M}$ (resp. $\Sigma^{\m}$).
        
        \end{definition}

    \subsection{Values in an SSG}
    
        \begin{definition}
        
            Let $G$ be an SSG and let $(\sigma,\tau)$ be a pair of \M and \m strategies, the value vector $v^{G}_{\sigma,\tau}$ is the real vector of dimension $|V|$ defined by, for any $x_0 \in V$, 
            $$v^{G}_{\sigma,\tau}(x_0) = \esps{x_0}{\sigma, \tau}{\val(X)}.$$
            
            This value represents the expected gains for player \M if both players plays according to $(\sigma,\tau)$ and the game starts in vertex $x_0$.
             
        \end{definition}
        As before, the superscript $G$ can be omitted when the context is clear.
        
        To compare value vectors, we use the pointwise order:  we say that $v \geq v'$ if for all vertices $x \in V$ we have $v(x) \geq v'(x)$. Moreover, we say that $v > v'$ if $v \geq v'$ and there is some $x$ such that  $v(x) > v'(x)$.
        Given a $\M$ strategy $\sigma$, a {\bf best response} to $\sigma$
        is a $\m$ strategy $\tau$ such that $v_{\sigma,\tau} \leq v_{\sigma,\tau'}$
        for all $\m$ strategies $\tau'$.
       
        
        \begin{proposition}[\cite{condon1993algorithms}]
        A positional strategy admits a positional best response, which can be found in polynomial time using linear programming.
        \end{proposition}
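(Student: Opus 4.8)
The statement is symmetric in the two players (the excerpt defines a best response for a $\M$ strategy, but the same applies to a $\m$ strategy), so the plan is to fix a positional $\M$ strategy $\sigma \in \Sigma^{\M}$ and construct a positional $\m$ best response. First I would observe that once $\sigma$ is fixed the game becomes a one‑player process (a Markov decision process) in which only $\m$ makes choices, and characterise the value of a best response through Bellman's optimality equations: a vector $v \in [0,1]^{V}$ is a candidate if $v(x) = v(\sigma(x))$ for $x \in V_{\M}$, $v(x) = \min_{y \in N^{+}(x)} v(y)$ for $x \in V_{\m}$, $v(x) = \sum_{y} p_x(y)\, v(y)$ for $x \in V_R$, and $v(x) = \val(x)$ for $x \in V_S$. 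Let $T_\sigma$ be the associated monotone operator on $[0,1]^{V}$; I would take $v^{*}$ to be its least fixed point, which exists by the Knaster--Tarski theorem and is the limit of the non‑decreasing sequence $T_\sigma^{k}(\mathbf 0)$. Because the game is not assumed stopping, $T_\sigma$ may admit several fixed points, so pinning down this particular one is the heart of the matter.

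The next step is to check that $v^{*}$ is indeed the best‑response value, i.e.\ that $v^{*}(x_0) = \inf_{\tau} v_{\sigma,\tau}(x_0)$ where $\tau$ ranges over \emph{all} general $\m$ strategies. For the inequality $v^{*} \le v_{\sigma,\tau}$ I would compare, for each horizon $k$, the value $\m$ can secure in the truncated game stopped after $k$ steps (with reward $0$ if no sink has been reached), which is captured by $T_\sigma^{k}(\mathbf 0)$, with $v_{\sigma,\tau}$, and let $k \to \infty$. For the reverse inequality, define the positional strategy $\tau^{*}$ by choosing, at each $x \in V_{\m}$, some $\tau^{*}(x) \in \argmin_{y \in N^{+}(x)} v^{*}(y)$: then $v^{*}$ solves the linear system that defines the value vector of the Markov chain induced by $(\sigma,\tau^{*})$, and since the vector of reaching probabilities is the \emph{least} non‑negative solution of that system, $v_{\sigma,\tau^{*}} \le v^{*}$. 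Combining the two inequalities gives $v_{\sigma,\tau^{*}} = v^{*} = \inf_{\tau} v_{\sigma,\tau}$, so $\tau^{*}$ is a positional best response.

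It remains to compute $v^{*}$ (hence $\tau^{*}$) in polynomial time by linear programming. When the game is stopping, $T_\sigma$ is a contraction, its unique fixed point is the optimum of the linear program that maximises $\sum_{x} v(x)$ under the relaxed constraints $v(x) \le v(\sigma(x))$ on $V_{\M}$, $v(x) \le v(y)$ for every $y \in N^{+}(x)$ on $V_{\m}$, $v(x) = \sum_{y} p_x(y) v(y)$ on $V_R$ and $v(x) = \val(x)$ on $V_S$, and $\tau^{*}$ is read off greedily from the optimal solution. In general I would first identify, by a simple polynomial‑time graph fixpoint, the set $Z$ of vertices with $v^{*}(x)=0$ --- those from which $\m$ can force never reaching a positive‑value sink --- fix $v$ to $0$ on $Z$, and run the same linear program on $V \setminus Z$, where the fixed point is now unique. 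The main obstacle throughout is precisely the failure of uniqueness in the non‑stopping case: both the selection of the least fixed point and the proof that the induced positional strategy beats \emph{arbitrary}, possibly history‑dependent, counter‑strategies cannot be obtained by a one‑shot substitution into the equations and genuinely require the horizon‑truncation limit used above; this is exactly why the classical arguments first go through the quadratic stopping transformation.
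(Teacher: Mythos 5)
Your proposal is correct, but note that the paper does not actually prove this proposition: it is imported verbatim from Condon~\cite{condon1993algorithms}, so there is no in-paper argument to match yours against. That said, your route is worth commenting on. Condon's original treatment is carried out under the stopping assumption, where the Bellman operator $T_\sigma$ is a contraction and the LP characterisation is immediate; the whole point of the present paper is to dispense with stopping, and your proof does exactly the extra work that this requires. Your three steps --- (a) identifying the best-response value with the \emph{least} fixed point of $T_\sigma$ via the horizon truncation $T_\sigma^k(\mathbf 0)$, (b) extracting a positional $\tau^*$ greedily and using that the reachability vector is the least non-negative solution of the induced affine system, and (c) restoring correctness of the LP by first fixing the zero set --- are all sound; in particular the inequality $T_\sigma^k(\mathbf 0)\le v_{\sigma,\tau}$ does hold against arbitrary history-dependent $\tau$ by backward induction, and a subsolution vanishing on the zero set is dominated by $v^*$ because under $(\sigma,\tau^*)$ the play almost surely ends in a sink or in an absorbing set contained in the zero set. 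Step (c) is precisely the role played by condition $(v)$ (the set $K^G_{\sigma}$) in the paper's Lemma~\ref{lemma:NSOptConditions}, so your argument is the natural ``non-stopping'' proof that the paper implicitly relies on. The only imprecision is the phrase ``the fixed point is now unique'' after removing the zero set: what your LP argument actually establishes (and all that is needed) is that every subsolution vanishing on $Z$ is bounded above by $v^*$, which is itself a feasible point; uniqueness of the fixed point is a slightly stronger statement that you do not prove and do not need.
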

        
         The set of positional best responses to $\sigma$ is denoted by $BR(\sigma)$. Similarly, for a \m strategy $\tau$, we define the notion of best response to $\tau$ and the corresponding set is denoted by $BR(\tau)$. Except explicitly stated otherwise (in Sec. \ref{seq:concat}), all considered strategies are positional.

        We denote by $\tau(\sigma)$ a positional best response to $\sigma$.
        For a \M strategy $\sigma$ and $\tau \in BR(\sigma)$, we write $v_{\sigma}$ for $v_{\sigma,\tau}$. 
        For a \m strategy $\tau$ and $\sigma \in BR(\tau)$, we write $v_{\tau}$ for $v_{\sigma,\tau}$. The vector $v_\sigma$ is called the value vector of strategy $\sigma$, and is used to compare strategies by writing $\sigma' \underset{G}{>} \sigma$ if and only if $v^{G}_{\sigma'} > v^{G}_{\sigma}$. 
  
        
        
        
        It is well known (see \cite{condon1993algorithms,tripathi2011strategy}) that there is a pair of deterministic positional strategies $(\sigma^*,\tau^*)$ called optimal strategies, that satisfies for all $x$,
        $v^* = v_{\sigma^*,\tau^*} = v_{\sigma^*} = v_{\tau^*}$ since $\sigma^*$ and $\tau^*$ are best responses to each other. 



    \subsection{Optimality Conditions}
        
        
        The next two lemmas give characterisations of (optimal) value vectors under a pair of strategies. They are fundamental to all algorithms finding optimal strategies. Proofs of similar results can be found in~\cite{condon1992complexity}; we add here a fifth condition to make the characterisation hold when the game is not stopping.
        
        For any SSG, the vertices with value $0$ under optimal strategies can be found in linear time by a simple graph traversal computing its complementary, the set of \M vertices which can access a sink of positive value, regardless of the choice of the \m player. Let $K^{G}$ be the set of vertices with value$0$ under optimal strategies. For a \M strategy $\sigma$ of $G$ and a \m strategy $\tau$, we call $K^{G}_{\sigma,\tau}$ the set of vertices with value zero under the pair of strategies $\sigma,\tau$, and $K^{G}_{\sigma}$ the set of vertices with value zero under $\sigma,\tau$ when $\tau$ is a best response to $\sigma$. 
        
        \begin{lemma}\label{lemma:NSOptConditions}
            Given positional strategies $(\sigma,\tau)$ and a real $|V|$-dimensional vector $v$, one has
            equality between $v$ and $v_{\sigma,\tau}$ if and only if the following conditions are met:
            
            \begin{romanenumerate}
                \item For $s \in V_S$, $v(s) = \val(s)$
                \item For $r \in V_R$, $v(r) = \sum\limits_{y \in N^+(r)} p_r(y)v(y)$
                \item For $x \in V_{\m}$, $v(x) = v(\tau(x))$
                \item For $x \in V_{\M}$, $v(x) = v(\sigma(x))$
                \item For any $x \in V$, $v(x) = 0$, if and only if $x \in K^{G}_{\sigma,\tau}$
            \end{romanenumerate}
            
            Moreover, $\tau \in BR(\sigma)$ if and only if for any $x$ in $V_{\m}$, $v(x) = \min\limits_{y \in N^{+}(x)} v(y) = v(\tau(x))$ and the last condition is modified into $v(x) = 0$ if and only if $x \in K^{G}_{\sigma}$.
        \end{lemma}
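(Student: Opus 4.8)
The plan is to prove both directions of the characterisation by relating the vector $v$ to the expectation that defines $v_{\sigma,\tau}$. For the forward direction, suppose $v = v_{\sigma,\tau}$. Conditions (i)--(iv) are immediate from the one-step decomposition of the expected value: starting from a sink gives $\val(s)$; starting from a random vertex, conditioning on the first move gives the weighted average over $N^+(r)$; and starting from a \M (resp. \m) vertex $x$, the first move is deterministically to $\sigma(x)$ (resp. $\tau(x)$), so $v_{\sigma,\tau}(x) = v_{\sigma,\tau}(\sigma(x))$ by the Markov property of the induced play distribution. Condition (v) is just the definition of $K^{G}_{\sigma,\tau}$ as the set of vertices with value zero under $(\sigma,\tau)$, so it holds tautologically for $v = v_{\sigma,\tau}$.

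For the converse, suppose $v$ satisfies (i)--(v); I would show $v = v_{\sigma,\tau}$. Fix the pair $(\sigma,\tau)$ and consider the induced Markov chain on $V$ where \M and \m vertices have their unique successor $\sigma(x)$, $\tau(x)$, random vertices follow $p_r(\cdot)$, and sinks loop. Conditions (i)--(iv) say exactly that $v$ is harmonic for this chain off the sinks and equals $\val$ on the sinks, i.e. $v = Pv$ where $P$ is the transition matrix restricted appropriately; equivalently $v_{\sigma,\tau}$ is one solution of this linear system. The system is not uniquely solvable in general (this is precisely the non-stopping issue: on a bottom strongly connected component of random/player vertices that never reaches a sink, any constant is harmonic), and this is where condition (v) does the work. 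The argument I would give: let $w = v - v_{\sigma,\tau}$; then $w$ is harmonic everywhere and vanishes on the sinks. Consider the set $A$ where $|w|$ attains its maximum $M$. If $M > 0$, then $A$ contains no sink, and by harmonicity $A$ is closed under the chain's transitions (a standard maximum-principle argument: at a random vertex in $A$, the value is an average of neighbour values all bounded by $M$ in absolute value and with the same sign, forcing all neighbours into $A$; at a player vertex, the unique successor lies in $A$). Hence from any vertex of $A$ the chain stays in $A$ forever and never reaches a sink, so $v_{\sigma,\tau} = 0$ on $A$, giving $|w| = |v| = M > 0$ on $A$, i.e. these vertices have nonzero value $v$ but are in $K^{G}_{\sigma,\tau}$ — contradicting condition (v) read in the direction ``$x \in K^{G}_{\sigma,\tau} \implies v(x) = 0$''. (The reverse implication ``$v(x)=0 \implies x \in K^{G}_{\sigma,\tau}$'' is needed to pin down that $v$ and $v_{\sigma,\tau}$ have the same zero set, but the maximum-principle step already forces $w \equiv 0$.) Therefore $M = 0$ and $v = v_{\sigma,\tau}$.

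For the ``Moreover'' part, I would combine the already-proved equivalence with the classical best-response characterisation. If $\tau \in BR(\sigma)$, then $\tau$ minimises $v_{\sigma,\tau'}(x)$ over \m strategies at every vertex, which forces $v(x) = \min_{y \in N^+(x)} v(y)$ at each \m vertex $x$ (if some neighbour had strictly smaller value, switching $\tau$ there would decrease the value vector, contradicting optimality of the response — this uses a local improvement lemma of the standard kind, or can be derived directly from the optimality equations for the best response), and conversely if the min-condition holds at every \m vertex then no local switch improves, and by the usual argument (a \m strategy with no profitable single-vertex deviation against a fixed $\sigma$ is a global best response) $\tau \in BR(\sigma)$. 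Finally, when $\tau$ is a best response the relevant zero-set is $K^{G}_{\sigma}$ rather than $K^{G}_{\sigma,\tau}$, because a best response in particular routes towards value-zero vertices whenever possible; I would note $K^{G}_{\sigma,\tau} = K^{G}_{\sigma}$ for $\tau \in BR(\sigma)$ and substitute.

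The main obstacle is the converse direction, specifically making the maximum-principle / absorbing-set argument airtight: one must be careful that ``closed under transitions'' for the random vertices genuinely forces all positive-probability neighbours into the maximal set (which needs the neighbour values to have a common sign, obtained by looking separately at where $w = +M$ and where $w = -M$), and that the resulting trapped vertices indeed have $v_{\sigma,\tau} = 0$ — this last point is exactly the content that condition (v) was added to exploit, and it is the crux distinguishing this lemma from the classical stopping-game version.
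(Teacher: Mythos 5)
The paper does not actually prove this lemma: it defers to Condon for the classical conditions and merely remarks that condition (v) is added to handle non-stopping games, so your proposal has to be judged on its own merits. Your treatment of the main equivalence is correct and well targeted: the forward direction is routine, and for the converse the maximum-principle argument on $w = v - v_{\sigma,\tau}$ is exactly the right way to exploit condition (v) — the set $A$ where $|w|$ attains a positive maximum is closed under the induced chain and sink-free, hence contained in $K^{G}_{\sigma,\tau}$, and (v) then forces $v=0$ there while $v_{\sigma,\tau}=0$ there too, contradicting $|w|=M>0$. Your observation that only the implication $x \in K^{G}_{\sigma,\tau} \Rightarrow v(x)=0$ is actually consumed is also accurate.

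The gap is in the converse of the ``Moreover'' part. You write that if the min-condition holds at every $\m$ vertex then ``by the usual argument (a $\m$ strategy with no profitable single-vertex deviation against a fixed $\sigma$ is a global best response) $\tau \in BR(\sigma)$.'' That usual argument is precisely the one that fails without the stopping assumption, and its failure is the reason the modified condition (v) appears in the statement. Concretely: take a single $\m$ vertex $x$ with a self-loop and an arc to the $1$-sink, and $\tau(x)$ the arc to the sink. Then $v_{\sigma,\tau}(x)=1=\min\{v(x),1\}=v(\tau(x))$, so the min-condition holds at $x$, yet the best response is the self-loop, which gives value $0$. Your proof never invokes the modified condition $v(x)=0 \Leftrightarrow x \in K^{G}_{\sigma}$ in this direction — you only use it as a cosmetic substitution of zero-sets in the forward direction — so as written the converse is false. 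The repair is available with the very tool you already deployed: set $u=v_{\sigma,\tau^*}$ for some $\tau^*\in BR(\sigma)$, let $w=v-u\ge 0$ and $A$ the set where $w$ attains its maximum $M>0$; the min-condition gives $v(x)\le v(\tau^*(x))$ at $\m$ vertices, so $A$ is again closed under the chain induced by $(\sigma,\tau^*)$ and sink-free, hence $A\subseteq K^{G}_{\sigma}$, and the \emph{modified} condition (v) forces $v=0$ on $A$, whence $w=0$ there — a contradiction that yields $v=v_{\sigma}$ and $\tau\in BR(\sigma)$. You should also be explicit that the forward implication ``$\tau\in BR(\sigma)$ implies the min-condition'' rests on an anti-switch argument for the $\m$ player, which itself needs a word of justification in the non-stopping setting rather than a bare appeal to ``a local improvement lemma of the standard kind.''
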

        
        
        \begin{lemma}[Optimality conditions]\label{lemma:optimality_condition}
           Given positional strategies $(\sigma,\tau)$ and denoting $v = v_{\sigma,\tau}$, $(\sigma,\tau)$ are optimal strategies if and only if:
            \begin{romanenumerate}
                \item For $s \in V_S$, $v(s) = \val(s)$
                \item For $r \in V_R$, $v(r) = \sum\limits_{y \in N^+(r)} p_r(y)v(y)$
                \item For $x \in V_{\m}$, $v(x) = \min\limits_{y \in N^{+}(x)} v(y)$
                \item For $x \in V_{\M}$, $v(x) = \max\limits_{y \in N^{+}(x)} v(y)$
                \item For any $x \in V$, $v(x) = 0$, if and only if $x \in K^{G}$
            \end{romanenumerate}
        \end{lemma}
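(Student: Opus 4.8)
My plan is to deduce the statement from Lemma~\ref{lemma:NSOptConditions} and from the equivalence "$(\sigma,\tau)$ optimal $\iff$ $\tau\in BR(\sigma)$ and $\sigma\in BR(\tau)$" (the non-obvious direction follows from the remarks preceding the statement: if the strategies are best responses to each other then $v_{\sigma',\tau}\le v_{\sigma,\tau}\le v_{\sigma,\tau'}$ for all $\sigma',\tau'$, which with $v^*=v_{\sigma^*,\tau^*}\le v_{\sigma^*,\tau}$ and $v_{\sigma,\tau^*}\le v^*$ forces $v_{\sigma,\tau}=v^*$). First I would record the free consequences of $v=v_{\sigma,\tau}$: by Lemma~\ref{lemma:NSOptConditions}, since $v$ takes values in $[0,1]$, conditions~(i) and~(ii) already hold, and moreover $v(x)=v(\sigma(x))$ for $x\in V_{\M}$, $v(x)=v(\tau(x))$ for $x\in V_{\m}$, and $\{x:v(x)=0\}=K^G_{\sigma,\tau}$.

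For the ``only if'' direction I would argue as follows: if $(\sigma,\tau)$ is optimal then $v=v_{\sigma,\tau}=v^*$, so~(v) holds by the very definition of $K^G$ as the zero set of the optimal value vector; moreover $\tau\in BR(\sigma)$, so the ``moreover'' clause of Lemma~\ref{lemma:NSOptConditions} upgrades the equality at $\m$ vertices into~(iii), and, symmetrically (exchanging the two players and the two sinks), $\sigma\in BR(\tau)$ gives~(iv).

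The ``if'' direction is the substantial one: assuming (i)--(v), I would prove $\sigma\in BR(\tau)$ and $\tau\in BR(\sigma)$ separately by a martingale argument on the finite Markov chain $(X_t)$ obtained by fixing positional strategies. For $\sigma\in BR(\tau)$: fixing an \M strategy $\sigma'$ and a start vertex $x_0$, the bounded process $v(X_t)$ under $(\sigma',\tau)$ is a supermartingale by~(ii),~(iv) and $v(x)=v(\tau(x))$ for $x\in V_{\m}$; it converges almost surely to a limit $M_\infty$ equal to $\val(X)$ on plays reaching a sink and to a nonnegative constant on the others, hence $v(x_0)\ge\esp{M_\infty}\ge\esp{\val(X)}=v_{\sigma',\tau}(x_0)$, so $v\ge v_\tau\ge v_{\sigma,\tau}=v$, i.e.\ $\sigma\in BR(\tau)$. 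For $\tau\in BR(\sigma)$: fixing a \m strategy $\tau'$ and $x_0$, the process $v(X_t)$ under $(\sigma,\tau')$ is instead a bounded submartingale by~(ii),~(iii) and $v(x)=v(\sigma(x))$ for $x\in V_{\M}$ (it is constant at $\M$ vertices and sinks, a martingale at random vertices, and satisfies $v(X_{t+1})=v(\tau'(X_t))\ge\min_{y\in N^{+}(X_t)}v(y)=v(X_t)$ at $\m$ vertices), so $v(x_0)\le\esp{M_\infty}$; on plays reaching a sink $M_\infty=\val(X)$, and on plays that never reach a sink the chain is absorbed in an absorbing set $C$ (a bottom strongly connected component) containing no sink, on which, propagating a vertex of maximal $v$-value through~(ii),~(iii) and $v(x)=v(\sigma(x))$, the vector $v$ is constant. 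Condition~(v) then forces this constant to be $0$, hence $M_\infty\le\val(X)$ everywhere and $v(x_0)\le v_{\sigma,\tau'}(x_0)$, so $v\le v_\sigma\le v_{\sigma,\tau}=v$, i.e.\ $\tau\in BR(\sigma)$. Together this makes $(\sigma,\tau)$ optimal.

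The hard part will be the single step ``condition~(v) forces the constant of $v$ on $C$ to be $0$''. This is exactly where the non-stopping hypothesis matters: in a stopping game the chain hits a sink with probability one, so no such $C$ exists and a soft contraction/fixed-point argument suffices, whereas in general one must show that an absorbing set on which $v$ is a positive constant is contained in $K^G$, contradicting~(v). I expect this to need the combinatorial description of $K^G$ recalled before the lemma (the vertices from which \m can keep the play away from the sink of value~$1$) together with an analysis of absorbing sets, and it is precisely why the fifth condition is introduced; everything else is routine bookkeeping around Lemma~\ref{lemma:NSOptConditions}.
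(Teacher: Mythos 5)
The paper itself gives no proof of this lemma (it only points to Condon's work for ``similar results''), so there is no official argument to compare yours against; your ``only if'' direction and the supermartingale half of your ``if'' direction (proving $\sigma\in BR(\tau)$ from (ii), (iv) and $v(x)=v(\tau(x))$) are sound. The gap is exactly at the step you flagged, and it cannot be repaired: the claim that condition (v) forces the constant value of a sink-free bottom component $C$ of the chain $(\sigma,\tau')$ to be $0$ --- equivalently, that such a $C$ is contained in $K^G$ --- is false. $K^G$ is the set of vertices whose \emph{optimal} value is $0$; a set that is absorbing under the particular pair $(\sigma,\tau')$ may consist entirely of vertices from which \M could reach a positive sink under a \emph{different} \M strategy, hence of vertices outside $K^G$ on which $v=v_{\sigma,\tau}$ is positive, and condition (v) sees nothing wrong.

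Concretely, take $V_{\M}=\{a\}$, $V_{\m}=\{m\}$, a single sink $s$ with $\val(s)=1$, arcs $a\to m$, $a\to s$, $m\to a$, $m\to s$, and set $\sigma(a)=m$, $\tau(m)=s$. Then $v=v_{\sigma,\tau}\equiv 1$ and conditions (i)--(v) all hold ($K^G=\emptyset$ and $v$ vanishes nowhere), yet for $\tau'(m)=a$ the play under $(\sigma,\tau')$ loops forever in $C=\{a,m\}$, so $v_{\sigma,\tau'}(m)=0<1=v_{\sigma,\tau}(m)$: here $\tau\notin BR(\sigma)$, $v_\sigma\equiv 0\neq v^*$, and $(\sigma,\tau)$ is not an optimal pair even though $v=v^*$. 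So the ``if'' direction is false as literally stated, and your attempt fails at the identified step because there is nothing true to prove there. The statement does hold under the additional hypothesis $\tau\in BR(\sigma)$ --- which is satisfied everywhere the lemma is invoked in the paper, since the algorithm always pairs $\sigma$ with $\tau(\sigma)$ --- and in that case your supermartingale half already closes the argument with no absorbing-set analysis at all: it yields $v\ge v_\tau\ge v^*\ge v_\sigma=v$, hence equality throughout and optimality of both strategies.
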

        
        The conditions of Lemma~\ref{lemma:optimality_condition} imply that $(\sigma,\tau)$ is a certificate of optimality that can be checked in polynomial time: compute $v_{\sigma,\tau}$ by solving the linear system of Lemma~\ref{lemma:NSOptConditions}, compute  $K^{G}$ in linear time, then check in linear time if conditions are met.
    
\section{Generic Strategy Improvement Algorithm}\label{sec:GIS}

    \subsection{Game Transformation}
    
        We present a simple transformation of an SSG, where some arcs of the game are rerouted to new sinks with appropriate values.

        \begin{definition}
        
            Let $G$ be an SSG, $A$ be a subset of the arcs of $G$ and $f$ be a function from $A$ to the set of rational numbers. Let $G[A,f]$ be the SSG obtained from a copy of $G$ with the following modifications:  each arc $e = (x,y) \in A$ is removed and replaced in $G[A,f]$ by $e'= (x,s_e)$
            where $s_e$ is a new sink vertex with value $f(e)$.
            These new sinks of $G[A,f]$  are called $A$-sinks, and $A$ is called the set of {\it fixed arcs}.
        \end{definition}

        \begin{figure}
\centering

    \begin{tikzpicture}
    
        \node[draw, circle] (1a) at (0,1) {$x_1$};
		\node[draw, circle] (2a) at (0,-1) {$x_2$};
		\node[draw, circle] (3a) at (1.5,0) {$x_3$};
		\node[draw, circle] (5a) at (3,1) {$x_4$};
		\node[draw, circle] (6a) at (3,-1) {$x_5$};
		\node (tr) at (4.5,0) {$\longrightarrow$};
		\node[draw, circle] (1b) at (6,1) {$x_1$};
		\node[draw, circle] (2b) at (6,-1) {$x_2$};
		\node[draw, circle] (3b) at (7.5,0.5) {$x_3$};
		\node[draw, circle, fill=gray!50] (4b) at (7.5,-0.5) {$0.3$};
		\node[draw, circle] (5b) at (9,1.5) {$x_4$};
		\node[draw, circle] (6b) at (9,-0.5) {$x_5$};
		
		\draw[->,>=latex] (1a)--(3a);
		\draw[->,>=latex, dashed] (2a)--(3a);
		\draw[->,>=latex] (3a)--(5a);
		\draw[->,>=latex] (3a)--(6a);
		\draw[->,>=latex] (2a) to[bend right = 30] (6a);
		\draw[->,>=latex] (1b)--(3b);
		\draw[->,>=latex] (2b)--(4b);
		\draw[->,>=latex] (3b)--(5b);
		\draw[->,>=latex] (3b)--(6b);
		\draw[->,>=latex] (2b) to[bend right = 30] (6b);
    
    \end{tikzpicture}

\caption{Transformation of the graph $G$ in $G[\{(x_2,x_3)\},f]$ where $f((x_2,x_3))=0.3$}
\label{figure:definition}
\end{figure}
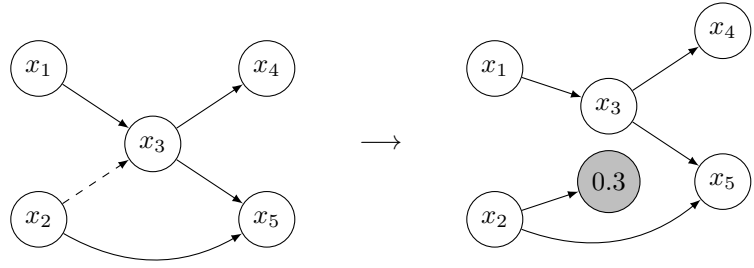
        
        Note that in the previous definition, the end vertex $y$
        of an arc $(x,y) \in A$ is not removed from the game. Its incoming
        arcs which are in $A$ are simply redirected to sinks, see Fig.~\ref{figure:definition}.
        
        The function $f$ is usually given by the values of a a strategy: we denote by $G[A,\sigma]$  the game $G[A,f]$, where $f$ is defined on every arc $e = (x,y)$ of $A$ by $f(e)=v_{\sigma}(y)$.
        Comparing $G$ and $G[A,\sigma]$, the only differences are that arcs of $A$ have their endpoints changed to new sinks. Therefore, a strategy defined in $G$ can be interpreted as a strategy of $G[A,\sigma]$ and vice versa, and we identify strategies in $G$ and $G[A,\sigma]$. However, when we compare 
        the values of a strategy in both games (as in Lemma \ref{NonChangement} below), it makes sense to compare
        only the values on vertices in $G$ and not on $A$-sinks (and anyway values of $A$-sinks are fixed).
        

        \begin{lemma}
        
            \label{lemma:ZeroMeme}
            
            For an SSG $G$, a subset of arcs $A$, and a \M strategy $\sigma$, $K^{G}_{\sigma} = K^{G[A,\sigma]}_{\sigma}$.
        
        \end{lemma}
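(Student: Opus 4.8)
The plan is to replace the definition of $K^{G}_{\sigma}$ by a purely combinatorial one that is manifestly stable under the transformation $G\mapsto G[A,\sigma]$. For an arbitrary SSG $H$ and $\M$ strategy $\mu$, I would first prove that $K^{H}_{\mu}$ is the largest set $S$ of vertices of $H$ such that: (a) $S$ contains no sink of positive value; (b) $\mu(x)\in S$ for every $x\in S\cap V_{\M}$; (c) $N^{+}(x)\subseteq S$ for every $x\in S\cap V_{R}$; (d) $N^{+}(x)\cap S\neq\emptyset$ for every $x\in S\cap V_{\m}$. That $K^{H}_{\mu}$ itself satisfies (a)--(d) is read off Lemma~\ref{lemma:NSOptConditions}: writing $v=v^{H}_{\mu}=v_{\mu,\nu}$ for a best response $\nu$ to $\mu$, a positive sink has $v>0$; at an $\M$ vertex $v(x)=v(\mu(x))$; at a random vertex $v(x)=\sum_{y}p_{x}(y)v(y)$ with all $p_{x}(y)>0$ and $v\geq 0$; and at a $\m$ vertex $v(x)=\min_{y}v(y)$. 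Conversely, given any $S$ satisfying (a)--(d), let $\m$ play, from each vertex of $S\cap V_{\m}$, towards a fixed out-neighbour inside $S$; then every play started in $S$ stays in $S$, hence never reaches a positive sink, hence has value $0$, so $S\subseteq K^{H}_{\mu}$. Since the family of sets satisfying (a)--(d) is closed under unions, the largest one is well defined and equals $K^{H}_{\mu}$; equivalently, $x\in K^{H}_{\mu}$ iff $x$ lies in some set satisfying (a)--(d).

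With this tool, both inclusions reduce to transporting such a set between $G$ and $G':=G[A,\sigma]$, bearing in mind that the successor $\sigma(x)$ must be read in $G'$ as the $A$-sink $s_{(x,\sigma(x))}$ whenever $(x,\sigma(x))\in A$, and that $\val(s_{(u,w)})=v_{\sigma}(w)$, so that $s_{(u,w)}$ is a sink of value $0$ precisely when $w\in K^{G}_{\sigma}$. For $K^{G}_{\sigma}\subseteq K^{G'}_{\sigma}$ I would check that $S':=K^{G}_{\sigma}\cup\{\,s_{(u,w)}:(u,w)\in A,\ w\in K^{G}_{\sigma}\,\}$ satisfies (a)--(d) in $G'$: properties (b)--(d) of $K^{G}_{\sigma}$ in $G$ ensure that every $G'$-successor of a vertex of $S'$, original vertex or newly inserted $A$-sink alike, again lies in $S'$, while (a) holds because every sink in $S'$ has value $0$; hence $K^{G}_{\sigma}\subseteq S'\subseteq K^{G'}_{\sigma}$. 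Symmetrically, for the reverse inclusion on vertices of $G$ (the $A$-sinks that belong to $K^{G'}_{\sigma}$ are exactly the $s_{(u,w)}$ with $w\in K^{G}_{\sigma}$, since a sink keeps its value under every strategy), I would take $x\in V(G)$ and a set $S\ni x$ satisfying (a)--(d) in $G'$, and check that $S^{\dagger}:=(S\cap V(G))\cup K^{G}_{\sigma}$ satisfies (a)--(d) in $G$: whenever an $\M$ or random vertex of $S\cap V(G)$ has a $G'$-successor equal to some $A$-sink $s_{(u,w)}$, property (a) for $S$ forces $v_{\sigma}(w)=0$, i.e.\ $w\in K^{G}_{\sigma}\subseteq S^{\dagger}$; and any part of the closure not already provided by $S\cap V(G)$ is provided by $K^{G}_{\sigma}$, using that $K^{G}_{\sigma}$ is itself closed in the sense of (b)--(d). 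Thus $x\in S^{\dagger}\subseteq K^{G}_{\sigma}$.

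The step I expect to be the main obstacle is exactly this bookkeeping around the fixed arcs, and chiefly around arcs $(x,\sigma(x))\in A$ that $\sigma$ follows: it is what makes the naive guess ``$K^{G'}_{\sigma}\cap V(G)=K^{G}_{\sigma}$ by restricting or including the same set'' fail, because the restricted set need not be closed and the re-closure pulls in vertices whose membership seemingly depends on values computed in the other game. The observation that unblocks this, and dispels the apparent circularity created by the defining equality $\val(s_{(u,w)})=v_{\sigma}(w)$, is that $K^{G}_{\sigma}$ already enjoys all the closure properties (a)--(d) inside $G$, so it can be adjoined freely to the transported set; everything else is a routine case distinction over the four types of vertex.
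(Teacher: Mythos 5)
Your proof is correct, and it reaches the lemma by a genuinely different (though closely related) route from the paper's. The paper fixes an arbitrary \m strategy $\tau$, introduces the reachability set $R^{G}_{\sigma,\tau}(x)$, and proves directly that $v^{G}_{\sigma,\tau}(x)=0$ if and only if $v^{G[A,\sigma]}_{\sigma,\tau}(x)=0$ for every $\tau$, using the equivalence between having value zero and reaching only zero-value sinks; the lemma follows because $x\in K^{G}_{\sigma}$ exactly when some $\tau$ gives $x$ value zero. You instead first establish a strategy-free characterisation of $K^{H}_{\mu}$ as the largest set avoiding positive sinks and closed under $\mu$ at \M vertices, under all successors at random vertices, and under \emph{some} successor at \m vertices --- internalising the quantification over \m strategies into condition (d) --- and then transport such closed sets between $G$ and $G[A,\sigma]$ via the explicit sets $S'$ and $S^{\dagger}$. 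Both arguments hinge on the same key observation, namely that an $A$-sink $s_{(u,w)}$ has value $v_{\sigma}(w)$ and hence value zero precisely when $w\in K^{G}_{\sigma}$, so that $K^{G}_{\sigma}$, being itself closed, can be adjoined to repair the transported set. What your route buys is the elimination of the per-$\tau$ bookkeeping at the cost of one auxiliary lemma (whose union-closure step, which you do note, is what makes ``the largest such set'' well defined); it also matches the paper's earlier remark that zero sets are computable by a simple graph traversal. I find no gap: the case analysis over the four vertex types and over whether the relevant arc lies in $A$ goes through exactly as you describe.
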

        
        \begin{proof}
            Fix a min strategy $\tau$ and
            define $R_{\sigma,\tau}^G(x)$  
            as the set of vertices that can be reached 
            from $x$ in $G$, following only arcs corresponding
            to $\sigma$ and $\tau$ after $\M$ and $\m$ vertices, and any arc out of random vertices.
            We repeatedly use the easy fact that the three following assertions
            are equivalent:
            \begin{enumerate}[(i)]
            \item $v^G_{\sigma,\tau}(x) = 0$;
            \item $v^G_{\sigma,\tau}(y) = 0$ for all $y \in R^G_{\sigma,\tau}(x)$;
            \item $\val^G(s)=0$ for all $s \in V^G_S \cap R^G_{\sigma,\tau}(x)$.
            \end{enumerate}
            The same equivalence is true in $G[A,\sigma]$, where we define $R^{G[A,\sigma]}_{\sigma,\tau}$ likewise.
            Denote by $R^G_A(x)$ vertices 
            of $R^G_{\sigma,\tau}(x)$ that are endpoints of
            arcs in $A$, and let $S_A(x)$ be the corresponding
            $A$-sinks in $G[A,\sigma]$.
    
            Suppose that $v^{G}_{\sigma,\tau}(x) = 0$ and 
            consider a sink $s$ in $V_S^{G[A,\sigma]} \cap R^{G[A,\sigma]}(x)$:
            either it belongs to $V^G_S$ hence also
            to $R^{G}(x)$ and satisfies $\val^G(s)=0$
            by $(iii)$, or it belongs to $S_A(x)$ and
            then by definition
            $$\val^{G[A,\sigma]}(s) = v^G_\sigma(s) \leq v^G_{\sigma,\tau}(s) = 0.$$
            Thus, by $(iii)$ once again we have $v^{G[A,\sigma]}_{\sigma,\tau}(x)=0$.
            
            Conversely, suppose that $v^{G[A,\sigma]}_{\sigma,\tau}(x) = 0$ and let $s \in V^G_S \cap R^G_{\sigma,\tau}(x)$. 
            Then, either
            $s \in R^{G[A,\sigma]}_{\sigma,\tau}$, hence by $(iii)$
            $$\val^G(s) = \val^{G[A,\sigma]}(s) = 0,$$
            or there is a $y \in R_A^G(x)$ such
            that $s \in R^G_{\sigma,\tau}(y)$. In this case we have  $v^{G}_{\sigma,\tau}(y)=0$ by $(ii)$,
            hence $\val^G(s)=0$ by $(iii)$ applied to $y$,
            and we see that $v^{G}_{\sigma,\tau}(x) = 0$.
            
            Since we have $v^{G}_{\sigma,\tau}(x) = 0$ if and only if 
            $v^{G[A,\sigma]}_{\sigma,\tau}(x) = 0$, regardless of $\tau$, the result follows.
        \end{proof}
        
        \begin{lemma}
        
            \label{NonChangement}
        
            For an SSG $G$, a subset of arcs $A$, and a \M strategy $\sigma$, $v^{G}_{\sigma} = v^{G[A,\sigma]}_{\sigma}$.
        
        \end{lemma}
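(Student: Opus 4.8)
The plan is to compare the two value vectors on $V(G)$ by transporting a best response from one game to the other. Fix $\tau \in BR^G(\sigma)$ and set $v := v^G_{\sigma,\tau} = v^G_\sigma$; recall that, by definition of $G[A,\sigma]$, each $A$-sink $s_e$ with $e=(x,y)\in A$ has value $f(e)=v_\sigma(y)=v(y)$. Extend $v$ to a vector $\tilde v$ on $V(G[A,\sigma]) = V(G) \cup \{A\text{-sinks}\}$ by $\tilde v(x)=v(x)$ for $x\in V(G)$ and $\tilde v(s_e)=f(e)$, and define a positional $\m$ strategy $\tau'$ on $G[A,\sigma]$ by $\tau'(x)=s_{(x,\tau(x))}$ if $(x,\tau(x))\in A$ and $\tau'(x)=\tau(x)$ otherwise (this is just the reinterpretation of $\tau$ in $G[A,\sigma]$ under the identification of strategies; $\sigma$ is reinterpreted the same way). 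I would then show that $\tilde v$, together with $(\sigma,\tau')$, satisfies the conditions of the ``moreover'' part of Lemma~\ref{lemma:NSOptConditions} in $G[A,\sigma]$. Since $\sigma$ admits a positional best response in $G[A,\sigma]$ (by the Proposition of~\cite{condon1993algorithms}), $v^{G[A,\sigma]}_\sigma$ is well defined, and this verification yields simultaneously $\tilde v = v^{G[A,\sigma]}_{\sigma,\tau'}$ and $\tau'\in BR^{G[A,\sigma]}(\sigma)$, whence $v^{G[A,\sigma]}_\sigma = \tilde v$; restricting to $V(G)$ gives $v^{G[A,\sigma]}_\sigma = v = v^G_\sigma$, which is the claim.

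Conditions (i)--(iv) are routine. Original sinks keep their values and each new sink $s_e$ has $\tilde v(s_e)=f(e)$ by construction, giving (i). At a random vertex $r$, rerouting an arc $(r,y)$ to $s_{(r,y)}$ transfers the probability $p_r(y)$ to a sink of value $v(y)$, so the barycentre at $r$ is unchanged and (ii) for $v$ in $G$ carries over. At a $\M$ vertex $x$, the move of $\sigma$ in $G[A,\sigma]$ lands on $\sigma(x)$ or on $s_{(x,\sigma(x))}$, both of $\tilde v$-value $v(\sigma(x))=v(x)$, giving (iv). At a $\m$ vertex $x$, each out-neighbour $y$ of $x$ in $G$ is replaced in $G[A,\sigma]$ by itself or by a sink of the same $\tilde v$-value $v(y)$; hence $\min_{z\in N^+_{G[A,\sigma]}(x)}\tilde v(z) = \min_{y\in N^+_G(x)}v(y) = v(x)$ because $\tau\in BR^G(\sigma)$, and $\tau'(x)$ attains this minimum, which is condition (iii) in its best-response form.

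The substantive point — and the only place where the fifth, non-stopping condition is genuinely used — is condition (v): $\tilde v(x)=0$ iff $x\in K^{G[A,\sigma]}_\sigma$, where $K^{G[A,\sigma]}_\sigma$ is read as the (well-defined) zero-set of $v^{G[A,\sigma]}_\sigma$. For $x\in V(G)$ this is precisely Lemma~\ref{lemma:ZeroMeme}, which identifies $K^G_\sigma$ with $K^{G[A,\sigma]}_\sigma$ on $V(G)$, combined with $v(x)=0 \iff x\in K^G_\sigma$. For an $A$-sink $s_e$ with $e=(x,y)$, note that $s_e$ is an actual sink of $G[A,\sigma]$, so $v^{G[A,\sigma]}_\sigma(s_e)=f(e)$ and therefore $s_e\in K^{G[A,\sigma]}_\sigma \iff f(e)=0 \iff v(y)=0=\tilde v(s_e)$. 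I expect the only real care needed here is bookkeeping: remembering that Lemma~\ref{lemma:ZeroMeme} is an equality of subsets of $V(G)$ (an $A$-sink can itself carry value $0$), and avoiding the apparent circularity in the definition of $K_\sigma$ by working with the zero-set of the unique value vector $v^{G[A,\sigma]}_\sigma$ rather than with a particular best response.
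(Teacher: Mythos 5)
Your proof is correct and takes essentially the same approach as the paper: the paper's own proof is a one-line appeal to Lemma~\ref{lemma:NSOptConditions} together with Lemma~\ref{lemma:ZeroMeme}, and your argument is precisely the detailed verification (conditions (i)--(iv) plus the zero-set condition (v) via Lemma~\ref{lemma:ZeroMeme}) that this appeal implicitly requires.
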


        \begin{proof}
            This is a direct consequence of Lemma~\ref{lemma:NSOptConditions} and Lemma~\ref{lemma:ZeroMeme},
            since the vector $v^{G}_{\sigma}$ satisfies the best-response conditions in $G[A,\sigma]$ and vice versa.
        \end{proof}

    \subsection{The Algorithm}
    
        An SSG is \textbf{stopping} if under every pair of strategies, a play eventually reaches a sink with probability $1$. Most algorithms in the literature depend on the game being stopping.
        It is usually not seen as a limitation since it is possible to transform every SSG into a stopping SSG, but the transformation makes the game polynomially larger by adding $O(nr)$ random vertices, which is bad from a complexity point of view, especially for algorithm with parametrized complexity in the number of random vertices.
        We strengthen the classical order on strategies, to get rid of the stopping condition in the generic strategy improvement algorithm presented in this section.

        \begin{definition}
            Let $\sigma$ and $\sigma'$ be two \M strategies, then $\sigma' \underset{G}{\succ} \sigma$ if $\sigma' \underset{G}{>} \sigma$ and for every \M vertex $x$, if $v^{G}_{\sigma'}(x) = v^{G}_{\sigma}(x)$,  then $\sigma'(x) = \sigma(x)$.
        \end{definition}
        
     Algorithm~\ref{MA} is a classical strategy improvement algorithm with two twists: 
     the improvement is for \emph{the stricter order $\succ$} and it is guaranteed in  \emph{the transformed game} rather than in the original game. We call Algorithm~\ref{MA} the Generic Strategy Improvement Algorithm, or~GSIA.

        \begin{algorithm}
        	\caption{GSIA\label{MA}}
        	\DontPrintSemicolon
        	\KwData{$G$ a stopping SSG}
        	\KwResult{$(\sigma,\tau)$ a pair of optimal strategies}
        	\Begin{
        	    select an initial \M strategy $\sigma$\;
        	    \While{$(\sigma,\tau(\sigma))$ are not optimal strategies of $G$}{
        	        choose a subset $A$ of arcs of $G$\;
        	        find $\sigma'$ such that $\sigma' \underset{G[A,\sigma]}{\succ} \sigma$.\;
        	        $\sigma \longleftarrow \sigma'$\;
        	    }
        	    
        	    \KwRet{$(\sigma,\tau(\sigma))$}
        	}
        \end{algorithm}
        
        Algorithm~\ref{MA} is a generic algorithm (or meta-algorithm) because neither the selection of an initial strategy $\sigma$ at line $2$, nor the way of choosing $A$ at line $4$, nor the way of finding $\sigma'$ at line~$5$, are specified. A choice of implementation for these three parts is an \textbf{instance} of GSIA, that is a concrete strategy improvement algorithm.
        Note that if $\sigma' \underset{G[A,\sigma]}{>} \sigma$ is found,
        it is easy to find $\sigma''$ with $\sigma'' \underset{G[A,\sigma]}{\succ} \sigma$: define $\sigma''$ as equal to $\sigma'$, except for $\M$ vertices $x$ such that $v^{G}_{\sigma'}(x) = v^{G}_{\sigma}(x)$ and $\sigma'(x) \neq \sigma(x)$ where $\sigma''(x)$ is defined as $\sigma(x)$.
        
        When we prove some property of GSIA in this article, it means that the property is true for all instances of GSIA, that is regardless of the selection of the initial strategy, the set $A$ and the method for selecting  $\sigma'$. 
    
        In order to prove the correctness of GSIA, we need to prove two points:
        \begin{enumerate}
            \item If $\sigma$ is not optimal in $G$, then $\sigma$ is not optimal in $G[A,\sigma]$.
            \item If $\sigma' \underset{G[A,\sigma]}{\succ} \sigma$ then $\sigma' \underset{G}{>} \sigma$. 
        \end{enumerate}

        The first point is proved in the following lemma, while the second one is harder to obtain and is the subject of the next two subsections.
        
        \begin{lemma}\label{lemma:optimality_transfer}
        
            For an SSG $G$ and a subset of arcs $A$, a \M strategy $\sigma$ is optimal in $G$ if and only if it is optimal in $G[A,\sigma]$.
        
        \end{lemma}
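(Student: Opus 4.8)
The plan is to prove both directions of the equivalence by unwinding what ``optimality'' means via Lemma~\ref{lemma:optimality_condition}, and to reduce everything to the two structural facts already established: Lemma~\ref{NonChangement}, which says $v^{G}_{\sigma} = v^{G[A,\sigma]}_{\sigma}$ on the vertices of $G$, and Lemma~\ref{lemma:ZeroMeme}, which says $K^{G}_{\sigma} = K^{G[A,\sigma]}_{\sigma}$. The key observation is that $\sigma$ being optimal in a game is, by Lemma~\ref{lemma:NSOptConditions} and the discussion around best responses, equivalent to $v_{\sigma}$ satisfying the full optimality conditions (iii')--(iv) of Lemma~\ref{lemma:optimality_condition}, i.e. that no local improvement is available to \M at any \M vertex: for all $x \in V_{\M}$, $v_\sigma(x) = \max_{y \in N^+(x)} v_\sigma(y)$ (the \m-side conditions and the random/sink conditions hold automatically for $v_\sigma$ since $\tau(\sigma)$ is a best response).

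First I would set $v = v^{G}_{\sigma} = v^{G[A,\sigma]}_{\sigma}$ (equal as vectors on $V(G)$ by Lemma~\ref{NonChangement}) and note that the $A$-sinks of $G[A,\sigma]$ have value $f(e) = v_\sigma(y) = v(y)$ by construction, so $v$ together with these fixed sink values is exactly the value vector in $G[A,\sigma]$. Now suppose $\sigma$ is optimal in $G$. Then for every \M vertex $x$ of $G$, $v(x) = \max_{y \in N^+_G(x)} v(y)$. In $G[A,\sigma]$ the outneighbourhood of $x$ is obtained from $N^+_G(x)$ by replacing each $y$ with $(x,y) \in A$ by an $A$-sink of value $v(y)$; since the multiset of values $\{v(y) : y \in N^+_G(x)\}$ is unchanged, the max is the same, so $v(x) = \max_{y \in N^+_{G[A,\sigma]}(x)} v(y)$. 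The random-vertex condition is unaffected because arcs out of random vertices are never in $A$ in a way that matters — actually $A$ may contain arcs out of random vertices, but wait: rerouting an arc $(r,y)$ with $r$ random changes the game; however the value condition at $r$ in $G[A,\sigma]$ reads $v(r) = \sum p_r(z) v(z')$ where each $z'$ is either the original $z$ or the $A$-sink of value $v(z)$, so again the sum is unchanged. The \m-vertex conditions transfer by the same value-preservation argument, and $K^{G[A,\sigma]} = K^{G[A,\sigma]}_{\sigma} = K^{G}_{\sigma} = K^{G}$ by Lemma~\ref{lemma:ZeroMeme} combined with optimality of $\sigma$ in $G$ (which gives $K^G_\sigma = K^G$). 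Hence all five conditions of Lemma~\ref{lemma:optimality_condition} hold in $G[A,\sigma]$, so $\sigma$ is optimal there. The converse direction is completely symmetric: the value-preservation and $K$-preservation are stated as equalities, so reading the same argument backwards (from $G[A,\sigma]$ to $G$) shows that if the optimality conditions hold in $G[A,\sigma]$ they hold in $G$.

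The main subtlety to watch for — and the step I would be most careful about — is the handling of the ``$v(x)=0 \iff x \in K$'' condition (condition (v)), since this is the extra fifth condition the authors added specifically to cope with non-stopping games, and it is the one place where the argument is not a triviality about max/sum of unchanged value multisets. Here the whole point is that Lemma~\ref{lemma:ZeroMeme} was proved precisely to guarantee $K^{G}_{\sigma} = K^{G[A,\sigma]}_{\sigma}$, and one must also invoke the fact that when $\sigma$ is optimal, $K^G_\sigma$ coincides with $K^G$ (and similarly in the transformed game), which follows because an optimal $\sigma$ together with its best response realises the optimal value vector. So the proof is essentially a bookkeeping exercise assembling Lemma~\ref{NonChangement} and Lemma~\ref{lemma:ZeroMeme}, plus the remark that the $A$-sink values were chosen to be exactly $v_\sigma(y)$ so that the local optimality inequalities at each vertex are literally preserved; I would write it as: ``By Lemma~\ref{NonChangement} and Lemma~\ref{lemma:ZeroMeme}, $v^G_\sigma$ and $v^{G[A,\sigma]}_\sigma$ agree on $V(G)$ and have the same zero set; since the $A$-sinks have value $v^G_\sigma(y)$, the optimality conditions of Lemma~\ref{lemma:optimality_condition} at any \M, \m or random vertex of $G$ hold in $G$ if and only if they hold in $G[A,\sigma]$, and condition (v) transfers by Lemma~\ref{lemma:ZeroMeme}; hence $\sigma$ is optimal in $G$ iff it is optimal in $G[A,\sigma]$.''
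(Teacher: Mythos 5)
Your proof is correct and takes essentially the same route as the paper's: it combines Lemma~\ref{NonChangement} (equality of the value vectors of $\sigma$ off the $A$-sinks) with Lemma~\ref{lemma:ZeroMeme} (equality of the zero sets $K^{G}_{\sigma}=K^{G[A,\sigma]}_{\sigma}$) and observes that the conditions of Lemma~\ref{lemma:optimality_condition} then transfer vertex by vertex because each $A$-sink carries exactly the value $v_\sigma(y)$ of the arc it replaces. The paper compresses this into two sentences; your version just spells out the local max/min/sum bookkeeping and the condition~(v) discussion explicitly.
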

        
        \begin{proof}
        
            Except on $A$-sinks, the value vectors of $\sigma$ in $G$ and $G[A,\sigma]$ are equal by Lemma~\ref{NonChangement}. Furthermore, by Lemma~\ref{lemma:ZeroMeme},
            $K^{G}_{\sigma} = K^{G[A,\sigma]}_{\sigma}$;
            hence $\sigma$ satisfies the optimality conditions of Lemma~\ref{lemma:optimality_condition} in $G$ if and only if it satisfies them in $G[A,\sigma]$.
        \end{proof}
        
    \subsection{Concatenation of Strategies} \label{seq:concat}
        
        As a tool for proving the correctness of Algorithm~\ref{MA}, we introduce the notion of concatenation of strategies which produces non-positional strategies even if both concatenated strategies are positional. The idea of using a sequence of concatenated strategies to interpolate between two strategies has been introduced in~\cite{lmcs1119}. 
        
        \begin{definition}
        
            For two \M  strategies $\sigma$, $\sigma'$ and a subset of arcs $A$, we call $\sigma'|_{A}\sigma$ the non-positional strategy that plays like $\sigma'$ until an arc of $A$ is crossed, and then plays like $\sigma$ until the end of the game. We let $\sigma'|^{0}_{A}\sigma = \sigma$ and for all $i \geq 0$, $\sigma'|^{i+1}_{A}\sigma = \sigma'|_{A}(\sigma'|^{i}_{A}\sigma)$. 
        \end{definition}
        
        When $A$ is clear from the context, we omit it and write $\sigma'|^{i}\sigma$. 
        Strategy $\sigma'|^{i}_{A}\sigma$ is the strategy that plays like $\sigma'$ until $i$ arcs from $A$ have been crossed, and then plays like $\sigma$. Hence, we can relate the strategy $\sigma'|_{A}\sigma$
        to a positional strategy in $G[A,\sigma]$ as shown in the next lemma.
        
        \begin{lemma}
        
            \label{EqTransfo}
        
            For two \M  strategies $\sigma$, $\sigma'$ and a subset of arcs $A$, we have: $v^{G}_{\sigma'|_A \sigma} = v^{G[A,\sigma]}_{\sigma'}$
        
        \end{lemma}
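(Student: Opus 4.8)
The plan is to compare the two value vectors by showing that each satisfies the defining equations of the other, using the characterisation of value vectors for a (possibly non-positional) pair of strategies. The subtlety is that $\sigma'|_A\sigma$ is non-positional, so I cannot invoke Lemma~\ref{lemma:NSOptConditions} directly on it; instead I will use the underlying probabilistic definition $v^G_{\sigma'|_A\sigma,\tau}(x_0) = \esps{x_0}{\sigma'|_A\sigma,\tau}{\val(X)}$ and a coupling argument between plays in $G$ under $\sigma'|_A\sigma$ and plays in $G[A,\sigma]$ under $\sigma'$.

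First I would fix an arbitrary positional $\m$ strategy $\tau$ and show $v^G_{\sigma'|_A\sigma,\tau} = v^{G[A,\sigma]}_{\sigma',\tau}$ on the vertices of $G$. The key observation is that a play in $G$ following $(\sigma'|_A\sigma,\tau)$ behaves exactly like a play in $G[A,\sigma]$ following $(\sigma',\tau)$ up to the first moment an arc of $A$ is crossed: the two games are identical except that the arcs of $A$ are redirected to $A$-sinks. So couple the two plays step by step; before any arc of $A$ is used they are literally the same sequence of vertices, with the same transition probabilities at random vertices (random vertices and their distributions are untouched by the transformation). When the play in $G$ first traverses an arc $e=(x,y)\in A$ — arriving at $y$ — the coupled play in $G[A,\sigma]$ instead moves to the sink $s_e$ of value $f(e)=v_\sigma(y)$ and stays there, contributing $v_\sigma(y)$ to the expectation. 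On the $G$ side, from $y$ onwards the strategy $\sigma'|_A\sigma$ plays like $\sigma$ (and $\tau$ is a best response to $\sigma$ in the relevant sense, or more simply we just take the same $\tau$), so by the tower property the conditional expectation of $\val(X)$ given that the play reaches $y$ at that step is exactly $v^G_{\sigma,\tau}(y)$. Choosing $\tau$ appropriately — namely a best response to $\sigma$ in $G$ — makes this conditional contribution equal $v^G_\sigma(y) = f(e)$, matching the $A$-sink value. Summing over the (disjoint) events "first $A$-arc crossed is $e$ at time $t$" and the event "no $A$-arc ever crossed", and comparing term by term with the corresponding decomposition of $v^{G[A,\sigma]}_{\sigma',\tau}(x_0)$, gives equality of the two value vectors on $V(G)$.

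The remaining point is to handle best responses correctly so that the statement $v^G_{\sigma'|_A\sigma} = v^{G[A,\sigma]}_{\sigma'}$ holds with the $v_\sigma$-notation (best response on each side). Here I would argue that the set of optimal $\m$ responses is essentially the same on both sides: since before crossing an arc of $A$ the two games coincide, and after crossing such an arc the play in $G$ is governed by the fixed strategy $\sigma$ whose continuation value at $y$ is precisely the $A$-sink value $f(e)=v_\sigma(y)$, a $\m$ strategy minimising the expected value in $G[A,\sigma]$ against $\sigma'$ can be extended (on the part of the game reachable only after an $A$-arc, where it must simply play a best response to $\sigma$) to a $\m$ strategy that is a best response to $\sigma'|_A\sigma$ in $G$ achieving the same value, and conversely. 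Thus the best-response values coincide. The main obstacle I anticipate is precisely this bookkeeping around best responses and the non-positionality of $\sigma'|_A\sigma$: one must be careful that "plays like $\sigma$ after an $A$-arc" together with a suitable $\tau$ really does reproduce $v_\sigma$ as the continuation value, rather than some other value of $\sigma$ against a non-optimal response; invoking Lemma~\ref{NonChangement} (which already identifies $v^G_\sigma$ with $v^{G[A,\sigma]}_\sigma$) and Lemma~\ref{lemma:ZeroMeme} should let me pin down the zero-set condition and close this gap cleanly.
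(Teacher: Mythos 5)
Your proposal is correct and follows essentially the same route as the paper: couple the plays step by step up to the first $A$-arc, identify the continuation value in $G$ after that arc with the $A$-sink value $v_\sigma(y)$ by having \m switch to a best response to $\sigma$ from that point on, and then match best responses $\tau'$ in $G[A,\sigma]$ with concatenated responses $\tau'|_A\tau(\sigma)$ in $G$. The only caveat is your opening claim that $v^{G}_{\sigma'|_A\sigma,\tau} = v^{G[A,\sigma]}_{\sigma',\tau}$ for an \emph{arbitrary} fixed positional $\tau$, which is not literally true (the continuation after an $A$-arc must be played by $\tau(\sigma)$, not by $\tau$, for the contribution to equal $f(e)=v_\sigma(y)$); you correct this yourself later, and the paper avoids the issue by stating the identity directly for the concatenated response $\tau|_A\tau(\sigma)$.
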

        
        \begin{proof}
    In $G$, after crossing an arc from $A$, by definition of $\sigma'|_A \sigma$, \M plays according to $\sigma$. The game being memoryless, from this point, the best response for \m  is to play like $\tau(\sigma) \in BR(\sigma)$. Thus, there is a best response to $\sigma'|\sigma$ of the form $\tau'|\tau(\sigma)$ with $\tau'$ a \m strategy not necessarily positional. Let us consider a play following $(\sigma'|\sigma, \tau|\tau(\sigma))$ with $\tau$ any \m strategy. If the play does not cross an arc of $A$, then there is no difference between this play and a play following $(\sigma',\tau)$ in $G[A,\sigma]$. If an arc of $A$ is used, then by Lemma~\ref{NonChangement} there is no difference between stopping with the value of $G[A,\sigma]$ or continuing in $G$ while following $(\sigma,\tau)$. Thus we have: $v^{G}_{\sigma'|\sigma,\tau|\tau(\sigma)} = v^{G[A,\sigma]}_{\sigma',\tau}.$
            
            Thus, if $\tau'$ is a best response to $\sigma'$ in $G[A,\sigma]$, then $\tau'|\tau(\sigma)$ is a best response to $\sigma'|\sigma$ in $G$. This implies that $v^{G}_{\sigma'|\sigma} = v^{G[A,\sigma]}_{\sigma'}.$
    \end{proof}
        
         We now prove the fact that increasing the values of sinks can only increase the value of the game (a similar lemma is proved in~\cite{auger2014finding}).
    
        \begin{lemma}
        
            \label{IncreaseSink}
            Let $G$ and $G'$ be two identical SSGs except the values of theirs sinks $s \in V_S$,
            denoted respectively by $\val(s)$ and $\val\,'(s)$. If for every $s \in V_S$, $\val\,'(s) \geq \val(s)$, then for every \M  strategy $\sigma$ we have $v^{G'}_{\sigma} \geq v^{G}_{\sigma}$.
        
        \end{lemma}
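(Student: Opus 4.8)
The key observation is that for a \emph{fixed} pair of strategies $(\sigma,\tau)$, the value vector is determined entirely by the graph, the transition probabilities at random vertices, and the sink values, and the first two of these are identical in $G$ and $G'$. Concretely, I would first record that for every starting vertex $x$,
$$v_{\sigma,\tau}(x) = \sum_{s \in V_S} \pros{x}{\sigma,\tau}{X \text{ reaches } s}\,\val(s),$$
since a play that reaches a sink $s$ has value $\val(s)$ and a play that reaches no sink has value $0$, and these events partition the set of plays. The probabilities $\pros{x}{\sigma,\tau}{X \text{ reaches } s}$ depend only on the arcs of $G$ and the distributions $p_r(\cdot)$, which coincide in $G$ and $G'$; hence for every $x$,
$$v^{G'}_{\sigma,\tau}(x) - v^{G}_{\sigma,\tau}(x) = \sum_{s \in V_S} \pros{x}{\sigma,\tau}{X \text{ reaches } s}\,\bigl(\val\,'(s) - \val(s)\bigr) \geq 0,$$
so $v^{G'}_{\sigma,\tau} \geq v^{G}_{\sigma,\tau}$ for \emph{every} \m strategy $\tau$.

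Next I would pass from fixed $\tau$ to best responses. Fix the \M strategy $\sigma$ and let $\tau$ be a best response to $\sigma$ in $G'$, so that $v^{G'}_{\sigma} = v^{G'}_{\sigma,\tau}$. Then, chaining the inequality above with the fact that $v^{G}_{\sigma}$ is pointwise minimal among the vectors $v^{G}_{\sigma,\tau'}$ (by the very definition of best response, $v^{G}_{\sigma} = v^{G}_{\sigma,\tau(\sigma)} \leq v^{G}_{\sigma,\tau'}$ for all $\tau'$), one gets
$$v^{G'}_{\sigma} = v^{G'}_{\sigma,\tau} \;\geq\; v^{G}_{\sigma,\tau} \;\geq\; v^{G}_{\sigma},$$
which is the claim.

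The only delicate point is the decomposition of $v_{\sigma,\tau}(x)$ into a sum over sinks weighted by reaching probabilities: it relies on the convention $\val(X)=0$ for plays reaching no sink, so that such plays contribute nothing to the expectation in both $G$ and $G'$, and it requires noting that ``reaching $s$'' for $s\in V_S$ together with ``reaching no sink'' form a measurable partition of the play space. This is exactly where the argument avoids any stopping hypothesis: we never need the total mass of sink‑reaching plays to be $1$, only that the missing mass is assigned value $0$ identically in both games. Everything else is routine, and the argument works verbatim for positional or general strategies (using the Proposition to obtain a best response $\tau$ in $G'$).
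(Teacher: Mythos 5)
Your proof is correct and follows essentially the same route as the paper: decompose $v_{\sigma,\tau}(x)$ as a sum over sinks of reaching probabilities (which depend only on the common structure of $G$ and $G'$) times sink values, conclude $v^{G'}_{\sigma,\tau}\geq v^{G}_{\sigma,\tau}$ for every $\tau$, and then pass to best responses. The paper compresses the last step into ``this is true for any $\tau$, thus $v^{G'}_{\sigma}\geq v^{G}_{\sigma}$,'' whereas you spell it out via a best response in $G'$; both are the same argument.
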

        
        \begin{proof}
    
    For $s \in V_S$, let $\mathds{P}_{\sigma,\tau}^{x}(\rightarrow s)$ 
            be the probability that the play ends in sink $s$ while starting from vertex $x$, following strategies $(\sigma,\tau)$. For any vertex $x$ we have:
            \[v^{G'}_{\sigma,\tau}(x) = \sum\limits_{s \in V'_S}\mathds{P}_{\sigma,\tau}^x(\rightarrow s)\val'(s) \geq \sum\limits_{s \in V'_S}\mathds{P}_{\sigma,\tau}^x( \rightarrow s)\val(s) = v^{G}_{\sigma,\tau}(x) \]
            This is true for any \m strategy $\tau$, thus $v^{G'}_{\sigma} \geq v^{G}_{\sigma}$.
    \end{proof}
        
        The following proposition is the core idea of GSIA: a strategy which improves on $\sigma$ in the transformed game also improves on $\sigma$ in the original game. 
         The proof relies on a precise analysis of the set of vertices which cannot reach a sink, to deal with the fact that the game is not stopping. We prove that, if $\sigma' \underset{G[A,\sigma]}{\succ} \sigma$ the limit of $v^{G}_{\sigma'|^{i}\sigma}$ is $v^{G}_{\sigma'}$ and the two previous lemmas imply $\sigma'|^{i}\sigma \geq \sigma'|^{i-1}\sigma > \sigma$, which yields the following proposition.
         
        \begin{proposition}\label{proposition:increasing_transfer}
        
            Let $G$ be an SSG, $A$ a subset of arcs of $G$ and $\sigma, \sigma'$ two \M strategies. If $\sigma' \underset{G[A,\sigma]}{\succ} \sigma$ then $\sigma' \underset{G}{>} \sigma$.
        
        \end{proposition}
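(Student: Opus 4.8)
The plan is to interpolate between $\sigma$ and $\sigma'$ through the concatenated strategies $\sigma'|^{i}_{A}\sigma$ and to show that the values $v_i := v^{G}_{\sigma'|^{i}\sigma}$ form a non-decreasing sequence with $v_0 = v^{G}_{\sigma}$, $v_1 > v_0$ and $\lim_i v_i \le v^{G}_{\sigma'}$; the proposition then follows from the chain $v^{G}_{\sigma'} \ge \lim_i v_i \ge v_1 > v_0 = v^{G}_{\sigma}$. The first tool is the extension of Lemma~\ref{EqTransfo} to a non-positional inner strategy: for any \M strategy $\pi$, letting $G[A,\pi]$ be the game obtained by rerouting each arc $(x,y)\in A$ to a new sink of value $v^{G}_{\pi}(y)$, one has $v^{G}_{\sigma'|_A\pi} = v^{G[A,\pi]}_{\sigma'}$. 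The proof is verbatim that of Lemma~\ref{EqTransfo}: once an arc of $A$ is crossed \M plays $\pi$, so \m best-responds to $\pi$ from there, and a play either never crosses $A$ (then it behaves as a play of $\sigma'$ in $G[A,\pi]$) or crosses $A$ at some $(x,y)$ and contributes $v^{G}_{\pi}(y)$, the value of the corresponding $A$-sink. Applying this to $\pi = \sigma'|^{i}_A\sigma$ gives $v_{i+1} = v^{G[A,\,\sigma'|^{i}\sigma]}_{\sigma'}$ for all $i\ge 0$, and in particular $v_1 = v^{G[A,\sigma]}_{\sigma'}$.

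Monotonicity of $(v_i)$ is proved by induction: $v_1 \ge v_0$ because $\sigma'\underset{G[A,\sigma]}{\succ}\sigma$ gives $v^{G[A,\sigma]}_{\sigma'} > v^{G[A,\sigma]}_{\sigma} = v^{G}_{\sigma}$ by Lemma~\ref{NonChangement}; and if $v_{i-1}\le v_i$, then $G[A,\sigma'|^{i}\sigma]$ and $G[A,\sigma'|^{i-1}\sigma]$ differ only in the values of their $A$-sinks, which have not decreased, so $v_{i+1}=v^{G[A,\,\sigma'|^{i}\sigma]}_{\sigma'}\ge v^{G[A,\,\sigma'|^{i-1}\sigma]}_{\sigma'}=v_i$ by Lemma~\ref{IncreaseSink}. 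Since all values lie in $[0,1]$, $v_\infty:=\lim_i v_i$ exists, $v_\infty\ge v_1 > v_0$, and it remains to prove $v_\infty\le v^{G}_{\sigma'}$. (Passing to the limit in the equations defining $v_i$ at sinks, random, \m and \M vertices shows that $v_\infty$ also satisfies conditions (i)--(iv) of Lemma~\ref{lemma:NSOptConditions} for $\sigma'$, which yields $v_\infty\ge v^{G}_{\sigma'}$ by a standard argument on the induced Markov chain, so in fact $v_\infty=v^{G}_{\sigma'}$; but only the upper bound is needed below.)

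For the upper bound, fix a best response $\tau'$ to $\sigma'$ in $G$ and let $\tilde\tau_i$ be the \m strategy playing $\tau'$ until $i$ arcs of $A$ have been crossed and then a best response $\tau(\sigma)$ to $\sigma$, so that $v_i\le v^{G}_{\sigma'|^{i}\sigma,\,\tilde\tau_i}$. A play under $(\sigma'|^{i}\sigma,\tilde\tau_i)$ follows $(\sigma',\tau')$ until its $i$-th crossing of an arc of $A$ (if it occurs), then follows $(\sigma,\tau(\sigma))$ from the head $Y_i$ of that arc, contributing $v^{G}_{\sigma}(Y_i)$. Writing $N\in\{0,1,\dots,\infty\}$ for the number of arcs of $A$ crossed by a $(\sigma',\tau')$-play $X$ starting at $x$, this yields
\[ v^{G}_{\sigma'|^{i}\sigma,\,\tilde\tau_i}(x) \;=\; \esps{x}{\sigma',\tau'}{\val(X)\,\mathds{1}_{N<i}} \;+\; \esps{x}{\sigma',\tau'}{v^{G}_{\sigma}(Y_i)\,\mathds{1}_{N\ge i}}. \]
We may assume $A$ has no self-loop at a sink (rerouting it changes nothing), so on $\{N=\infty\}$ no sink is reached and $\val(X)=0$ there; hence by monotone convergence the first term increases to $\esps{x}{\sigma',\tau'}{\val(X)}=v^{G}_{\sigma'}(x)$, and the whole matter reduces to showing that the second term tends to $0$.

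This is immediate at any vertex $x$ with $\pros{x}{\sigma',\tau'}{N=\infty}=0$. The crux of the proof -- and where the refinement clause of $\succ$ is indispensable -- is the remaining case, where with positive probability the $(\sigma',\tau')$-play from $x$ is eventually trapped in an \emph{absorbing set} $B$: a strongly connected set of non-sink vertices, closed under the transitions of the Markov chain $M_{\sigma',\tau'}$, that contains some arc of $A$ used by $M_{\sigma',\tau'}$. Once inside such a $B$, the head $Y_i$ of the $i$-th crossed arc lies in $B$ and is itself the head of an arc of $A$ used inside $B$, and the \textbf{key claim} is that $v^{G}_{\sigma}(y)=0$ for every such head $y\in B$. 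Granting it, $v^{G}_{\sigma}(Y_i)=0$ for every $i$ past the (almost surely finite) time at which the play enters $B$, so the second term tends to $0$ by dominated convergence; combined with the previous paragraph this gives $v_\infty(x)\le v^{G}_{\sigma'}(x)$ for every $x$, hence $\sigma'\underset{G}{>}\sigma$. The proof of the key claim is the main obstacle I expect. It goes by contradiction, through a maximality argument on the values of $\sigma$ inside $B$: if $v^{G}_{\sigma}$ were positive somewhere on $B$, then in $G[A,\sigma]$ the strategy $\sigma'$, followed inside $B$, can only reach $A$-sinks of value at most $\max_{B} v^{G}_{\sigma}$, which together with Lemma~\ref{NonChangement} and the inequality $v^{G[A,\sigma]}_{\sigma'}\ge v^{G[A,\sigma]}_{\sigma}$ pins $v^{G[A,\sigma]}_{\sigma'}$ to $v^{G[A,\sigma]}_{\sigma}$ on the vertices of $B$ where $\sigma$ attains its maximal value; the refinement clause of $\succ$ then forces $\sigma'$ and $\sigma$ to coincide there, and a case analysis of the resulting structure shows that a play would then be confined to $B$ forever while following $\sigma$ yet have positive $\sigma$-value, a contradiction.
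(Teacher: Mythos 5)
Your overall architecture is the same as the paper's: interpolate with $\sigma'|^{i}_{A}\sigma$, obtain $v_1=v^{G[A,\sigma]}_{\sigma'}>v^{G}_{\sigma}$ from Lemmas~\ref{EqTransfo} and~\ref{NonChangement}, push the sequence upward with Lemma~\ref{IncreaseSink}, and then control the limit by showing that whatever the $(\sigma',\tau')$-play can be trapped in forever has $\sigma$-value zero. Your organisation of the limit step (conditioning on the number $N$ of $A$-crossings and killing the tail term) is a legitimate, arguably cleaner, substitute for the paper's device of turning the absorbing set $Z(\sigma',\tau)$ into $0$-sinks and letting the stopping time go to infinity; and your ``key claim'' is exactly the content the paper extracts from its Proposition~\ref{proposition:decreasing_absorbing_set} ($Z(\sigma')\subseteq Z(\sigma)$), restricted to what is actually needed.

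The gap is in the final step of the key claim. Having pinned $v^{G[A,\sigma]}_{\sigma'}=v^{G[A,\sigma]}_{\sigma}=v$ on the set $M$ of vertices of $B$ of maximal $\sigma$-value and deduced $\sigma'=\sigma$ on $M\cap V_{\M}$ via the refinement clause, you assert that ``a play would then be confined to $B$ forever while following $\sigma$''. As stated this fails: at a \m vertex $x\in M$ the best response $\tau(\sigma)$ satisfies $v_{\sigma}(\tau(\sigma)(x))=\min_{y}v_{\sigma}(y)=v$, but the minimiser may lie \emph{outside} $B$, so the play following $(\sigma,\tau(\sigma))$ is not confined to $B$ and no contradiction arises. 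This is exactly the delicate point where the paper deploys Lemma~\ref{lemma:minimal_response_absorbing} (a best response with inclusion-maximal absorbing set) and the $\bar{\tau}$-redirection argument. Your route can be repaired without that machinery, but the confinement must be witnessed by a \emph{chosen, possibly non-optimal} \m reply rather than by ``following $\sigma$'': check that $M$ is closed under $\sigma$ at \M vertices (refinement clause plus closure of $B$ under $\sigma'$), under all random moves (closure of $B$ plus the averaging condition), and under $\tau'$ at \m vertices (since $\tau'(x)\in B$ forces $v\geq v_{\sigma}(\tau'(x))\geq\min_{y}v_{\sigma}(y)=v$). Then the reply $\hat{\tau}$ that plays $\tau'$ on $M$ traps the play in $M\subseteq B$, which contains no sink, so $v_{\sigma,\hat{\tau}}=0$ on $M$ and $v=v_{\sigma}(s)\leq v_{\sigma,\hat{\tau}}(s)=0$, the desired contradiction. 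With that step made explicit your proof closes and is a genuine streamlining of the paper's argument.
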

        
        In order to avoid requiring the game to be stopping, it is necessary to pay particular attention to the set of vertices where the play can loop infinitely and yield value zero, which is a subset of the set of vertices of value $0$. We now prove that a step of GSIA can only reduce this set, which is then used to prove Proposition~\ref{proposition:increasing_transfer}.

         \begin{definition}
        
            For an SSG $G$ and two strategies $(\sigma, \tau)$, an absorbing set $Z$ is a subset of $V \smallsetminus V_S$ such that starting from any vertex of $Z$ and playing according to $(\sigma, \tau)$, there is a probability zero of reaching a vertex of $V \smallsetminus Z$.
        
        \end{definition}
        
        For $\sigma$ and $\tau$ two strategies, $Z(\sigma,\tau)$ is the set of all vertices in some absorbing set under $(\sigma,\tau)$. Hence, $Z(\sigma,\tau)$ is also an absorbing set.
        By definition, a play remains stuck in an absorbing set and can never reach a sink, hence all vertices of an absorbing set have \emph{value zero} under $(\sigma, \tau)$. The next lemma proves the existence of the inclusion-wise maximum over $\tau$ of $Z(\sigma,\tau)$ that we denote by $Z(\sigma)$. An example is given Fig.~\ref{fig:Abs}.
        
        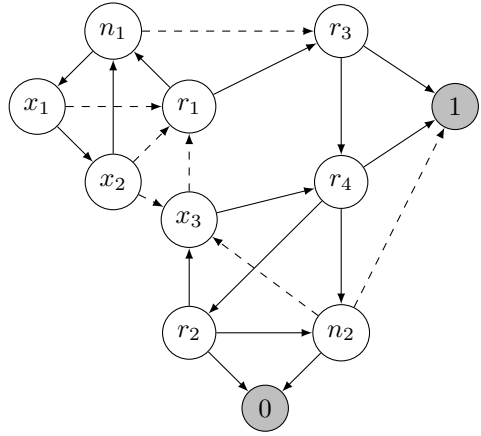
\begin{figure}
\centering

    \begin{tikzpicture}
    
        \node[draw, circle] (x1) at (0,5.5) {$x_1$};
		\node[draw, circle] (n1) at (1,6.5) {$n_1$};
		\node[draw, circle] (x2) at (1,4.5) {$x_2$};
		\node[draw, circle] (r1) at (2,5.5) {$r_1$};
		\node[draw, circle] (r2) at (2,2.5) {$r_2$};
		\node[draw, circle] (x3) at (2,4) {$x_3$};
		\node[draw, circle, fill=gray!50] (0) at (3,1.5) {$0$};
		\node[draw, circle] (r3) at (4,6.5) {$r_3$};
		\node[draw, circle] (r4) at (4,4.5) {$r_4$};
		\node[draw, circle] (n2) at (4,2.5) {$n_2$};
		\node[draw, circle, fill=gray!50] (1) at (5.5,5.5) {$1$};

		\draw[->,>=latex] (n1)--(x1);
		\draw[->,>=latex, dashed] (n1)--(r3);
		\draw[->,>=latex] (x1)--(x2);
		\draw[->,>=latex,dashed] (x1)--(r1);
		\draw[->,>=latex,dashed] (x2)--(r1);
		\draw[->,>=latex,dashed] (x2)--(x3);
		\draw[->,>=latex] (x2)--(n1);
		\draw[->,>=latex] (r1)--(n1);
		\draw[->,>=latex,dashed] (x3)--(r1);
		\draw[->,>=latex] (x3)--(r4);
		\draw[->,>=latex] (r2)--(x3);
		\draw[->,>=latex] (r2)--(n2);
		\draw[->,>=latex] (r2)--(0);
		\draw[->,>=latex] (r1)--(r3);
		\draw[->,>=latex] (r3)--(r4);
		\draw[->,>=latex] (r4)--(r2);
		\draw[->,>=latex] (n2)--(0);
		\draw[->,>=latex,dashed] (n2)--(1);
		\draw[->,>=latex, dashed] (n2)--(x3);
		\draw[->,>=latex] (r3)--(1);
		\draw[->,>=latex] (r4)--(1);
		\draw[->,>=latex] (r4)--(n2);

    \end{tikzpicture}

\caption{Example of an SSG where the $x$, $n$ and $r$ vertices are respectively from $V_{\M}$, $V_{\m}$ and $V_{R}$. The pair of strategy $(\sigma,\tau)$ is displayed as plain arrows. Here $Z(\sigma,\tau) = Z(\sigma) = \{n_1,\;x_1,\;x_2\}$.}
\label{fig:Abs}
\end{figure}
        
        \begin{lemma}\label{lemma:minimal_response_absorbing}
        
            For every \M strategy $\sigma$, there is $\tau \in BR(\sigma)$ such that for every $\m$ strategy $\tau'$, we have $Z(\sigma,\tau') \subseteq Z(\sigma,\tau)$.
        
        \end{lemma}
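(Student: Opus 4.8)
The plan is to realise the required $\tau$ as a single best response that simultaneously "captures" every possible absorbing set. Write $W = \bigcup_{\tau' \in \Sigma^{\m}} Z(\sigma,\tau')$ for the union, over all \m strategies, of the maximal absorbing sets associated with $(\sigma,\cdot)$; by definition $Z(\sigma,\tau') \subseteq W$ for every $\tau'$, and $W \subseteq V \smallsetminus V_S$. It is enough to produce $\tau \in BR(\sigma)$ with $W \subseteq Z(\sigma,\tau)$, since then $Z(\sigma,\tau) = W$ contains all the $Z(\sigma,\tau')$. Two elementary facts set the stage. First, a \emph{one-step criterion}: a set $Z \subseteq V \smallsetminus V_S$ is absorbing under $(\sigma,\tau)$ if and only if $\sigma(x) \in Z$ for $x \in Z \cap V_{\M}$, $\tau(x) \in Z$ for $x \in Z \cap V_{\m}$, and $N^{+}(x) \subseteq Z$ for $x \in Z \cap V_R$ --- the forward direction because a \M{}/\m move is taken with probability $1$ and every outneighbour of a random vertex has positive probability, the converse by induction on the play. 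Second, every vertex of any absorbing set under $(\sigma,\cdot)$ has value $0$ under $\sigma$: if $w \in Z(\sigma,\tau')$ then $v^{G}_{\sigma}(w) \le v^{G}_{\sigma,\tau'}(w) = 0$ because a best response minimises the value pointwise; hence $W \subseteq K^{G}_{\sigma}$ and $v^{G}_{\sigma}$ vanishes on $W$.

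I would then \emph{build} $\tau$ as follows. Fix some $\tau_0 \in BR(\sigma)$ and, for each $w \in W$, fix a strategy $\tau_w$ with $w \in Z(\sigma,\tau_w)$. Set $\tau(x) = \tau_x(x)$ for $x \in W \cap V_{\m}$ and $\tau(x) = \tau_0(x)$ otherwise. Then $W$ is absorbing under $(\sigma,\tau)$: for $x \in W \cap V_{\M}$ the one-step criterion applied to $Z(\sigma,\tau_x)$ gives $\sigma(x) \in Z(\sigma,\tau_x) \subseteq W$; for $x \in W \cap V_R$ it gives $N^{+}(x) \subseteq Z(\sigma,\tau_x) \subseteq W$; and for $x \in W \cap V_{\m}$ we have $\tau(x) = \tau_x(x) \in Z(\sigma,\tau_x) \subseteq W$. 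So the one-step criterion (converse direction) shows $W$ is absorbing under $(\sigma,\tau)$, whence $W \subseteq Z(\sigma,\tau)$, and since trivially $Z(\sigma,\tau) \subseteq W$ we get $Z(\sigma,\tau) = W$.

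It remains to check $\tau \in BR(\sigma)$, which I would do via the best-response part of Lemma~\ref{lemma:NSOptConditions}: it suffices to show that the vector $v := v^{G}_{\sigma} = v^{G}_{\sigma,\tau_0}$ satisfies, for $(\sigma,\tau)$, conditions (i), (ii), (iv), condition (v) in the form $v(x) = 0 \iff x \in K^{G}_{\sigma}$, and condition (iii) in the form $v(x) = \min_{y \in N^{+}(x)} v(y) = v(\tau(x))$. Everything here except $v(\tau(x)) = v(x)$ is independent of the \m strategy and already holds because $\tau_0 \in BR(\sigma)$. For $x \in V_{\m} \smallsetminus W$ we have $\tau(x) = \tau_0(x)$, so $v(\tau(x)) = v(x)$ by condition (iii) for $\tau_0$; for $x \in V_{\m} \cap W$ we have $\tau(x) \in W \subseteq K^{G}_{\sigma}$ by the construction, hence $v(\tau(x)) = 0 = v(x)$ since $v$ vanishes on $W$. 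Thus $v = v^{G}_{\sigma,\tau}$ and $\tau \in BR(\sigma)$, and $Z(\sigma,\tau) = W \supseteq Z(\sigma,\tau')$ for all $\tau'$, as required.

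The step I expect to be the real obstacle is this last verification --- making sure the rerouting of \m inside $W$ does not destroy the best-response property. It becomes transparent once one observes $W \subseteq K^{G}_{\sigma}$: in the zero-value region any \m move that keeps value $0$ is optimal, so the only danger is a rerouted move escaping $W$, which is precisely what the absorption of the witnessing sets $Z(\sigma,\tau_x)$ prevents. One should also note that the witnesses $\tau_w$ need not themselves be best responses; this is harmless, as they serve only to certify membership in absorbing sets, while the best-response property of the assembled $\tau$ is recovered entirely from Lemma~\ref{lemma:NSOptConditions}.
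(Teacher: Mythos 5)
Your proof is correct and rests on the same key mechanism as the paper's: reroute a best response $\tau_0$ inside the absorbing region to follow the witnessing strategies, observe that every vertex there lies in $K^{G}_{\sigma}$ so the rerouting only moves $\m$ between value-zero successors and the best-response characterisation of Lemma~\ref{lemma:NSOptConditions} is preserved, and check absorption of the combined set by the one-step criterion. The only difference is organisational: the paper merges two strategies at a time (implicitly iterating until the absorbing set stabilises), whereas you build the maximal $Z(\sigma,\tau)=\bigcup_{\tau'}Z(\sigma,\tau')$ in one shot from per-vertex witnesses, which slightly streamlines the termination bookkeeping but is not a different argument.
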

        
        \begin{proof}
        
            For $\tau$ in $BR(\sigma)$ and $\tau'$ such that $Z(\sigma,\tau') \nsubseteq Z(\sigma,\tau)$, then we define $\tilde{\tau}$ as $\tilde{\tau}(x) = \tau'(x)$ for $x$ in $Z(\sigma,\tau')$ and $\tilde{\tau}(x) = \tau(x)$ otherwise.
            We now prove that $\tilde{\tau} \in BR(\sigma)$ and $ Z(\sigma,\tilde{\tau}) \supseteq Z(\sigma,\tau) \cup Z(\sigma,\tau')$.
            
            Since $\tau$ is a best response to $\sigma$, we have $v_{\sigma,\tau}(x) \leq v_{\sigma,\tau'}(x)$. Moreover, for $x \in Z(\sigma,\tau')$,  $v_{\sigma,\tau'}(x) = 0$
            thus $v_{\sigma,\tau}(x) = 0$. From this,
            we deduce that the two systems
            of linear equations given 
            by  Lemma~\ref{lemma:NSOptConditions}, characterising 
            respectively vectors
            $v_{\sigma,\tau}$ and $v_{\sigma,\tilde{\tau}}$, 
            are exactly the same: for the only vertices
            where $\tilde{\tau}(x)$ and $\tau(x)$ differ
            satisfy $v_{\sigma,\tau}(\tau(x)) = v_{\sigma,\tau}(\tilde{\tau}(x)) = 0$.
            Hence, we have $v_{\sigma,\tau} = v_{\sigma,\tilde{\tau}}$ and $\tilde{\tau} \in BR(\sigma)$.

            For any play under strategies $(\sigma,\tilde{\tau})$ starting in $x \in Z(\sigma,\tau')$, the \m vertices of the play are all in $Z(\sigma,\tau')$ because $\tilde{\tau}$ plays as $\tau'$ on these vertices. Thus, we have $Z(\sigma,\tau') \subseteq Z(\sigma,\tilde{\tau})$.
            For a play starting in $x \in Z(\sigma,\tau)$, either the play reaches a vertex of $Z(\sigma,\tau')$ and then stays in $Z(\sigma,\tau')$ or it plays like $\tau$ and stays in $Z(\sigma,\tau)$.
            Hence, we have $Z(\sigma,\tau) \subseteq Z(\sigma,\tilde{\tau})$.
        \end{proof}

        From this we deduce the following result on the improvement step for GSIA (where absorbing sets are understood in $G$):

        \begin{proposition}\label{proposition:decreasing_absorbing_set}
            Let $G$ be an SSG, $A$ a set of arcs of $G$,  $\sigma$ and $\sigma'$ two \M strategies such that $\sigma' \underset{G[A,\sigma]}{\succ} \sigma$, then $Z(\sigma') \subseteq Z(\sigma)$.
        \end{proposition}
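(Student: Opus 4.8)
From the hypothesis $\sigma'\underset{G[A,\sigma]}{\succ}\sigma$ I would produce a \m strategy under which $Z(\sigma')$ is an absorbing set \emph{of $G$ for the strategy $\sigma$}; since $Z(\sigma)$ is by Lemma~\ref{lemma:minimal_response_absorbing} the inclusion-wise largest set of the form $Z(\sigma,\tau)$, this gives $Z(\sigma')\subseteq Z(\sigma)$ at once. Fix the $\tau'$ provided by Lemma~\ref{lemma:minimal_response_absorbing}, so $Z(\sigma')=Z(\sigma',\tau')$ (we may assume $Z(\sigma')\neq\emptyset$); by definition this set contains no sink and is closed under $(\sigma',\tau')$, meaning $\sigma'$ sends its \M vertices into it, $\tau'$ sends its \m vertices into it, and every out-neighbour of one of its random vertices lies again in it. It will suffice to show that $Z(\sigma')$ is in fact closed under $(\sigma,\tau')$, the only non-trivial point being that $\sigma$ sends the \M vertices of $Z(\sigma')$ into $Z(\sigma')$. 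I write $w:=v^{G}_{\sigma}$, which coincides with $v^{G[A,\sigma]}_{\sigma}$ on the vertices of $G$ by Lemma~\ref{NonChangement}, and $u:=v^{G[A,\sigma]}_{\sigma'}$; unwinding the definition, $\sigma'\underset{G[A,\sigma]}{\succ}\sigma$ says precisely that $u\ge w$ on the vertices of $G$ and that $\sigma'(x)=\sigma(x)$ whenever $x$ is a \M vertex with $u(x)=w(x)$.

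\textbf{Step 1: bounding $u$ on $Z(\sigma')$.}
Put $M:=\max_{z\in Z(\sigma')}w(z)$. Running $(\sigma',\tau')$ from $z\in Z(\sigma')$ inside $G[A,\sigma]$, the play coincides with the one in $G$ and stays in $Z(\sigma')$ until it first crosses an arc of $A$; if it crosses $(x,y)\in A$ with $x\in Z(\sigma')$, then $y$ is the $(\sigma',\tau')$-successor (or a random successor) of $x$, hence $y\in Z(\sigma')$, so the $A$-sink reached has value $w(y)\le M$; and if no arc of $A$ is ever crossed the play never reaches a sink and contributes $0$. Thus $v^{G[A,\sigma]}_{\sigma',\tau'}(z)\le M$, so a fortiori $u(z)\le v^{G[A,\sigma]}_{\sigma',\tau'}(z)\le M$. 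Combined with $u\ge w$, this shows $u\equiv M$ on the nonempty set $C:=\{z\in Z(\sigma'):w(z)=M\}$.

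\textbf{Step 2 (the crux): $C$ is closed under $(\sigma',\tau')$ and $\sigma=\sigma'$ on $C\cap V_{\M}$.}
This is exactly where the strengthened order $\succ$ (and not just $>$) is used. For a \M vertex $z\in C$ one has $v^{G[A,\sigma]}_{\sigma'}(z)=u(z)=M=w(z)=v^{G[A,\sigma]}_{\sigma}(z)$, hence $\sigma'(z)=\sigma(z)$; then $w(\sigma'(z))=w(\sigma(z))=w(z)=M$ by the value equation of Lemma~\ref{lemma:NSOptConditions} for \M vertices, and $\sigma'(z)\in Z(\sigma')$, so $\sigma'(z)\in C$. For a random vertex $z\in C$, $w(z)=\sum_{y\in N^+(z)}p_z(y)w(y)=M$ while every successor $y$ lies in $Z(\sigma')$ and hence has $w(y)\le M$, which forces $w(y)=M$, i.e. every successor is in $C$. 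For a \m vertex $z\in C$, the best-response equation $w(z)=\min_{y\in N^+(z)}w(y)=M$ forces $w(\tau'(z))\ge M$, while $\tau'(z)\in Z(\sigma')$ forces $w(\tau'(z))\le M$, so $\tau'(z)\in C$. Hence $C$ is closed under $(\sigma',\tau')$, and since $\sigma=\sigma'$ on $C\cap V_{\M}$ it is also closed under $(\sigma,\tau')$; being sink-free, $C$ is an absorbing set of $G$ under $(\sigma,\tau')$. I expect identifying this invariant set --- the top level set of $v^{G}_{\sigma}$ restricted to $Z(\sigma')$ --- and carrying the case analysis through to be the main obstacle.

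\textbf{Step 3: conclusion.}
Every vertex of an absorbing set has value $0$ for the corresponding strategy pair, so $w=v^{G}_{\sigma}\le v^{G}_{\sigma,\tau'}=0$ on $C$; but $w\equiv M$ on the nonempty set $C$, whence $M=0$. Since $w\ge 0$ and $\max_{Z(\sigma')}w=M=0$, this means $w\equiv 0$ on $Z(\sigma')$, i.e. $C=Z(\sigma')$. By Step 2, $Z(\sigma')=C$ is therefore an absorbing set of $G$ under $(\sigma,\tau')$, so $Z(\sigma')\subseteq Z(\sigma,\tau')\subseteq Z(\sigma)$, which is what we wanted.
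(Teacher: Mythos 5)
Your proof is correct, and it takes a genuinely different route from the paper's. The paper argues by contradiction: assuming $Z(\sigma')\not\subseteq Z(\sigma)$, it finds an arc of $A$ with both endpoints in $Z(\sigma')$ and positive value, takes the endpoint $s$ maximising $v_\sigma$ among such endpoints, and derives $v_\sigma(s)<v_\sigma(s)$ from a first-exit conditional-expectation decomposition of the play under $(\sigma,\tau)$ through the exit sets $E_X$ and $E_N$, after a tie-breaking modification of the best response $\tau$ on \m vertices. You instead run a direct discrete maximum-principle argument on the level set $C$ where $v^G_\sigma$ attains its maximum $M$ over $Z(\sigma')$: tracking which $A$-sinks are reachable from $Z(\sigma')$ gives $v^{G[A,\sigma]}_{\sigma'}\le M$ there, so $v^{G[A,\sigma]}_{\sigma'}=v^G_\sigma$ on $C$; the strengthened order $\succ$ then forces $\sigma=\sigma'$ on the \M vertices of $C$, and the one-step value equations of Lemma~\ref{lemma:NSOptConditions} make $C$ invariant under $(\sigma,\tau')$, hence absorbing, hence $M=0$ and $C=Z(\sigma')$; the conclusion follows from the maximality of $Z(\sigma)$ given by Lemma~\ref{lemma:minimal_response_absorbing}. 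Your version is purely local --- no probabilistic first-passage argument, no case analysis on $Z(\sigma')\smallsetminus Z(\sigma)$, no modification of $\tau$ --- and it delivers as by-products the facts that $v^G_\sigma\equiv 0$ on $Z(\sigma')$ and that $\sigma$ and $\sigma'$ agree on the \M vertices of $Z(\sigma')$, which are exactly the two facts the paper must re-derive from the proposition at the start of the proof of Proposition~\ref{proposition:increasing_transfer}. The paper's argument, in exchange, makes explicit where the assumption $Z(\sigma')\not\subseteq Z(\sigma)$ breaks down, at the cost of a more delicate computation.
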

        
        \begin{proof}
        
            Suppose that $Z(\sigma')$ is not a subset of $Z(\sigma)$. From Lemma~\ref{lemma:minimal_response_absorbing} there is $\tau \in BR(\sigma)$ such that $Z(\sigma,\tau) = Z(\sigma)$ and $\tau' \in BR(\sigma')$ such that $Z(\sigma',\tau') = Z(\sigma')$. We write $Z=Z(\sigma')$
            
            Let $X$ be the set of \M  vertices $x$ in $Z(\sigma') \smallsetminus Z(\sigma)$ such that $\sigma(x) \neq \sigma'(x)$; it is nonempty otherwise $Z(\sigma')$ would be an absorbing set for $(\sigma,\tau')$. If $x$ is in $X$, since $\sigma' \underset{G[A,\sigma]}{\succ} \sigma$, we have
            \[v^{G[A,\sigma]}_{\sigma'}(x) > v^{G[A,\sigma]}_{\sigma}(x) \geq 0\]
            Thus, a sink is reached in $G[A,\sigma]$ starting from $x$ under the strategies $(\sigma',\tau')$. Since $Z$ is an absorbing set in $G$ under the same strategies, it implies that all the accessible sinks in $G[A,\sigma]$ are $A$-sinks. Hence, there is at least one arc $e =(y,z) \in A$ with both ends in $Z$ and such that $v_{\sigma}(z) > 0$. We define the vertex $s$ of $Z$ as:
            \[s = \argmax\limits_{z \in Z} \{v_{\sigma}(z)\;|\; \exists y \in V, (y,z) \in A\}\]
            and we let $v = v_{\sigma}(s)$. The value of each vertex in $Z$ is bounded by $v$. Similarly than for $x$, in $G$ under strategies $(\sigma,\tau)$ the value of $s$ is bounded by the value of the vertex leaving $Z$. Such vertices exist since $Z$ is not a subset of $Z(\sigma)$. We now want to show that those vertices all have value strictly lesser than $v$, thus proving a contradiction.
            
            First, since $Z$ is an absorbing set for $(\sigma',\tau')$, all arcs
            leaving a random vertex in $Z(\sigma')$ remain in $Z(\sigma')$ in $G$; 
            this is not dependent on the strategies considered.
            
            Let $E_X \subseteq X$ the set of \M  vertices $x$ of $X$ such that $\sigma(x) \notin Z$ and let $E_N \subseteq Z \cap V_{\m}$ the set of \m  vertices $x$ of $Z$ such that $\tau(x) \notin Z$.
            
            On the one hand, for a \m vertex $x \in E_N$:
            \begin{align*}
                v^{G}_{\sigma}(\tau(x)) & \leq v^{G}_{\sigma}(\tau'(x))&& \text{ Since }\tau = \tau(\sigma)\\
                v^{G}_{\sigma}(\tau'(x))& = v^{G[A,\sigma]}_{\sigma}(\tau'(x)) && \\
                v^{G[A,\sigma]}_{\sigma}(\tau'(x)) & \leq v^{G[A,\sigma]}_{\sigma'}(\tau'(x)) && \text{ Since } \sigma' \underset{G[A,\sigma]}{\succ} \sigma \\
                v^{G[A,\sigma]}_{\sigma'}(\tau'(x)) & \leq v && \text{ Since } \tau'(x) \in Z
            \end{align*}
            Thus, $v^{G}_{\sigma}(\tau(x)) \leq v$. In case
            of equality, we have   $v = v^{G}_{\sigma}(\tau'(x)) = v^{G}_{\sigma}(\tau(x))$;
            hence we can replace $\tau$ by $\bar{\tau}(x)$,which is identical to $\tau$ except that $\bar{\tau}(x) = \tau'(x)$. We have $v_{\sigma,\tau} = v_{\sigma,\bar{\tau}}$ and $Z(\sigma,\tau) = Z(\sigma,\bar{\tau})$. Indeed, according to Lemma~\ref{lemma:NSOptConditions} the only situation that could occur would be to violate the condition $(v)$ by creating an absorbing set. However this would contradict the definition of $\tau$. Thus, we can suppose that for any $x$ in $E_N$, $v_{\sigma}(\tau(x)) < v$.
            
            On the other hand, since $\sigma' \underset{G[A,\sigma]}{\succ} \sigma$ we know that for any $x$ in $E_X$:
            \[v^{G}_{\sigma}(\sigma(x)) < v^{G[A,\sigma]}_{\sigma'}(x) \leq v\]
            Now, for any vertex $x$ of $E = E_X \cup E_N$, let $p_x$ be the probability of $x$ being the first vertex of $E$ reached starting from $s$ following strategies $(\sigma,\tau)$. By conditional expectation:
            
            \[v_{\sigma}(s) = \sum\limits_{x \in E_X}p_{x}v_{\sigma}(\sigma(x)) + \sum\limits_{x \in E_N}p_{x}v_{\sigma}(\tau(x))\]
            
            Thus, $v_{\sigma}(s) < v$ which contradicts the definition of $v$,
            and proves that $Z(\sigma') \subseteq Z(\sigma)$.
        \end{proof}

        We now prove Proposition~\ref{proposition:increasing_transfer}.
        
        \begin{proof}
            We introduce a sequence of non-positional strategies $(\sigma_i)_{i \geq 0}$ defined by $\displaystyle \sigma_i = \sigma'|^{i}\sigma$ for $i \geq 1$. 
            By hypothesis $\displaystyle \sigma' \underset{G[A,\sigma]}{\succ} \sigma$, and by Lemma~\ref{EqTransfo} $v^{G}_{\sigma'|_A \sigma} = v^{G[A,\sigma]}_{\sigma'}$, then we have
            
            \[ v^{G}_{\sigma_1} = v^{G}_{\sigma' | \sigma} = 
            v^{G[A,\sigma]}_{\sigma'} > v^{G[A,\sigma]}_{\sigma} = v^G_\sigma. \]

            
            Hence, by definition, sinks of $G[A,\sigma_1]$ will have at least
            the values of the corresponding sinks in $G[A,\sigma]$. Applying Lemma~\ref{IncreaseSink}, we obtain that $v^{G[A,\sigma_1]}_{\sigma'} \geq v^{G[A,\sigma]}_{\sigma'}$,
            which can also be written as $v^{G}_{\sigma_2} \geq v^{G}_{\sigma_1}$. More generally, we have:
            
            \[\forall i \geq 1,  v^{G}_{\sigma_{i+1}} \geq v^{G}_{\sigma_{i}} \geq v^{G}_{\sigma_1} > v^{G}_{\sigma}.\]
            
            We now prove that $v^{G}_{\sigma'} \geq v^{G}_{\sigma_1}$ to conclude the proof.
            
            From now on, we only consider the game $G$. Fix a vertex $x$ and a \m  strategy $\tau \in BR(\sigma')$ such that $Z(\sigma') = Z(\sigma',\tau)$. From Proposition~\ref{proposition:decreasing_absorbing_set} we know that, $Z(\sigma') \subseteq Z(\sigma)$. It implies that for every $z \in Z(\sigma')$, $v^{G}_{\sigma}(z) = v^{G}_{\sigma'}(z) = 0$ which implies that $v^{G[A,\sigma]}_{\sigma'}(z) = 0$. Thus, $\sigma'(z) = \sigma(z)$. It implies that $Z(\sigma') \subseteq Z(\sigma,\tau)$.
            
            We now only consider $G'$ the game $G$ where we replace every vertex in $Z(\sigma',\tau)$ by a sink of value $0$. Lemma~\ref{NonChangement} directly implies that $v^{G}_{\sigma} = v^{G'}_{\sigma'}$ and $v^{G}_{\sigma'} = v^{G'}_{\sigma'}$. Moreover, when playing following $\sigma_i$ when a vertex of $Z(\sigma')$ is reached, for all possible history, the play will stay in the absorbing set. Thus, $v^{G}_{\sigma_i} = v^{G'}_{\sigma_i}$.
            
            Recall that $\mathds{P}_{\sigma',\tau}^x(\rightarrow s)$ is the probability to reach a sink $s$ in $G'$ while starting in $x$ and following $(\sigma',\tau)$. Let $T^{\sigma',\tau}$ be a random variable defined as the time at which a sink is reached. Note that $T^{\sigma',\tau}$ may be equal to $+\infty$.
            
            For every $i \geq 1$, we use Bayes rule to express the value of $v_{\sigma',\tau}(x)$
            while conditioning on finishing the game before $i$ steps.
            
            \begin{align*}
		    	v_{\sigma',\tau}(x)  =& \mathds{P}(T^{\sigma', \tau} < i)\sum \limits_{s \in V_S} \mathds{P}_{\sigma',\tau}^x(\rightarrow s \mid T^{\sigma', \tau} < i)\val(s) \\& + \mathds{P}(i \leq T^{\sigma', \tau} < +\infty)\sum \limits_{s \in V_S} \mathds{P}_{\sigma',\tau}^x(\rightarrow s \mid +\infty > T^{\sigma', \tau} \geq i)\val(s)
			\end{align*}
			
			If $T^{\sigma_i, \tau} < i$, only $i$ arcs have been crossed, thus at most $i$ arcs from $A$ have been crossed when the sink is reached. Hence $\sigma_i$ acts like $\sigma'$ during the whole play, which yields:
			
			\begin{align*}
		    	v_{\sigma',\tau}(x)  =& \mathds{P}(T^{\sigma_i, \tau} < i)\sum \limits_{s \in V_S} \mathds{P}_{\sigma_i,\tau}^x(\rightarrow s \mid T^{\sigma_i, \tau} < i)\val(s) \\& + \mathds{P}(i \leq T^{\sigma', \tau} < +\infty)\sum \limits_{s \in V_S} \mathds{P}_{\sigma',\tau}^x(\rightarrow s \mid +\infty > T^{\sigma', \tau} \geq i)\val(s)
			\end{align*}
			
			We use Bayes rule in the same way for $v_{\sigma_i,\tau}(x)$
			
			\begin{align*}
		    	v_{\sigma_i,\tau}(x)  =& \mathds{P}(T^{\sigma_i, \tau} < i)\sum \limits_{s \in V_S} \mathds{P}_{\sigma_i,\tau}^x(\rightarrow s \mid T^{\sigma_i, \tau} < i)\val(s) \\& + \mathds{P}(i \leq T^{\sigma_i, \tau} < +\infty)\sum \limits_{s \in V_S} \mathds{P}_{\sigma_i,\tau}^x(\rightarrow s \mid T^{\sigma_i, \tau} \geq i)\val(s)
			\end{align*}
			
		    Since every absorbing vertex in $G$ associated with $\sigma'$ has been turned into a sink, in $G'$ $\mathds{P}(T^{\sigma', \tau} < i) = \mathds{P}(T^{\sigma_i, \tau} < i)$ converges to $1$ when $i$ grows. Hence,
		    both $\mathds{P}(i \leq T^{\sigma', \tau} < +\infty)$ and $\mathds{P}(i \leq T^{\sigma_i, \tau} < +\infty)$ go to $0$ and
		
		    \[\lim\limits_{i \rightarrow +\infty} |v_{\sigma',\tau}(x) - v_{\sigma_i,\tau}(x)| = 0.\]
	
		    Hence, if there was $x$ such that $v_{\sigma'}(x) < v_{\sigma_1}(x)$, we denote $\epsilon = v_{\sigma_1}(x) - v_{\sigma'}(x)$. For some rank $I$ for all $i \geq I$ we have $|v_{\sigma',\tau} - v_{\sigma_i,\tau}| < \epsilon/2$. Which implies $v_{\sigma_i,\tau}(x) < v_{\sigma_1}(x)$. We recall that $v_{\sigma_1}(x) \leq v_{\sigma_i}(x)$. This means that $v_{\sigma_i,\tau} < v_{\sigma_i}(x)$, which contradicts the notion of optimal response against $\sigma_i$. Therefore, we have shown that $\sigma' \underset{G}{\geq} \sigma_1 \underset{G}{>} \sigma.$
        \end{proof}

        As a consequence of all previous lemmas, we obtain the correction of GSIA.
        
        \begin{theorem}\label{thm:MA_terminate}
          GSIA terminates and returns a pair of optimal strategies.
        \end{theorem}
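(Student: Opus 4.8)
The plan is to assemble the preceding results; the statement is really a wrap-up of Lemma~\ref{lemma:optimality_transfer} together with Propositions~\ref{proposition:increasing_transfer} and~\ref{proposition:decreasing_absorbing_set}. Correctness of the returned pair is immediate from the shape of Algorithm~\ref{MA}: the \textbf{while} loop is left only when its guard fails, i.e. only when $(\sigma,\tau(\sigma))$ is already a pair of optimal strategies of $G$, so whatever GSIA outputs is optimal. What remains is to check that the body of the loop is always executable, and that it is executed only finitely often.

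First I would argue that line~5 always admits a solution. Assume the guard holds, so $\sigma$ is not optimal in $G$. Whatever subset of arcs $A$ the instance selects at line~4, Lemma~\ref{lemma:optimality_transfer} gives that $\sigma$ is not optimal in $G[A,\sigma]$ either. By the classical strategy-improvement step --- a non-optimal positional \M strategy of an SSG can be strictly improved, which here can be read off the best-response characterisation of Lemma~\ref{lemma:NSOptConditions} --- there is a positional \M strategy $\sigma'$ with $\sigma'\underset{G[A,\sigma]}{>}\sigma$. The remark following Algorithm~\ref{MA} then converts $\sigma'$ into a $\sigma''$ with $\sigma''\underset{G[A,\sigma]}{\succ}\sigma$. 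Hence line~5 always succeeds.

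Next, termination. Let $\sigma_0,\sigma_1,\sigma_2,\dots$ be the successive strategies produced during a run, so that $\sigma_{k+1}\underset{G[A_k,\sigma_k]}{\succ}\sigma_k$ for each $k$, where $A_k$ is the set of arcs chosen at step~$k$. Proposition~\ref{proposition:increasing_transfer} upgrades each such step to $\sigma_{k+1}\underset{G}{>}\sigma_k$, that is $v^{G}_{\sigma_{k+1}}>v^{G}_{\sigma_k}$ in the pointwise order. Therefore $(v^{G}_{\sigma_k})_k$ is a strictly increasing chain of vectors; in particular these vectors are pairwise distinct, hence so are the positional strategies $\sigma_k$. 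Since $\Sigma^{\M}$ is finite, a run involves at most $|\Sigma^{\M}|$ strategies, so the loop is executed finitely often and GSIA halts. It can only halt with the guard false, i.e. with $(\sigma,\tau(\sigma))$ optimal, which closes the argument.

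The only genuinely substantial ingredient is Proposition~\ref{proposition:increasing_transfer} (and, behind it, Proposition~\ref{proposition:decreasing_absorbing_set} with its absorbing-set analysis), which I am assuming already proved: it is precisely what lets an improvement obtained in the auxiliary game $G[A,\sigma]$ propagate back to a strict improvement in $G$ without assuming $G$ stopping, and hence what powers the monotonicity argument for termination. The one mild subtlety worth flagging is that two distinct positional strategies may share a value vector, so termination must be deduced from the strict monotonicity of the chain $(v^{G}_{\sigma_k})_k$ rather than from a bare ``no strategy repeats'' argument; everything beyond that is bookkeeping.
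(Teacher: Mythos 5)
Your proof is correct and follows essentially the same route as the paper: Proposition~\ref{proposition:increasing_transfer} turns each improvement in $G[A,\sigma]$ into a strict increase of $v^{G}_{\sigma}$, finiteness of $\Sigma^{\M}$ then forces termination, and Lemma~\ref{lemma:optimality_transfer} links optimality in $G$ and in $G[A,\sigma]$. The only (harmless) difference is where that lemma is invoked --- you use it to guarantee that line~5 is always executable while the guard holds, whereas the paper uses the converse direction to conclude optimality in $G$ once no improving $\sigma'$ exists; your extra care about line~5's executability and about distinguishing strategies via their value vectors is a welcome tightening rather than a deviation.
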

        
        \begin{proof}
        
            We denote by $\sigma_i$ the \M strategy $\sigma$ at the end of the $i$-th loop in Algorithm~\ref{MA}. By induction, we prove that the sequence $\sigma_i$ is of increasing value. Indeed, Line $5$ of Algorithm~\ref{MA} guarantees that $\sigma' \underset{G[A,\sigma]}{\succ} \sigma$, thus Prop.~\ref{proposition:increasing_transfer} implies that  $\sigma' \underset{G}{>} \sigma$,   that is $\sigma_{i+1} > \sigma_{i}$. 
        
    The strategies produced by the algorithm are positional, hence there is only a finite number of them. Since the sequence is strictly increasing, it stops at some point. The algorithm only stops when Line $5$ of Algorithm~\ref{MA} fails to find $\sigma' \underset{G[A,\sigma]}{\succ} \sigma$. In other words, $\sigma$ is optimal in $G[A,\sigma]$. By Lemma~\ref{lemma:optimality_transfer}, $\sigma$ is also optimal in $G$.
        \end{proof}
        
\section{Complexity of GSIA}\label{sec:complexity}

    We analyse the algorithmic complexity of GSIA, by lower bounding the values
    of the sequence of strategies it produces. We obtain a bound on the number of iterations of GSIA depending on the number of random vertices, rather than on the number of  \M  or \m  vertices. Then, we can derive the complexity of any instance of GSIA, by evaluating the cost of computing $\sigma'$ from $\sigma$ in $G[A,\sigma]$.

    \subsection{Values of q-SSGs}
    
        To prove a complexity bound using the values of a strategy,
        we need to precisely characterise the form of these values. 
        In a $2$-SSG, there is a function $f(r)$ such that, for every pair of positional strategies $(\sigma,\tau)$, there is $t \leq f(r)$, such that for every vertex $x$, there is an integer $p_x$, such that $ v_{\sigma,\tau}(x) = \frac{p_x}{t}$
            
        Condon proved in~\cite{condon1992complexity} that $f(r) \leq 4^{r}$. Then Auger, Coucheney and Strozecki improved this to $f(r) \leq 6^{r/2}$ in~\cite{auger2014finding}. We show that $f(r) = q^{r}$ for $q$-SSGs, which gives the improved bound of $f(r) \leq 2^{r}$ for $2$-SSGs. 
        
        \begin{theorem}\label{th:value}
        
            Let $q \geq 1$ and $G$ a $q$-SSG with $r$ random vertices, then for any pair of strategies $(\sigma,\tau)$ there is $t \leq q^{r}$ such that, for every vertex $x$, there is an integer $s_x$ such that, $v_{\sigma,\tau} = \frac{s_x}{t}$.
            
        \end{theorem}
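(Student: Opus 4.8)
The plan is to reduce the computation of $v_{\sigma,\tau}$ to a single linear system over the random vertices of positive value, and then bound the denominator of its solution by Cramer's rule together with a determinant inequality. Throughout, fix a positional pair $(\sigma,\tau)$.

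\textbf{Eliminating the non-random vertices.} Since the moves at $V_{\M}$ and $V_{\m}$ vertices are deterministic, following $(\sigma,\tau)$ from any vertex $x$ one reaches, deterministically, either a random vertex, a sink, or an infinite path that stays inside $V_{\M}\cup V_{\m}$; let $\phi(x)$ denote this first random vertex or sink, with $\phi(x)=\bot$ in the last case. By conditions (i), (iii), (iv), (v) of Lemma~\ref{lemma:NSOptConditions}, $v_{\sigma,\tau}(x)=v_{\sigma,\tau}(\phi(x))$, with $v(\bot)=0$ (the play is trapped in an absorbing set) and $v(s)=\val(s)\in\{0,1\}$ ($G$ is a $q$-SSG). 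So the whole value vector is determined by its restriction to $V_R$, and condition (ii) gives, for each random vertex $r_i$, $v(r_i)=\sum_{y\in N^{+}(r_i)}p_{r_i}(y)\,v(\phi(y))$. Grouping the outneighbours $y$ of $r_i$ by the value of $\phi(y)$ — a positive-value random vertex $r_j$, the sink of value $1$, or a $0$-value target (the sink $0$, $\bot$, or a random vertex in $K^{G}_{\sigma,\tau}$) — this reads $v(r_i)=\sum_{j}a_{ij}v(r_j)+b_i$, where $j$ ranges over the set $W$ of random vertices outside $K^{G}_{\sigma,\tau}$, the $a_{ij},b_i$ are nonnegative with $\sum_j a_{ij}+b_i\le 1$, and, crucially, each $a_{ij}$ and $b_i$ is a sum of terms $p_{r_i}(y)=\ell_{r_i,y}/q_{r_i}$, hence has denominator dividing $q_{r_i}\le q$.

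\textbf{Invertibility.} Restricting to $i\in W$ (for $i\notin W$ we already have $v(r_i)=0$) yields a square system $(I-A)R=b$ with $A=(a_{ij})_{i,j\in W}$. I would show $\rho(A)<1$: if some nonempty $S\subseteq W$ satisfied $\sum_{j\in S}a_{ij}=1$ for all $i\in S$, then starting from any $r_i\in S$ the play would forever alternate between vertices of $S$ and finite deterministic $V_{\M}\cup V_{\m}$ stretches, never reaching a sink, so $v_{\sigma,\tau}(r_i)=0$, contradicting $r_i\in W$. Since $A\ge 0$, the same holds for every principal submatrix $A''$ of $A$, so $I-A''$ is invertible with nonnegative inverse $\sum_{k\ge 0}(A'')^{k}$, and $\det(I-A'')>0$ (its eigenvalues are $1-\lambda$ with $|\lambda|<1$: real $\lambda$ give positive factors, non-real conjugate pairs give factors $|1-\lambda|^{2}>0$).

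\textbf{Denominator bound.} Cramer's rule gives $v(r_i)=R_i=\det(C_i)/\det(I-A)$, where $C_i$ is $I-A$ with its $i$-th column replaced by $b$. Since every entry of row $i$ of $I-A$ and of $b$ has denominator dividing $q_{r_i}$, expanding these determinants by the Leibniz formula (one factor per row) shows that $t:=\bigl(\prod_{i\in W}q_{r_i}\bigr)\det(I-A)$ and $\bigl(\prod_{i\in W}q_{r_i}\bigr)\det(C_i)$ are integers, so $v(r_i)=s_{r_i}/t$ with $s_{r_i}\in\mathbb{Z}$. It remains to see $t\le q^{r}$; as $|W|\le r$ and each $q_{r_i}\le q$, it suffices to prove $\det(I-A)\le 1$. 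I would do this by induction on $|W|$ via a Schur complement: writing $I-A=\begin{pmatrix}1-a_{11}&-c^{\mathrm T}\\-d&I-A'\end{pmatrix}$ with $c,d\ge 0$ and $A'$ a principal submatrix, $\det(I-A)=\det(I-A')\cdot\bigl(1-a_{11}-c^{\mathrm T}(I-A')^{-1}d\bigr)$; the correction term is nonnegative because $(I-A')^{-1}\ge 0$, so the scalar factor is $\le 1-a_{11}\le 1$ and, being the ratio of two positive determinants, also positive, whence $\det(I-A)\le\det(I-A')\le 1$ by induction (base case $\det(I-A)=1-a_{11}\in(0,1]$). Finally every vertex value equals $0$, $1$, or some $v(r_j)=s_{r_j}/t$, hence has the form $s_x/t$ with $s_x\in\mathbb{Z}$ and $t\le q^{r}$. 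The main obstacle is the tight determinant bound $\det(I-A)\le 1$ (equivalently $\det M\le\prod_i M_{ii}$ for the associated M-matrix $M$), which is what yields $q^{r}$ rather than a cruder power of $q$; the only other point requiring care is the excision of the zero-value random vertices, which is exactly what lets us solve an invertible system without assuming $G$ stopping.
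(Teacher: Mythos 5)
Your proof is correct, and it shares the overall skeleton of the paper's argument---eliminate the deterministic \textsc{max}/\textsc{min} moves to get a substochastic linear system $(I-A)R=b$ on the positive-value random vertices, invert it, and read off the common denominator from Cramer's rule as a scaled determinant of $I-A$---but it diverges at the one step that actually produces the tight bound $q^{r}$. The paper first normalises all transition probabilities to have denominator exactly $q$ (by adding self-loops), then builds a reversed multigraph in which each probability $p/q$ becomes $p$ parallel arcs, and invokes the directed Matrix-Tree theorem: $\det q(I-A)$ counts spanning arborescences rooted at the $1$-sink, and since each random vertex contributes at most $q$ choices of arc, this count is at most $q^{r}$. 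You instead keep the individual denominators $q_{r_i}$ and prove the purely analytic inequality $\det(I-A)\le 1$ for the substochastic M-matrix $I-A$ via a Schur-complement induction using $(I-A')^{-1}=\sum_k (A')^{k}\ge 0$ (the Hadamard--Fischer inequality for M-matrices), which combined with the Leibniz expansion gives an integer denominator $t=\bigl(\prod_{i}q_{r_i}\bigr)\det(I-A)\le q^{r}$. Your route is more self-contained (no normalisation lemma, no multigraph construction, no Matrix-Tree theorem) and handles non-uniform denominators directly, in fact yielding the marginally sharper bound $\prod_{i\in W}q_{r_i}$; the paper's route buys a combinatorial interpretation of the denominator as a count of spanning trees, which is what makes the tightness example (the chain of random vertices with value $q^{-r}$) transparent. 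Your treatment of the zero-value vertices and of invertibility (no closed substochastic class inside $W$, since such a class would force value $0$) correctly replaces the paper's appeal to the stopping-free reduction, so the argument goes through without assuming the game is stopping.
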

        
        Proof of Th.~\ref{th:value} relies on the matrix tree theorem applied to a directed multigraph representing the game under a pair of strategies. Let us show that $q^r$ is a tight bound for $f(r)$. Consider a Markov chain (an SSG with no \M  nor \m  vertices) with $r+2$ vertices: two sinks $0$ and $1$ and $r$ random vertices $x_1, \ldots, x_r$. Vertex $x_1$ goes to $1$ with probability $1/q$ and to $0$ with probability $(q-1)/q$. For $r \geq i \geq 2$, $x_i$ goes to $0$ with probability $(q-1)/q$ and to $x_{r-1}$ with probability $1/q$. Then, the value of $x_r$ is $q^{-r}$.
        
        Let us remark that a $q$-SSG can be assumed to have 
        all its probability transition of the form $p/q$. The idea here is to notice that it is possible to loop with a certain probability on the same random vertices.

        \begin{lemma}
            Let $G$ be a $q$-SSG, then there is $G'$ a $q$-SSG with
            the same vertices which defines the same expectation $\esps{x_0}{\sigma, \tau}{\cdot \vert \cdot}$ and such that for all $x \in V_R$ and all $x' \in N^+(x)$ then there is an integer $p_{x,x'}$ such that $p_x(x') = p_{x,x'}/q$.
        \end{lemma}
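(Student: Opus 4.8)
The plan is to obtain $G'$ from $G$ by touching only the random vertices: each is \emph{slowed down} with a self-loop so that every single-step transition probability becomes an integer multiple of $1/q$, while the ``effective'' dynamics is unchanged. We may first assume that no random vertex of $G$ carries a self-loop: a self-loop of probability $p_x(x)<1$ is absorbed into the remaining transitions by conditioning on leaving $x$ (this rescales them to denominator $q_x-\ell_{x,x}\le q$ and changes no value, by the same argument as below), and a self-loop of probability $1$ makes its vertex absorbing with value $0$, so it may be copied verbatim into $G'$. Then, for $x\in V_R$ with $p_x(x')=\ell_{x,x'}/q_x$ and $\sum_{x'}\ell_{x,x'}=q_x\le q$, set in $G'$
\[
p'_x(x')=\frac{\ell_{x,x'}}{q}\quad(x'\in N^+(x)),\qquad p'_x(x)=\frac{q-q_x}{q}\ \ (\text{a genuine self-loop precisely when }q_x<q),
\]
leaving everything else unchanged (vertices, the partition, the sink values, and the arcs out of $V_{\M}$, $V_{\m}$, $V_S$). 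These numbers are nonnegative, sum to $1$, and all have denominator $q$, so $G'$ is a $q$-SSG of the announced form; moreover $q_x\ge1$ gives $p'_x(x)\le(q-1)/q<1$. Since arcs out of $\M$- and $\m$-vertices are untouched, $\Sigma^{\M}$ and $\Sigma^{\m}$ are the same for $G$ and $G'$, so the same positional pair $(\sigma,\tau)$ may be used in both games.

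The quickest route to value equality is algebraic, through Lemma~\ref{lemma:NSOptConditions}: I would check that $v:=v^{G}_{\sigma,\tau}$ satisfies, in $G'$, the conditions characterising $v^{G'}_{\sigma,\tau}$. Conditions (i), (iii), (iv) are literally the same in $G$ and in $G'$. For a random vertex $x$, the $G'$-equation $v(x)=\frac{q-q_x}{q}v(x)+\sum_{x'\in N^+(x)}\frac{\ell_{x,x'}}{q}v(x')$ is equivalent to $\frac{q_x}{q}v(x)=\sum_{x'}\frac{\ell_{x,x'}}{q}v(x')$, i.e.\ to $v(x)=\sum_{x'\in N^+(x)}p_x(x')v(x')$, which is condition (ii) in $G$. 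Finally, a self-loop at a random vertex changes neither the set of sinks reachable under $(\sigma,\tau)$ nor the zero/positive status of a vertex, so $K^{G'}_{\sigma,\tau}=K^{G}_{\sigma,\tau}$ and condition (v) transfers as well. By the uniqueness clause of Lemma~\ref{lemma:NSOptConditions}, $v^{G}_{\sigma,\tau}=v^{G'}_{\sigma,\tau}$ for every positional $(\sigma,\tau)$.

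For the stronger statement — that $G$ and $G'$ induce the \emph{same} conditional expectation $\esps{x_0}{\sigma,\tau}{\cdot\mid\cdot}$, not merely equal value vectors — I would make the coupling behind this computation explicit. Let $\pi$ delete from a play of $G'$ each step that re-enters a random vertex along its new self-loop; at each visit to a random vertex $x$ the number of such extra steps is an independent geometric variable of parameter $1-p'_x(x)=q_x/q>0$, hence almost surely finite, so $\pi$ is a.s.\ well defined and maps plays of $G'$ to plays of $G$. Summing the geometric series, $\sum_{j\ge0}p'_x(x)^{\,j}\,\frac{\ell_{x,x'}}{q}=\frac{\ell_{x,x'}}{q_x}=p_x(x')$, shows that the push-forward of $\pros{x_0}{\sigma,\tau}{\cdot}$ under $\pi$ in $G'$ is exactly $\pros{x_0}{\sigma,\tau}{\cdot}$ in $G$ (transitions at $\M$- and $\m$-vertices are deterministic under $(\sigma,\tau)$ and contribute a factor $1$); and since erasing self-loop steps at random vertices changes neither the sink reached nor any quantity read along the play, the two games define the same expectation, with $v^{G}_{\sigma,\tau}=v^{G'}_{\sigma,\tau}$ as the particular case of the functional $X\mapsto\val(X)$. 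I expect the only genuinely delicate point to be this push-forward identity — the assertion that the ``watched'' $G'$-chain is the $G$-chain — together with the care needed, in a game that need not be stopping, to see that runs of consecutive self-loops terminate almost surely; the verification of the optimality conditions and of $K^{G'}_{\sigma,\tau}=K^{G}_{\sigma,\tau}$ is routine bookkeeping.
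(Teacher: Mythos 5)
Your construction is exactly the one in the paper: rescale each random vertex's outgoing probabilities to denominator $q$ and absorb the leftover mass $(q-q_x)/q$ into a self-loop. The paper's proof stops at stating this construction, whereas you additionally verify value-preservation via Lemma~\ref{lemma:NSOptConditions} and a geometric-series coupling; this extra checking (and the preliminary removal of pre-existing self-loops, which is not actually needed) is sound but goes beyond what the paper records.
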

        
        \begin{proof}
        
            For $a$ a random vertex in $G$, and $q_a < q$ such that for every other vertex $x$ in $G$ there is $p_x \in \mathbb{N}$ and a probability $p_x/q_a$ to go directly from $a$ to $x$, we change those probabilities to $p_x/q$ and we add a probability $p/q$ to stay in $a$, where:
            \[p = q - \sum\limits_{x \in V} p_x.\]
        \end{proof}

        Now, we state the classical matrix-tree theorem that we use in our proof (see e.g. \cite{chaiken1978matrix}). Let $G$ be a directed multigraph with $n$ vertices, then the \emph{Laplacian matrix} of $\mathcal{G}$ is a $n \times n$ matrix $L(\mathcal{G}) = (l_{i,j})_{i,j \leq n}$ defined by:
        
        \begin{romanenumerate}
            \item $l_{i,j}$ equals $-m$ where $m$ is the number of arcs from $i$ to $j$.
            \item $l_{i,i}$ is the number of arcs going to $i$, excluding the self-loops.
        \end{romanenumerate}
        
        \begin{theorem}[Matrix tree theorem for directed multigraphs]
        
            \label{MTT}
        
            For $G = (V,E)$ a directed multigraph with vertices $V = \{v_1, \ldots, v_k\}$ and $L$ its Laplacian matrix, the number of spanning trees rooted at $v_i$ is $\det(\hat{L}_{i,i})$ where $\hat{L}_{i,i}$ is the matrix obtained by deleting the $i$-th row and column from $L$.
        
        \end{theorem}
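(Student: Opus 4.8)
The plan is to prove the statement by the classical Cauchy--Binet argument: factor the Laplacian as a product of two incidence-type matrices, expand the relevant minor as a signed count over arc subsets, and show that only the arborescences survive, each with coefficient $+1$. First I would normalise the graph. Self-loops contribute to neither the diagonal entries $l_{i,i}$ (which count incoming arcs \emph{excluding} self-loops) nor the off-diagonal entries, and no arborescence can use a self-loop, so they may be deleted without affecting either side of the identity. Fixing the root $v_i$ and, after relabelling, assuming $i=k$, the object of study is the $(k-1)\times(k-1)$ matrix $\hat L_{k,k}$.

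Next I would introduce two $k\times m$ matrices indexed by the (non-loop) arcs $e_1,\dots,e_m$: the \emph{head indicator} $H$, with $H_{x,e}=1$ when $x$ is the head of $e$ and $0$ otherwise, and the \emph{signed incidence} $B$, with $B_{x,e}=1$ when $x$ is the head of $e$, $B_{x,e}=-1$ when $x$ is the tail of $e$, and $0$ otherwise. A direct entrywise check gives $L=BH^{\top}$: the diagonal of $BH^{\top}$ collects a $+1$ for every arc entering a vertex, yielding its in-degree, while the $(x,y)$ entry with $x\neq y$ collects a $-1$ for each arc from $x$ to $y$, matching the definition of $l_{x,y}$. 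Deleting row and column $k$ amounts to deleting row $k$ of $H$ and of $B$, so $\hat L_{k,k}=\hat B\,\hat H^{\top}$ with $\hat B,\hat H$ of size $(k-1)\times m$. Cauchy--Binet then expands the determinant as
\[
\det(\hat L_{k,k})=\sum_{S}\det(\hat B[\,\cdot\,,S])\,\det(\hat H[\,\cdot\,,S]),
\]
the sum ranging over all $(k-1)$-element subsets $S$ of arcs.

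The heart of the proof is identifying the nonzero summands. The matrix $\hat H[\,\cdot\,,S]$ is a $0/1$ matrix whose column for an arc $e$ is the standard basis vector of its head, or the zero vector when that head is $v_k$; hence its determinant is nonzero, and then $\pm1$, exactly when the $k-1$ arcs of $S$ have distinct heads covering $\{v_1,\dots,v_{k-1}\}$, i.e. when every non-root vertex receives exactly one arc of $S$ and $v_k$ receives none. For such an $S$, reordering the columns so that the arc entering $v_x$ sits in position $x$ turns $\hat H[\,\cdot\,,S]$ into the identity and leaves the product of the two determinants unchanged, since the common sign of the column permutation squares to $1$. The companion matrix $\hat B[\,\cdot\,,S]$ then has $+1$ on the diagonal and a single $-1$ per column, placed at the row of the tail (the parent) of that arc. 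Reading the tails as a parent map on $\{v_1,\dots,v_{k-1}\}$ that may point to the root, this matrix is $I$ minus a ``parent'' matrix $Q$: if the parent map is acyclic, then $Q$ is nilpotent, the vertices can be ordered from the root outward to make $I-Q$ triangular with unit diagonal, and the determinant is $+1$; if it contains a cycle avoiding the root, the corresponding diagonal block is $I-C$ for a cyclic permutation $C$, whose determinant vanishes because $C$ fixes the all-ones vector. Thus each surviving acyclic $S$ contributes exactly $+1$, and these are precisely the spanning arborescences rooted at $v_k$ — the acyclic spanning subgraphs with $k-1$ arcs in which $v_k$ has in-degree $0$ and every other vertex has in-degree exactly $1$ — while cyclic configurations contribute $0$.

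Summing, $\det(\hat L_{k,k})$ equals the number of spanning trees rooted at $v_k$, with orientation fixed by the sign convention, which is the asserted identity. I expect the main obstacle to be exactly this last piece of sign bookkeeping: verifying that the column reordering leaves the product of the two minors invariant, and that every arborescence is counted with coefficient $+1$ rather than an uncontrolled sign — it is the nilpotency/triangularisation argument for $I-Q$, together with the vanishing of $\det(I-C)$ on cycles, that pins this down. An alternative route by deletion--contraction induction on the number of arcs would avoid incidence matrices, but it requires separately checking the recurrence for both the cofactor and the arborescence count on multigraphs, which I find less transparent than the Cauchy--Binet computation above.
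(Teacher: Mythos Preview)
Your proof is correct and follows the standard Cauchy--Binet route to the directed matrix tree theorem. However, there is nothing to compare it against: the paper does not prove this theorem at all. It is stated as a classical result and attributed to the literature (the sentence immediately preceding it reads ``we state the classical matrix-tree theorem that we use in our proof (see e.g.~\cite{chaiken1978matrix})''), and it is then used as a black box inside the proof of Theorem~\ref{th:value}. So your write-up supplies strictly more than the paper does on this point.

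One small remark on your argument itself: when you pass from ``$S$ contains a cycle'' to ``$\det(I-Q)=0$'', it is worth making explicit that after grouping the cycle vertices together the matrix $I-Q$ is block upper triangular with $I-C$ as one diagonal block (since each cycle vertex's unique parent lies again on the cycle), so the vanishing of $\det(I-C)$ kills the whole determinant. You essentially say this, but the block-triangular structure is what makes the step rigorous, and it is easy to state in one line.
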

        
         We can now prove Th.~\ref{th:value}.
         
         \begin{proof}[Proof of Th.~\ref{th:value}]
         
             The beginning of the proof is the same as in~\cite{condon1993algorithms} and~\cite{auger2014finding}. We start by transforming the game with fixed strategies in a Markov Chain with equivalent values. Then, we show that the value of each vertex can be written $\frac{\det B_i}{\det q(I-A)}$ using Cramer rule, for $B_i$ and $A$ two matrix which will be carefully defined. To conclude, we will show that $\det q(I-A) < q^r$ by creating a graph obtain from our initial game and using Th.~\ref{MTT}.
             
             We consider a $q$-SSG $G$ and two positional strategies $\sigma$ and $\tau$. Without loss of generality, we can restrict ourselves to the computation of non-zero, non-sink values. Thus, each vertex has a non-zero probability to reach the $1$-sink. To compute the values $v_{\sigma,\tau}$, we can consider $G_A$ an SSG with vertices $V_R \cup V_S$: the random vertices and the sinks of $V$. The value of the sinks is not changed and the probability distribution $p'_x$ is defined as follows. For $x \in V_R$ and $x'$ in $G_A$, we call $M_{x,x'}$ the set of \M and \m vertex $y$ in $N^{+}(x)$ such that there is a path following only arcs of $\sigma$ and $\tau$ from $y$ to $x'$. We then have \[p'_x(x') = \sum\limits_{y \in M_{x,x'}} p_x(y)\]
             
             The graph $G_A$ has $r+2$ vertices that we denote by $a_1, \ldots, a_{r+1}, a_{r+2}$ where $a_{r+1}$ is the $0$-sink and $a_{r+2}$ is the $1$-sink. Let $b$ be the $r$-dimensional column vector with $\displaystyle b_i = p'_{a_i}(a_{r+2})$. We define $A$ the $r \times r$ matrix, with $A_{i,j} = p'_{a_i}(a_j)$.
        
        	The values of the random vertices are defined by the vector $z$ that satisfies the following equation:
        	\[z = Az + b\]
        	Let $I$ be the identity matrix, $(I- A)$ is invertible because each random vertex has access to a sink and every eigenvalue of $A$ is strictly less than $1$. We refer to~\cite{condon1993algorithms} for details. Hence, the equation has a unique solution and $z$ is also solution of:
        	\[q(I-A)z = qb\]
        	Hence, under the strategies $\sigma,\tau$, the value $z_i$ of a random vertex $a_i$ given by the Cramer rule is
        	\[z_i = \frac{\det B_i}{\det q(I-A)}\]
        	where $B_i$ is the matrix $q(I-A)$ where the $i$-th column has been replaced by $qb$. The value $\det B_i$ is an integer. See~\cite{auger2014finding} for more details. Our goal is now to bound $\det q(I-A)$.
        	
        	From the graph $G_A$, we construct the graph $G'$ by inverting all arcs, and 
        	duplicating an arc of probability $p/q$ into $p$ arcs of probability $1/q$. We also add an arc coming from the $1$-sink to the $0$-sink and one from the $0$-sink towards the $1$-sink. Figure~\ref{fig:MTT} shows an example of the transformation from $G$ to $G'$. The Laplacian $L$ of $G'$ is thus the following matrix.
        	
        	\[L = \left(\begin{array}{c|c} q(I-A)^{T} & B \\ \hline 0 & \begin{matrix} 1&1\\1&1 \end{matrix}\end{array}\right)\]
        	
        	Indeed, every random vertex has indegree $q$ minus the number of loops. Thus the number of spanning trees of $G'$ rooted in the $1$-sink is equal by Th.~\ref{MTT} to $\det \hat{L}_{r+2,r+2}$ where we have
        	
        	\[\hat{L}_{r+2,r+2} = \left(\begin{array}{c|c} q(I-A)^{T} & B' \\ \hline 0 & 1 \end{array}\right).\]
        	
        	In other words, the number of spanning trees of $G'$ is equal to $\det q(I-A)$. Furthermore, each spanning tree contains exactly one incoming arcs for every random vertices, and the arc $(a_{r+2},a_{r+1})$ has to be used. Thus, there is at most $q^{r}$ spanning trees rooted in $G'$ and:
        	\[\det q(I-A) \leq q^{r}.\]
        \end{proof}
        
        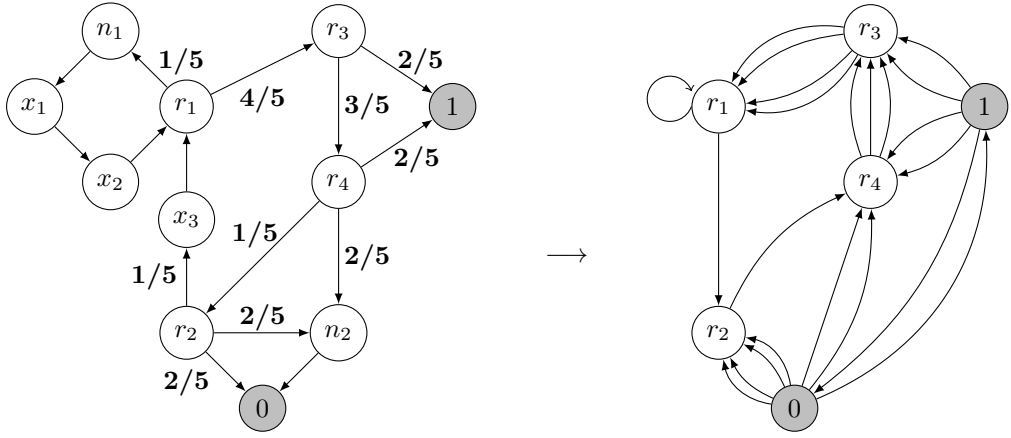
\begin{figure}
\centering

    \begin{tikzpicture}
    
        \node[draw, circle] (x1) at (0,5.5) {$x_1$};
		\node[draw, circle] (n1) at (1,6.5) {$n_1$};
		\node[draw, circle] (x2) at (1,4.5) {$x_2$};
		\node[draw, circle] (r1) at (2,5.5) {$r_1$};
		\node[draw, circle] (r2) at (2,2.5) {$r_2$};
		\node[draw, circle] (x3) at (2,4) {$x_3$};
		\node[draw, circle, fill=gray!50] (0) at (3,1.5) {$0$};
		\node[draw, circle] (r3) at (4,6.5) {$r_3$};
		\node[draw, circle] (r4) at (4,4.5) {$r_4$};
		\node[draw, circle] (n2) at (4,2.5) {$n_2$};
		\node[draw, circle, fill=gray!50] (1) at (5.5,5.5) {$1$};
		
		\node (tr) at (7,3.5) {$\longrightarrow$};
		
		\node[draw, circle] (r1b) at (9,5.5) {$r_1$};
		\node[draw, circle] (r2b) at (9,2.5) {$r_2$};
		\node[draw, circle, fill=gray!50] (0b) at (10,1.5) {$0$};
		\node[draw, circle] (r3b) at (11,6.5) {$r_3$};
		\node[draw, circle] (r4b) at (11,4.5) {$r_4$};
		\node[draw, circle, fill=gray!50] (1b) at (12.5,5.5) {$1$};

		\draw[->,>=latex] (n1)--(x1);
		\draw[->,>=latex] (x1)--(x2);
		\draw[->,>=latex] (x2)--(r1);
		\draw[->,>=latex] (r1)--(n1);
		\draw[->,>=latex] (x3)--(r1);
		\draw[->,>=latex] (r2)--(x3);
		\draw[->,>=latex] (r2)--(n2);
		\draw[->,>=latex] (r2)--(0);
		\draw[->,>=latex] (r1)--(r3);
		\draw[->,>=latex] (r3)--(r4);
		\draw[->,>=latex] (r4)--(r2);
		\draw[->,>=latex] (n2)--(0);
		\draw[->,>=latex] (r3)--(1);
		\draw[->,>=latex] (r4)--(1);
		\draw[->,>=latex] (r4)--(n2);
		
		\draw [->] (r1b.west)arc(-20:-340:0.3);
		\draw[<-,>=latex] (r1b) to[bend right = 20] (r3b);
		\draw[<-,>=latex] (r1b) to[bend right = 35] (r3b);
		\draw[<-,>=latex] (r1b) to[bend left = 20] (r3b);
		\draw[<-,>=latex] (r1b) to[bend left = 35] (r3b);
		\draw[<-,>=latex] (r2b)--(r1b);
		\draw[<-,>=latex] (r2b) to[bend right = 20] (0b);
		\draw[<-,>=latex] (r2b) to[bend right = 35] (0b);
		\draw[<-,>=latex] (r2b) to[bend left = 20] (0b);
		\draw[<-,>=latex] (r2b) to[bend left = 35] (0b);
		\draw[<-,>=latex] (r3b) to[bend right = 20] (1b);
		\draw[<-,>=latex] (r3b) to[bend left = 20] (1b);
		\draw[<-,>=latex] (r3b) to[bend right = 20] (r4b);
		\draw[<-,>=latex] (r3b) to[bend left = 20] (r4b);
		\draw[<-,>=latex] (r3b)--(r4b);
		\draw[<-,>=latex] (r4b) to[bend right = 20] (1b);
		\draw[<-,>=latex] (r4b) to[bend left = 20] (1b);
		\draw[<-,>=latex] (r4b)--(0b);
		\draw[<-,>=latex] (r4b) to[bend left = 20] (0b);
		\draw[<-,>=latex] (r4b) to[bend right = 20] (r2b);
		\draw[<-,>=latex] (0b) to[bend right = 20] (1b);
		\draw[->,>=latex] (0b) to[bend right = 35] (1b);

		\draw (1.5,6.1) node[right]{\bfseries 1/5};
		\draw (3,5.9) node[below]{\bfseries 4/5};
		\draw (2,3.25) node[left]{\bfseries 1/5};
		\draw (r2.south) node{} node[below]{\bfseries 2/5};
		\draw (3,2.4) node[above]{\bfseries 2/5};
		\draw (4.65,6.1) node[right]{\bfseries 2/5};
		\draw (3.95,5.5) node[right]{\bfseries 3/5};
		\draw (4.6,4.8) node[right]{\bfseries 2/5};
		\draw (3.95,3.5) node[right]{\bfseries 2/5};
		\draw (2.9,3.5) node[above]{\bfseries 1/5};

    \end{tikzpicture}

\caption{Example of a transformation of a graph $G$ into a graph $G'$}
\label{fig:MTT}
\end{figure}
    
    \subsection{Bounding the Number of Iterations of GSIA}
    
    GSIA produces a sequence of strictly increasing positional \M  strategies. The number of positional \M strategies is bounded by $\left|\Sigma^{\M}\right| = \prod\limits_{x \in V_{\M}}\deg(x)$, hence the number of iterations of GSIA is bounded by this value. If we consider the case of a binary SSG (all vertices of outdegree $2$), we have the classical bound of $|\Sigma^{\M}|= 2^{n}$ iterations. The best known bound for a deterministic algorithm is $2^{n}/n$ iterations obtained for Hoffman-Karp algorithm~\cite{tripathi2011strategy}, which is not far from the trivial bound of $2^n$ iterations.
    
        We give a bound for $q$-SSG, which depends on $q$ and $r$ the number of random vertices. 
        The difference of two values written as $a/b$ and $c/d$, with $a$ and $b$ less than $q^{-r}$ 
        is more than $q^{-2r}$. Hence, if a value increases in GSIA, it increases at least by $q^{-2r}$.
        Using the classical notion of switch and anti-switch~\cite{tripathi2011strategy}, recalled bellow, we can prove
        that all vertices which have their value increased by a step of GSIA, are increased by at least $q^{-r}$.

        \begin{theorem}\label{th:iteration}
        
            For $G$ a $q$-SSG with $r$ random vertices and $n$ \M  vertices, the number of iterations of GSIA is at most $nq^{r}$.
        
        \end{theorem}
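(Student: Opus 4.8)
The plan is to track, over one iteration of GSIA, how much the value vector increases, and to combine this with the denominator bound of Theorem~\ref{th:value} to cap the number of iterations. First I would recall the switch/anti-switch machinery of~\cite{tripathi2011strategy}: given the current strategy $\sigma$ and its best response $\tau(\sigma)$, a \emph{switch} at a \M vertex $x$ redirects $\sigma(x)$ to a strictly better out-neighbour, and an \emph{anti-switch} at a \m vertex does the symmetric thing for the best response. The key classical fact is that if $v^{G}_{\sigma}$ does not already satisfy the local optimality inequalities at some \M vertex, then a single switch strictly increases the value of that vertex (and never decreases any value), and symmetrically for the \m side; this is exactly the content of Lemmas~\ref{lemma:NSOptConditions} and~\ref{lemma:optimality_condition}. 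I would use this to argue that whenever a \M vertex $x$ has its value \emph{strictly} increased by a step of GSIA, i.e. $v^{G}_{\sigma'}(x) > v^{G}_{\sigma}(x)$ (which happens for at least one vertex by Proposition~\ref{proposition:increasing_transfer}), the increase is "realised" by a chain of local improvements whose magnitude is controlled by the gap between distinct admissible values.

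The arithmetic heart of the argument is the denominator bound. By Theorem~\ref{th:value}, for any pair of positional strategies the values are of the form $p_x/t$ with a common denominator $t \le q^{r}$. So two distinct values arising from two (possibly different) pairs of strategies are of the form $a/t$ and $c/t'$ with $t,t' \le q^{r}$; their difference, if nonzero, is at least $1/(tt') \ge q^{-2r}$. This already gives that any strictly increasing value jumps by at least $q^{-2r}$, and since every value lies in $[0,1]$, this would yield a bound of $q^{2r}$ increases per vertex and hence roughly $nq^{2r}$ iterations. To sharpen this to $nq^{r}$, I would instead argue that the relevant values along the GSIA run share a \emph{single} common denominator $t \le q^{r}$: indeed, the denominator $t$ in Theorem~\ref{th:value} for a Markov chain on $r$ random vertices depends only on the underlying transition structure (it is the number of spanning trees in the auxiliary graph $G'$, bounded by $q^{r}$), so one can fix $t = q^{r}$ once and for all and write \emph{all} strategy values as integers over $q^{r}$. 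Then any strict increase of a single vertex value is by at least $q^{-r}$; since each of the $n$ \M-vertex values lies in $[0,1]$ and is nondecreasing across iterations (Proposition~\ref{proposition:increasing_transfer}) and at least one strictly increases each iteration, the total number of iterations is at most $n \cdot q^{r}$.

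The step I expect to be the main obstacle is justifying that each value that moves moves by a full $q^{-r}$ rather than merely $q^{-2r}$ — that is, reconciling the fact that intermediate value vectors $v^{G}_{\sigma_i}$ may a priori have denominators that are divisors of $q^{r}$ but need not literally be $q^{r}$, so that a jump from one such rational to another could be as small as $q^{-2r}$. The resolution is that all these values are integer multiples of $1/q^{r}$ (clear denominator $q^{r}$, not reduced), so their pairwise differences are integer multiples of $q^{-r}$, hence any nonzero difference is at least $q^{-r}$; this is where the switch/anti-switch analysis is needed, to guarantee the strictly-improved vertex genuinely changes value and is not just reassigned among out-neighbours of equal value (which is precisely what the stronger order $\succ$ in GSIA forbids). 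Once that is in place, the counting argument is immediate: at most $q^{r}$ strict increments per vertex, $n$ vertices, so at most $nq^{r}$ iterations.
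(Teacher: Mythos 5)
Your first-pass bound of $nq^{2r}$ is fine, but the sharpening to $nq^{r}$ rests on a false arithmetic claim. You assert that all values arising along the run are integer multiples of $q^{-r}$ because the denominator $t$ of Theorem~\ref{th:value} ``depends only on the underlying transition structure'' and can be fixed at $q^{r}$ once and for all. It cannot: the auxiliary Markov chain in the proof of Theorem~\ref{th:value} is built \emph{after} fixing the strategy pair, so $t=\det q(I-A)$ varies with $(\sigma,\tau)$, and it is only \emph{bounded} by $q^{r}$, not a divisor of it. Concretely, in a $2$-SSG with two random vertices ($x_1$ going to the $0$-sink or to $x_2$ with probability $1/2$ each, $x_2$ going to the $1$-sink or back to $x_1$) one gets $v(x_1)=1/3$, and $3$ neither equals nor divides $2^2$. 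So values under different strategies are not commensurable over $q^{r}$, and two distinct values from consecutive iterations can in principle be as close as $q^{-2r}$; your resolution of ``the main obstacle'' does not go through, and the proposal as written only proves $nq^{2r}$ iterations.

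The paper avoids comparing values across iterations altogether. Since $\sigma'\underset{G}{>}\sigma$, the anti-switch half of Lemma~\ref{lemma:switch} forces the existence of a \M vertex $x$ with $v_{\sigma}(\sigma(x)) < v_{\sigma}(\sigma'(x))$: both quantities are entries of the \emph{single} value vector $v_{\sigma}$, hence by Theorem~\ref{th:value} they share one denominator $t\le q^{r}$ and differ by at least $1/t\ge q^{-r}$. Then
\[
v_{\sigma'}(x) \;=\; v_{\sigma'}(\sigma'(x)) \;\ge\; v_{\sigma}(\sigma'(x)) \;\ge\; v_{\sigma}(\sigma(x)) + q^{-r} \;=\; v_{\sigma}(x)+q^{-r},
\]
and summing over the $n$ \M vertices, whose values are nondecreasing and bounded by $1$, gives $nq^{r}$. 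Note that this is also the real role of the switch/anti-switch machinery here — locating the improvement inside one value vector so that a common denominator applies — not, as you suggest, ruling out reassignment among equal-valued out-neighbours (that is what the order $\succ$ already handles).
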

        
        We introduce the notion of switch and anti-switch, to prove that the improvement is at least $q^{-r}$ rather than $q^{-2r}$.
        
        \begin{definition}
            A switch (resp. an anti-switch) of a \M strategy $\sigma$ 
            with switched set $S~\subseteq~V_\M$ is a strategy $\sigma_S$ defined by $\sigma_S(x) = \sigma(x)$ for $x \notin S$, and
            satisfying 
            $v_{\sigma}(\sigma(x))~<~v_{\sigma}(\sigma_S(x))$ (resp. $v_{\sigma}(\sigma(x))~>~v_{\sigma}(\sigma_S(x))$)
            for  $x \in S$  (hence $\sigma_S(x) \neq \sigma(x)$).

        \end{definition}
        
        A common tool to solve SSGs is the fact that a switch increases the value of a strategy, while an anti-switch decreases it. Within our framework of transformed game, it is extremely simple to prove.

        \begin{lemma}\label{lemma:switch}
        If $\sigma_S$ is a switch of $\sigma$, then $\sigma_S > \sigma$.
        If $\sigma_S$ is an anti-switch of $\sigma$, then $\sigma_S < \sigma$.
        \end{lemma}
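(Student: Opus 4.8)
The plan is to reduce both statements to the monotonicity results already established for the transformed game, using the set $A$ of \emph{switched arcs}. Concretely, suppose $\sigma_S$ is a switch of $\sigma$ with switched set $S$, and let $A = \{(x,\sigma_S(x)) : x \in S\}$ be the set of arcs along which $\sigma_S$ deviates from $\sigma$. First I would look at the game $G[A,\sigma]$: for every $x\in S$ the arc $(x,\sigma_S(x))$ has been replaced by an arc to a sink of value $v_\sigma(\sigma_S(x))$, while the arc $(x,\sigma(x))$ used by $\sigma$ is still present. The key observation is that in $G[A,\sigma]$ the strategy $\sigma$ has exactly the same value vector as in $G$ (Lemma~\ref{NonChangement}), whereas $\sigma_S$, seen as a strategy of $G[A,\sigma]$, sends each $x\in S$ straight to the $A$-sink of value $v_\sigma(\sigma_S(x)) > v_\sigma(\sigma(x)) = v^{G[A,\sigma]}_\sigma(x)$.

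Next I would verify that $\sigma_S \underset{G[A,\sigma]}{\succ} \sigma$ directly from the best-response optimality conditions of Lemma~\ref{lemma:NSOptConditions}. The vector $v^{G[A,\sigma]}_\sigma$ satisfies conditions (i)--(v) for $\sigma$ in $G[A,\sigma]$; I want to check it is \emph{dominated} by $v^{G[A,\sigma]}_{\sigma_S}$ and strictly so on $S$. The cleanest way is to exhibit $v^{G[A,\sigma]}_\sigma$ as a subsolution of the system defining $v^{G[A,\sigma]}_{\sigma_S}$: at every \M vertex $x\in S$ we have $v^{G[A,\sigma]}_\sigma(x) = v_\sigma(\sigma(x)) < v_\sigma(\sigma_S(x)) = \val^{G[A,\sigma]}(s_{(x,\sigma_S(x))})$, the value $\sigma_S$ reads off, and at all other vertices the two systems coincide locally; a standard monotonicity argument for the value operator of the fixed strategy $\sigma_S$ (together with the fact, via Lemma~\ref{lemma:ZeroMeme}, that the zero-sets only shrink) then gives $v^{G[A,\sigma]}_{\sigma_S} \ge v^{G[A,\sigma]}_\sigma$ with strict inequality at the vertices of $S$. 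By definition of $\succ$ and the fact that $\sigma_S$ and $\sigma$ agree off $S$, this is precisely $\sigma_S \underset{G[A,\sigma]}{\succ} \sigma$. Then Proposition~\ref{proposition:increasing_transfer} immediately yields $\sigma_S \underset{G}{>} \sigma$, i.e. $\sigma_S > \sigma$.

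For the anti-switch statement I would argue by contraposition rather than redo the argument symmetrically: if $\sigma_S$ is an anti-switch of $\sigma$, then $\sigma$ is a switch of $\sigma_S$. Indeed, swapping the roles, we need $v_{\sigma_S}(\sigma_S(x)) < v_{\sigma_S}(\sigma(x))$ for $x \in S$; this is \emph{not} immediate because the defining inequality of an anti-switch is phrased with $v_\sigma$, not $v_{\sigma_S}$. So the main obstacle is exactly this asymmetry: the switch definition compares successor values under the \emph{current} strategy's value vector, and to flip the direction I must transport the inequality from $v_\sigma$ to $v_{\sigma_S}$. The way around it is again to pass through $G[A,\sigma]$ with the same $A$: by Lemma~\ref{NonChangement} the value of $\sigma$ is unchanged there, and one shows $\sigma$ is \emph{not} $\succ$-improvable by $\sigma_S$ in $G[A,\sigma]$ because $\sigma_S$ points to strictly smaller fixed values, so by Proposition~\ref{proposition:increasing_transfer} (contrapositive) $\sigma_S$ cannot satisfy $\sigma_S \underset{G}{>} \sigma$; combined with a best-response comparison $v^{G[A,\sigma]}_{\sigma_S} \le v^{G[A,\sigma]}_\sigma$ obtained exactly as in the switch case but with the reversed inequality, we get $v_{\sigma_S} \le v_\sigma$ and strictness somewhere, hence $\sigma_S < \sigma$. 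I expect the bookkeeping around which value vector indexes the strict inequality, and ensuring the zero-set condition (v) is not the culprit for a phantom equality, to be the only delicate points; everything else is a direct appeal to Lemmas~\ref{lemma:NSOptConditions}, \ref{NonChangement}, \ref{lemma:ZeroMeme} and Proposition~\ref{proposition:increasing_transfer}.
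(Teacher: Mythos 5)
Your overall plan---pass to a transformed game $G[A,\sigma]$, establish $\sigma_S \underset{G[A,\sigma]}{\succ} \sigma$ there, and invoke Proposition~\ref{proposition:increasing_transfer}---is the right one, but your choice of $A$ (only the switched arcs) creates exactly the difficulty that the paper's proof avoids by taking $A$ to be the set of \emph{all} arcs of $G$. With that choice every non-sink vertex of $G[A,\sigma]$ points only to sinks, so $v^{G[A,\sigma]}_{\sigma}(x)=v^G_\sigma(\sigma(x))$ and $v^{G[A,\sigma]}_{\sigma_S}(x)=v^G_\sigma(\sigma_S(x))$ are read off immediately, the relation $\succ$ follows directly from the definition of a switch, and no fixed-point reasoning is needed. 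With your smaller $A$, the step $v^{G[A,\sigma]}_{\sigma_S}\ge v^{G[A,\sigma]}_{\sigma}$ is a genuine gap: in a non-stopping game a subsolution of the linear system of Lemma~\ref{lemma:NSOptConditions} does \emph{not} in general lower-bound the value (a vertex looping on itself has value $0$ but admits the ``subsolution'' $1$); one must separately rule out that $(\sigma_S,\tau')$ admits an absorbing set of value $0$ on which $v^{G[A,\sigma]}_\sigma$ is positive. Lemma~\ref{lemma:ZeroMeme}, which you cite for this, compares the zero sets of $\sigma$ in $G$ and in $G[A,\sigma]$ and says nothing about $\sigma_S$. The gap is repairable---under $\sigma_S$ the vertices of $S$ point to $A$-sinks, so any absorbing set for $(\sigma_S,\tau')$ avoids $S$, is already absorbing for $(\sigma,\tau')$, and hence has value $0$ under $v^{G[A,\sigma]}_\sigma$ as well---but that argument is not in your proposal.

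The anti-switch half has a further logical flaw. From ``$\sigma_S$ is not a $\succ$-improvement over $\sigma$ in $G[A,\sigma]$'' you infer, via what you call the contrapositive of Proposition~\ref{proposition:increasing_transfer}, that $\sigma_S \underset{G}{>} \sigma$ cannot hold. That inference uses the \emph{inverse} of the proposition (deducing $\neg Q$ from $\neg P$ given $P \Rightarrow Q$), which is not valid. What is actually needed---and what the paper's one-line treatment of the anti-switch rests on---is a decreasing analogue of Proposition~\ref{proposition:increasing_transfer}, namely $\sigma \underset{G[A,\sigma]}{\succ} \sigma_S \Rightarrow \sigma \underset{G}{>} \sigma_S$, which the paper explicitly flags as requiring its own (similar) proof with attention to the fact that newly created absorbing sets can only lower values. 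Your proposal neither states nor proves this analogue, and your inequality $v^{G[A,\sigma]}_{\sigma_S}\le v^{G[A,\sigma]}_\sigma$ does not transfer to $v^{G}_{\sigma_S}\le v^{G}_\sigma$ without it.
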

        \begin{proof}
        Consider $G[A,\sigma]$ the game obtained from $G$, where $A$ is the set of all arcs of $G$. Let us consider $x$ a vertex switched in $\sigma'$, that is with $v^G_{\sigma}(\sigma(x)) < v^G_{\sigma}(\sigma'(x))$. Then, because all arcs are in $A$, we have $v^{G[A,\sigma]}_{\sigma}(x) = v^G_{\sigma}(\sigma(x))$ and
        $v^{G[A,\sigma]}_{\sigma'}(x) = v^G_{\sigma}(\sigma'(x))$. Hence, $v^{G[A,\sigma]}_{\sigma}(x) < v^{G[A,\sigma]}_{\sigma'}(x)$ and for  $v^G_{\sigma}(\sigma(x)) \geq v^G_{\sigma}(\sigma'(x))$, $\sigma(x)=\sigma'(x)$, 
        which implies $\sigma' \underset{G[A,\sigma]}{\succ} \sigma$. Prop.~\ref{proposition:increasing_transfer} proves $\sigma' \underset{G}{>} \sigma$.
        
        The proof is the same for an anti-switch, since $\sigma \underset{G[A,\sigma]}{\succ} \sigma' \Rightarrow \sigma \underset{G}{>} \sigma'$ (which can be proved similarly as  Prop.~\ref{proposition:increasing_transfer}, while keeping in mind that in the decreasing case, creating absorbing set lowers the value). 
        \end{proof}
        
        We use the previous lemma to prove Th.~\ref{th:iteration}.
        
        \begin{proof}
        Let us consider $\sigma$ the strategy computed at some point by GSIA and $\sigma'$ the next strategy. By Prop.~\ref{proposition:increasing_transfer}, $\sigma < \sigma'$. 
        Hence, by Lemma~\ref{lemma:switch}, $\sigma'$ cannot be an anti-switch of $\sigma$. Thus, there is a \M  vertex $x$ such that $v_{\sigma}(\sigma(x)) < v_{\sigma}(\sigma'(x))$. We recall that $\sigma'(x)$ denotes the successor of $x$ under strategy $\sigma'$. 
        
        Since $\sigma < \sigma'$, we have $v_{\sigma}(x) = v_{\sigma}(\sigma(x)) < v_{\sigma}(\sigma'(x)) \leq v_{\sigma'}(\sigma'(x)) = v_{\sigma'}(x)$.
        We now evaluate  $v_{\sigma}(\sigma'(x)) - v_{\sigma}(\sigma(x))$. 
        In the game $G$, under the strategies $\sigma,\tau(\sigma)$, Th.~\ref{th:value} implies that for some $t \leq q^{r}$,  $v_{\sigma}(\sigma(x)) = p/t$ and 
        $v_{\sigma}(\sigma'(x)) = p'/t$. We have $p/t < p'/t$, thus $p'/t - p/t \geq 1/t \geq 1/q^{r}$. Hence, the value of some \M  vertex increases by $1/q^r$ in each iteration of GSIA. Since there are $n$ \M  vertices and their values are bounded by $1$, there are at most $nq^{r}$ iterations.
        \end{proof}
        
        The complexity of GSIA is the number of iterations given by Th.~\ref{th:iteration}, multiplied by the complexity of an iteration. In an iteration, there are two sources of complexity: constructing the game $G[A,\sigma]$ and finding an improving strategy $\sigma'$ in $G[A,\sigma]$. To construct the game, $v_\sigma$ is computed by solving a linear program of size $m$ up to precision $p=q^r$. Let $C_1(m,p)$ be the complexity of computing $v_\sigma$, then the best bound is currently in $O(m^{\omega}\log(p))$~\cite{jiang2020faster}, with $\omega$ the current best bound on the matrix multiplication exponent. Let $C_2(n,r,q)$ be the complexity of computing $\sigma'$, the complexity of an iteration is in $O(nq^r(C_1(n + r, q^r) + C_2(n,r,q))$. 
        
        We obtain a better complexity, when $C_2(n,r,q) = O(C_1(n, q^r)r/n)$, which is the case for most instances of GSIA mentionned in this article. The number of iterations is only $rq^r$ if we can guarantee that a random vertex increases its value at each step. When no random vertex is improved, the cost of computing $G[A,\sigma]$ can be made smaller, which yields the following theorem.

        \begin{theorem}\label{th:complexity}
            Let $G$ be a $q$-SSG with $r$ random vertices and $n$ \M  vertices. If  $C_2(n,r,q) = O(C_1(n, q^r)r/n)$, then the complexity of GSIA is in $O(rq^{r}C_1(n,q^r))$.
        \end{theorem}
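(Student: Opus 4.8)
The plan is to combine the iteration bound of Theorem~\ref{th:iteration} with a refined accounting of the per-iteration cost, exploiting the hypothesis $C_2(n,r,q) = O(C_1(n, q^r)r/n)$ to absorb the strategy-improvement step into the cost of solving the linear system. The starting point is the generic per-iteration decomposition already noted: each iteration of GSIA costs $O(C_1(n+r, q^r) + C_2(n,r,q))$, namely the cost of computing $v_\sigma$ (to build $G[A,\sigma]$) plus the cost of finding $\sigma'$. Under the hypothesis, the second term is dominated by the first, so a single iteration costs $O(C_1(n+r,q^r)) = O(C_1(n,q^r))$ up to the harmless replacement of $n$ by $n+r$ inside $C_1$ (since $C_1$ is polynomial in its first argument, this only changes constants if $r = O(n)$, and one can always fold $r$ into the size parameter). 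Multiplying by the $nq^r$ iterations of Theorem~\ref{th:iteration} gives $O(nq^r C_1(n,q^r))$, which is worse than the claimed bound by a factor $n/r$.

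To recover the factor, I would split the iterations into two kinds according to whether some random vertex has its value strictly increased during the step. As remarked in the text preceding the theorem, when a random vertex is improved we can bound the number of such iterations by $rq^r$ rather than $nq^r$: by Theorem~\ref{th:value} the value of any random vertex lies in $\frac1{q^r}\mathbb{Z}$, is bounded by $1$, and is nondecreasing along the GSIA sequence (by Proposition~\ref{proposition:increasing_transfer}), so each random vertex can strictly increase at most $q^r$ times, for a total of at most $rq^r$ ``random-improving'' iterations. Each of these costs $O(C_1(n,q^r))$ as above, contributing $O(rq^r C_1(n,q^r))$, which matches the target. For the remaining ``random-stable'' iterations, the key observation is that the values of all random vertices are unchanged, so we do not need to re-solve the full linear system from scratch: the values of the $\M$ and $\m$ vertices are then determined from the fixed random-vertex values by a pure shortest-path / reachability-style propagation along the deterministic arcs of $(\sigma,\tau)$, whose cost is linear (or at worst $C_1(n,q^r)r/n$, i.e. the same budget as $C_2$) rather than $C_1(n,q^r)$. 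Since Theorem~\ref{th:iteration} bounds the total number of iterations by $nq^r$, these cheap iterations contribute at most $nq^r \cdot O(C_1(n,q^r)r/n) = O(rq^r C_1(n,q^r))$, again matching.

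Adding the two contributions yields total complexity $O(rq^r C_1(n,q^r))$, proving the theorem. I would present this as: (1) recall the per-iteration decomposition and apply the hypothesis to discard $C_2$ against $C_1$ for the expensive iterations; (2) bound the number of random-improving iterations by $rq^r$ via Theorem~\ref{th:value} and monotonicity; (3) argue that in a random-stable iteration the cost of reconstructing $G[A,\sigma]$ drops to the $C_2$ budget because only propagation along deterministic arcs is needed; (4) sum, using Theorem~\ref{th:iteration} for the count of cheap iterations. The main obstacle, and the step needing the most care, is (3): one must justify rigorously that when no random vertex changes value, computing $v_\sigma$ reduces to a cheap combinatorial propagation rather than a full linear solve — this requires checking that the relevant subsystem (the $\M$/$\m$ equations conditioned on fixed random values, together with the zero-value set $K^G_\sigma$ from Lemma~\ref{lemma:NSOptConditions}) is triangular/acyclic enough to solve in the claimed time, and that it can be maintained incrementally across such iterations. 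The other steps are routine given Theorems~\ref{th:value} and \ref{th:iteration} and the stated hypothesis.
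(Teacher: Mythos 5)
Your high-level plan (split the iterations into a few ``expensive'' ones, at most $rq^r$, paying $C_1(n,q^r)$ each, and many ``cheap'' ones, at most $nq^r$, paying only $O(C_1(n,q^r)r/n)$ each) is the right shape, but the specific decomposition you chose does not deliver the bound, and the two steps you flag as delicate are exactly where it breaks. First, the count of ``random-improving'' iterations: Theorem~\ref{th:value} does \emph{not} say that values lie in $\frac{1}{q^r}\mathbb{Z}$; it says that for each fixed strategy pair there is a common denominator $t\leq q^r$, and $t$ changes from one iteration to the next. Two consecutive distinct values of the same random vertex can therefore differ by as little as $1/(t\,t')\geq q^{-2r}$, so monotonicity only gives you $O(rq^{2r})$ strict increases, not $rq^r$, and your expensive term becomes $O(rq^{2r}C_1(n,q^r))$. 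The paper avoids this by triggering the expensive case not on a value change but on a change in the \emph{relative order} of the random-vertex values: if $v_\sigma(x_1)<v_\sigma(x_2)$ but $v_{\sigma'}(x_1)>v_{\sigma'}(x_2)$, then both values on the left share the denominator of $\sigma$, so $v_\sigma(x_2)-v_\sigma(x_1)\geq q^{-r}$, and monotonicity gives $v_{\sigma'}(x_1)-v_\sigma(x_1)\geq q^{-r}$; summing the potential $\sum_{x\in V_R}v(x)\leq r$ bounds the order-changing iterations by $rq^r$. This same-denominator trick is the missing idea.

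Second, the cheap case. You assert that when the random values are unchanged the work reduces to propagation along ``the deterministic arcs of $(\sigma,\tau)$'', but $\tau$ there is the best response to the \emph{new} strategy, which is not known and whose computation is precisely the expensive part in general; and the algorithm cannot tell in advance whether the coming iteration will be random-stable, so you cannot branch on that condition before doing the work. The paper's resolution is operational: optimistically assume the order of random-vertex values is preserved, compute a candidate best response from that order in $O(r\log r+n)$ time, solve only the $r\times r$ linear system for the random vertices (cost $C_1(r,q^r)\leq C_1(n,q^r)r/n$ since $C_1$ is at least quadratic in its first argument), and then \emph{check} the optimality conditions at the \m{} vertices; if the check fails, the order must have changed, and the full linear-programming cost $C_1(n,q^r)$ is charged to one of the at most $rq^r$ order changes. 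You would need to add both the detection mechanism and the same-denominator argument to make your proof go through; as written, neither the $rq^r$ count nor the cheap per-iteration cost is justified.
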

        
        \begin{proof}
         We assume that $r < n$, otherwise the theorem is trivial. 
        Let $\sigma'$ be the strategy computed by GSIA at some point, improving on the strategy $\sigma$. GSIA must compute $G[A,\sigma']$, and thus $v_{\sigma'}$ and we explain a method to do so efficiently.
        
        We assume that the order of the values (in $G$) of the random vertices is the same for $\sigma$ and $\sigma'$. Then, knowing this order and $\sigma'$, it is easy to compute $\tau(\sigma')$ a best response to $\sigma'$ in $O(r\log(r)+n)$ time~\cite{andersson2008deterministic}. Then, we can compute the values $v_{\sigma',\tau(\sigma')}$ in time $O(C_1(r,q^r))$, since it is done by solving a linear system of dimension $r$ with precision $q^r$, a task which is simpler than solving a linear program. Since $C_1(r,q^r)$ is at least quadratic in $r$, then $C_1(r,q^r) < C_1(n,q^r) r/n$ and by hypothesis $C_2(n,r,q) = O(C_1(n, q^r)r/n)$, hence a step is of complexity at most $O(C_1(n,q^r)r/n)$. There are at most $nq^r$ such steps, for a total complexity of $O(rq^rC_1(n,q^r))$.
        
        We need to detect when the assumption that the values of the random vertices are the same for $\sigma$ and $\sigma'$ is false. If $v_{\sigma',\tau(\sigma')}$ satisfies the optimality conditions at the \m vertices, then $\tau(\sigma')$ is a best response. Otherwise, we compute the best response by solving a linear program in time $C(n,q^r)$. In that case, the order of the random vertices has changed:
        there are two vertices $x_1$ and $x_2$ such that $v_{\sigma}(x_1) < v_{\sigma}(x_2)$ and $v_{\sigma'}(x_1) > v_{\sigma'}(x_2)$. Hence, $v_{\sigma'}(x_1) > v_{\sigma}(x_2)$, which implies that $v_{\sigma'}(x_1) - v_{\sigma}(x_1) > v_{\sigma}(x_2) - v_{\sigma}(x_1) > q^{-r}$.
        
        We have proved that when the random order changes, the value of some random vertex increases by at least $q^{-r}$, hence there are at most $rq^r$ such steps. The complexity from these steps is bounded by $O(rq^{r}C_1(n,q^r))$, which proves the theorem.
        \end{proof}

\section{Two Instances of GSIA}\label{sec:GSIinstances}

    As previously mentioned, all known strategy improvement algorithms can be viewed as particular instances of GSIA. This includes e.g. switch-based algorithms, like Hoffman-Karp algorithm \cite{condon1993algorithms, tripathi2011strategy} or Ludwig's recursive algorithm~\cite{ludwig1995subexponential}. With the help of GSIA it also becomes very easy to derive new algorithms, by transforming the game into polynomial time solvable instances, such as almost acyclic games~\cite{auger2014finding}. We detail all these old and new algorithms in Section~\ref{section:salad}.
    
    In this section, we focus instead on two particular instances (or family of instances) of GSIA, for which we obtain new complexity bounds using the results of the previous sections.

    \subsection{GSIA and f-strategies} \label{sec:gh}

    The strategy improvement algorithm proposed by Gimbert and Horn in~\cite{gimbert2008simple} (denoted by GHA) can be viewed as an instance of GSIA where the set $A$ of fixed arcs is the set $R$ of all arcs going out of random vertices, and the improvement step in the subgame $G[R,\sigma]$ consists in taking an optimal strategy.
    In this case, the subgame $G[R,\sigma]$ is deterministic (random vertices are connected to sinks only and can be replaced by sinks), hence  optimal values in $G[R,\sigma]$ depend only on the relative ordering of the values $v_\sigma(x)$ for sink and random vertices $x$ of $G$.
    These values can be computed in $O(r\log(r)+n)$ time~\cite{andersson2008deterministic}. In the original paper \cite{gimbert2008simple}, the algorithm is proposed in a context where the number
    of sinks is two, but we generalise their definitions to our context.
    
    Consider a total ordering $f$ on $V_R \cup V_S$, $f : x_1 < x_2 < \cdots < x_{r+s}$,
    where $s$ is the number of sinks. An
    $f$-strategy corresponding to this ordering is an optimal $\M$ strategy in the game where
    the $s+r$ vertices above are replaced by sinks with new values satisfying $\val(x_1) < \val(x_2) < \cdots < \val(x_{r+s}).$ Clearly, this strategy does not depend on the actual values given but only on $f$. Note that if several $f$-strategies exist for a given $f$,
    they share the same values on all vertices.
    
    Algorithm GHA produces an improving sequence of $f$-strategies, and the two sinks of value zero and one are always first and last in the order, hence its number of iterations is bounded by $r!$, the total number of possible orderings of the random vertices. We extend this result to a large class of instances of GSIA: let us call Optimal-GSIA (Opt-GSIA), the meta algorithm obtained from Algorithm~\ref{MA} with two additional constraints:
    \begin{itemize}
    \item the set $A$ of fixed arcs is the same at each step of Algorithm~\ref{MA};
    \item at line $5$, the improving strategy $\sigma'$ is the optimal strategy in $G[A,\sigma]$.
    \end{itemize}
    
    All classical algorithms captured by GSIA, or new ones presented in this article are in fact instances of Opt-GSIA. We now show that Opt-GSIA has an iteration number similar to GHA. Since we have proved a bound of $nq^r$ iterations, by Th.~\ref{th:iteration}, Opt-GSIA has essentially the best known number of iterations, for $q$ small and large (the latter being interesting in the case of random vertices with large degree and arbitrary probability distributions).

       \begin{theorem}\label{theorem:fstrat}
        
       Consider an SSG $G$ and a set of arcs $A$ containing $k$ arcs out of \M or \m vertices.
       Then Algorithm Opt-GSIA  runs in at most $\min((r+k)q^{r},(r+k)!)$ iterations.
        
       \end{theorem}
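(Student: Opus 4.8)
The plan is to exploit that, since $A$ is fixed throughout Opt-GSIA and the transformed game $G[A,\sigma]$ is essentially deterministic, each improving strategy computed by the algorithm is an $f$-strategy whose defining order lives on one fixed set of at most $r+k+2$ vertices; both bounds then follow by counting how many such orders can appear along a strictly improving run. Concretely, $G[A,\sigma]$ has no ``live'' random vertex (each random vertex of $G$ is collapsed into a pseudo-sink of value $v_\sigma(\cdot)$, or, under the reading $A\supseteq R$, literally rewired to $A$-sinks), so it is a deterministic game whose only free parameters are the values of its leaves, namely the values of $v_\sigma$ on $B:=V_S\cup V_R\cup\{\,y:(x,y)\in A,\ x\in V_{\M}\cup V_{\m}\,\}$, a set with $|B|\le r+k+2$. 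The optimal $\M$ strategy of such a game depends on those leaf values only through their relative order; hence, after fixing once and for all a tie-breaking rule, there is a total order $f_i$ on $B$ induced by $v^G_{\sigma_i}$ and a map $F$ with $\sigma_{i+1}=F(f_i)$, so that $v^G_{\sigma_{i+1}}$ is a function of $f_i$ alone (here one uses that $f$-strategies for the same $f$ share the same value vector). The two sinks of value $0$ and $1$ are the extreme elements of every $f_i$.

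For the $(r+k)!$ bound I would show these orders are pairwise distinct. Let $\sigma_0,\dots,\sigma_T$ be the produced strategies; by Theorem~\ref{thm:MA_terminate} and Proposition~\ref{proposition:increasing_transfer}, $v^G_{\sigma_0}<\dots<v^G_{\sigma_T}$. If $f_i=f_j$ with $i<j\le T-1$, then $\sigma_{i+1}=F(f_i)=F(f_j)=\sigma_{j+1}$, contradicting $v^G_{\sigma_{i+1}}<v^G_{\sigma_{j+1}}$. Hence $f_0,\dots,f_{T-1}$ are $T$ distinct orders of $B$ with $0,1$ at the ends, and there are at most $(|B|-2)!\le (r+k)!$ such orders, so $T\le (r+k)!$.

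For the $(r+k)q^r$ bound I would turn each change of order into a quantitative value increase. For $0\le i\le T-2$ we have $f_i\neq f_{i+1}$; using $v^G_{\sigma_i}\le v^G_{\sigma_{i+1}}$ pointwise and that the $0$- and $1$-sinks keep their values, there are $b_1,b_2\in B$ with $v_{\sigma_i}(b_1)<v_{\sigma_i}(b_2)$ and $v_{\sigma_{i+1}}(b_1)\ge v_{\sigma_{i+1}}(b_2)$. Writing all these values over the common denominators supplied by Theorem~\ref{th:value} ($v_{\sigma_i}$ has all coordinates in $\tfrac1{t_i}\mathbb Z$ with $t_i\le q^r$, and $v_{\sigma_{i+1}}$ in $\tfrac1{t_{i+1}}\mathbb Z$ with $t_{i+1}\le q^r$),
\[v_{\sigma_{i+1}}(b_1)-v_{\sigma_i}(b_1)=\underbrace{\big(v_{\sigma_{i+1}}(b_1)-v_{\sigma_{i+1}}(b_2)\big)}_{\ge 0}+\underbrace{\big(v_{\sigma_{i+1}}(b_2)-v_{\sigma_i}(b_2)\big)}_{\ge 0}+\underbrace{\big(v_{\sigma_i}(b_2)-v_{\sigma_i}(b_1)\big)}_{\ge 1/t_i\ge q^{-r}}\ \ge\ q^{-r},\]
and $b_1\in B\setminus V_S$ since its value strictly increased. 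Thus at each of these $T-1$ transitions some vertex of $B\setminus V_S$ (a set of size $\le r+k$) gains at least $q^{-r}$; as each such value is nondecreasing and lies in $[0,1]$, summing gives $(T-1)q^{-r}\le\sum_{b\in B\setminus V_S}\!\big(v_{\sigma_T}(b)-v_{\sigma_0}(b)\big)\le r+k$, hence $T\le (r+k)q^r+1$; a slightly sharper accounting of the first transition absorbs the $+1$ and yields $T\le (r+k)q^r$. Taking the better of the two bounds gives $T\le\min((r+k)q^r,(r+k)!)$.

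The step I expect to be the main obstacle is the structural fact: rigorously justifying that the optimal $\M$ strategy of $G[A,\sigma]$ is determined by the order of $v_\sigma$ on $B$, including the treatment of random vertices, of ties in $v_\sigma$, and of the non-uniqueness of $f$-strategies, so that $\sigma_{i+1}=F(f_i)$ is genuinely a function of $f_i$. Granting this, both the distinctness argument and the telescoping-increase argument (which is essentially the switch/anti-switch trick behind Theorem~\ref{th:iteration}, rearranged so that the denominators cancel and only the $\le r+k$ boundary vertices are ever charged) are short.
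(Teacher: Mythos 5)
You correctly flagged the main obstacle, and it is indeed where the proof breaks. Your structural premise --- that $G[A,\sigma]$ is ``essentially deterministic'' because each random vertex is collapsed into a pseudo-sink of value $v_\sigma(\cdot)$ --- is false for the $A$ allowed by the theorem. The transformation only reroutes the arcs \emph{in} $A$ to new sinks; random vertices whose outgoing arcs are not in $A$ remain live random vertices of $G[A,\sigma]$ with their original transition probabilities (and the statement permits $A$ to contain no random arcs at all). Consequently the map $F$ with $\sigma_{i+1}=F(f_i)$ does not exist: the optimal strategy of $G[A,\sigma]$ depends on the actual numerical values of the $A$-sinks, not merely on their order relative to $v_\sigma$ on $V_R$. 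For instance, if a \M vertex $x$ chooses between an $A$-sink of value $v_\sigma(y)$ and a random vertex $\rho$ that moves with probability $1/2$ to the $1$-sink and with probability $1/2$ to a \M vertex $z$ choosing among $A$-sinks $u_1,u_2$, then the optimal choice at $x$ compares $v_\sigma(y)$ with $\tfrac12+\tfrac12\max(v_\sigma(u_1),v_\sigma(u_2))$, and this comparison can flip between two strategies $\sigma_i,\sigma_j$ inducing the \emph{same} order on $B$. Since both halves of your argument lean on $F$ --- distinctness of the $f_i$ for the $(r+k)!$ bound, and $f_i\neq f_{i+1}$ for the $(r+k)q^r$ bound --- both inherit the gap. (Under your alternative reading $A\supseteq R$ the premise is true, but that only covers a special case of the theorem.)

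The paper circumvents this by ordering the right quantities: $f$ is the order induced on $V_R\cup V_A$ by the \emph{optimal} value vector $v^{G[A,\sigma]}_{\sigma'}$ of the transformed game, and one invokes the Gimbert--Horn machinery --- the optimal strategy of any SSG is an $f$-strategy for the order of its optimal values, and $f$-strategies for the same $f$ share the same value vector --- so that equality of two orders forces equality of the value vectors, contradicting strict improvement. Your quantitative step for the second bound (two vertices swap order, and writing both values of $v_{\sigma_i}$ over the common denominator $t_i\le q^r$ forces an increase of at least $q^{-r}$) is the right trick and is exactly the one used in the proof of Theorem~\ref{th:complexity}; it would survive once the order-change claim is re-established on the correct order. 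As written, though, the proposal does not prove the theorem in its stated generality.
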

       
       \begin{proof}
        Let $\sigma$ be one of the iterated $\M$ strategies obtained by an instance of Opt-GSIA,
        and $\sigma'$ be an optimal strategy in $G[A,\sigma]$. Then $\sigma'$ is 
        consequently an $f$-strategy in $G[A,\sigma]$, where $f$ is the ordering on $V_R \cup V_A$ (where $V_A$ is the set of $A$-sinks) which is induced by the value vector $v^{G[A,\sigma]}_{\sigma'}$ (if vertices have the same value, just arbitrarily decide
        their relative ordering in $f$).
        
        Since strategies produced by the algorithm strictly increase in values by Prop.~\ref{proposition:increasing_transfer}, they must be all distinct. Hence, the order
        $f$ must be distinct at each step of the algorithm, which proves that Opt-GSIA does at most $(r+k)!$ iterations.
        
        Moreover, at every step the value in $G$ of at least one vertex in $V_R \cup V_A$ must improve, by at least $q^{-r}$ because of Th.~\ref{th:value}. Since the value of these vertices is bounded by $1$, the number of iterations of Opt-GSIA is bounded by $(r+k)q^{r}$.
       \end{proof}

    \subsection{Generalised Gimbert and Horn's Algorithm}
    
    Th.~\ref{theorem:fstrat} gives a competitive bound on the number of iterations for a strategy improvement algorithm, but the algorithmic complexity of an instance of Opt-GSIA also depends on how we find an optimal solution in $G[A,\sigma]$. We now present a class of instances of Opt-GSIA generalising GHA, with two interesting properties:  there is a simple condition on $A$, which guarantees that $G[A,\sigma]$
    is solvable in polynomial time, and they can be precisely compared to Ibsen-Jensen and Miltersen's algorithm (denoted by IJMA)~\cite{ibsen2012solving}, which is the current best deterministic algorithm for $2$-SSGs.
    
    Let us describe IJMA, restated in our framework, and generalised to $q$-SSGs. 
    IJMA is not a strategy improvement algorithm but a {\it value iteration algorithm},  which keeps a vector of values for random vertices, here denoted by $v^{\text{IJMA}}_i$ at step $i$.  This vector is updated in the following way:
    \begin{itemize}
    \item first, an optimal $f$-strategy is computed in the deterministic game $G[R,v^{\text{IJMA}}_i]$, and we denote the values of this game by $v'^{\text{IJMA}}_i$  (remember that $R$ is  the set of arcs going out of random vertices in $G$) ;
    \item second, the vector $v^{\text{IJMA}}_i$ is updated on every random vertex by $$v^{\text{IJMA}}_{i+1}(x) = \sum\limits_{y \in N^+(x)} p_x(y) \cdot v'^{\text{IJMA}}_i(y).$$
    \end{itemize}
    As can be seen, IJMA has an almost linear update complexity
    since $G[R,\sigma]$ is a deterministic game and can be solved in $O(r\log(r)+n)$ time.
    
    As noted in Sec.~\ref{sec:gh}, recall that Optimal-GSIA with $R$ as a fixed set of arcs is equivalent to Gimbert and Horn's algorithm (GHA).
    When $A \subseteq R$ in Opt-GSIA, we obtain a generalisation of GHA,
    which can be precisely compared to IJMA as shown in the next theorem.

    \begin{theorem}\label{theorem:comparison}
    Opt-GSIA, with $A \subseteq R$ needs less iterations than IJMA to find the optimal values on any input.
    \end{theorem}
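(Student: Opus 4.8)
The plan is to run IJMA from the null vector $v^{\text{IJMA}}_0=0$ and Opt-GSIA from an arbitrary initial strategy, and to prove that at each step the value vector of the current Opt-GSIA strategy, restricted to $V_R$, dominates the current IJMA vector; since both stay below $v^*$ and IJMA terminates exactly when its vector equals $v^*$ on $V_R$, Opt-GSIA reaches the optimal values no later. Write $\sigma_i$ for the \M strategy Opt-GSIA holds after $i$ iterations, $u_i$ for the restriction of $v^G_{\sigma_i}$ to $V_R$, and $w_i:=v^{\text{IJMA}}_i$. Model one IJMA iteration as an operator $T$: for $w$ a vector on $V_R$, let $\hat G_w$ be the deterministic game obtained from $G$ by replacing every random vertex $x$ by a sink of value $w(x)$, let $v'$ be an optimal value vector of $\hat G_w$, and set $T(w)(x)=\sum_{y\in N^+(x)}p_x(y)\,v'(y)$. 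Two facts about $T$ are needed, both routine: $T$ is monotone (a larger $w$ gives larger sinks in $\hat G_w$, hence a larger optimal value vector by Lemma~\ref{IncreaseSink}, hence a larger average); and, as in the convergence analysis of IJMA, $T^i(0)$ is nondecreasing, bounded above by $v^*|_{V_R}$, and reaches $v^*|_{V_R}$ after finitely many steps, while $\hat G_{v^*|_{V_R}}$ has optimal value vector $v^*$ everywhere, so $v^*|_{V_R}$ is a fixed point of $T$.

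The core of the argument is the inequality $u_{i+1}\geq T(u_i)$. Since $v^G_{\sigma_{i+1}}$ satisfies the random-vertex condition of Lemma~\ref{lemma:NSOptConditions}, $u_{i+1}(x)=\sum_{y}p_x(y)\,v^G_{\sigma_{i+1}}(y)$, so it suffices to show $v^G_{\sigma_{i+1}}\geq v'_i$, where $v'_i$ is the optimal value vector of $\hat G_{u_i}$. Because $v^G_{\sigma_i}$ also satisfies that condition, $\hat G_{u_i}$ has the same optimal value vector as $G[R,\sigma_i]$; and since $A\subseteq R$ and $v^{G[A,\sigma_i]}_{\sigma_i}=v^G_{\sigma_i}$ by Lemma~\ref{NonChangement}, one has $G[R,\sigma_i]=H[R\setminus A,\sigma_i]$ for $H:=G[A,\sigma_i]$. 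Taking $\pi$ an optimal strategy of $G[R,\sigma_i]=H[R\setminus A,\sigma_i]$, Lemma~\ref{EqTransfo} gives $v'_i=v^{H[R\setminus A,\sigma_i]}_{\pi}=v^{H}_{\pi|_{R\setminus A}\sigma_i}$; since an optimal (positional) strategy of $H$ is optimal among all, possibly non-positional, strategies, $v^H_{\pi|_{R\setminus A}\sigma_i}\leq v^H_{\sigma_{i+1}}$. Finally $v^H_{\sigma_{i+1}}=v^{G[A,\sigma_i]}_{\sigma_{i+1}}\leq v^{G[A,\sigma_{i+1}]}_{\sigma_{i+1}}=v^G_{\sigma_{i+1}}$, the middle step being Lemma~\ref{IncreaseSink} applied to the $A$-sink values $v^G_{\sigma_i}\leq v^G_{\sigma_{i+1}}$ (the latter inequality from $\sigma_{i+1}\underset{G[A,\sigma_i]}{\succ}\sigma_i$ and Prop.~\ref{proposition:increasing_transfer}) and the equalities being Lemma~\ref{NonChangement}. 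Chaining these yields $v^G_{\sigma_{i+1}}\geq v'_i$, hence $u_{i+1}\geq T(u_i)$.

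It then remains to conclude: by induction $u_i\geq w_i$, the base case being $u_0\geq 0=w_0$ and the step being $u_{i+1}\geq T(u_i)\geq T(w_i)=w_{i+1}$ from the core inequality and monotonicity of $T$. As $v^G_{\sigma_i}\leq v^*$, also $u_i\leq v^*|_{V_R}$, so at the step $N$ where IJMA first has $w_N=v^*|_{V_R}$ we get $u_N=v^*|_{V_R}$; then $\hat G_{u_N}=\hat G_{v^*|_{V_R}}$ has optimal value $v^*$ everywhere, so the core inequality forces $v^G_{\sigma_{N+1}}\geq v^*$, i.e. $\sigma_{N+1}$ is optimal and Opt-GSIA stops, using at most as many iterations as IJMA. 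The delicate point is the core inequality $v^G_{\sigma_{i+1}}\geq v'_i$ for $A\subsetneq R$, where $\sigma_{i+1}$ is optimal only in $G[A,\sigma_i]$ rather than in $G[R,\sigma_i]$; the identity $G[R,\sigma_i]=(G[A,\sigma_i])[R\setminus A,\sigma_i]$ together with the concatenation Lemma~\ref{EqTransfo} and positional determinacy of SSGs is exactly what closes that gap, whereas for $A=R$ (which is GHA) the argument reduces to a single application of Lemma~\ref{IncreaseSink}, since $\sigma_{i+1}$ is then already optimal in $G[R,\sigma_i]$.
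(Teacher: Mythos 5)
Your proof is correct and follows essentially the same route as the paper's: an induction showing that the value vector of the current Opt-GSIA strategy dominates the current IJMA vector on random vertices, via the chain $v^G_{\sigma_{i+1}} = v^{G[A,\sigma_{i+1}]}_{\sigma_{i+1}} \geq v^{G[A,\sigma_i]}_{\sigma_{i+1}} = \mathrm{opt}(G[A,\sigma_i]) \geq \mathrm{opt}(G[R,\sigma_i]) \geq \mathrm{opt}(G[R,v^{\text{IJMA}}_i])$ together with the averaging condition at random vertices. The only substantive difference is that you justify the step $\mathrm{opt}(G[A,\sigma_i]) \geq \mathrm{opt}(G[R,\sigma_i])$ in detail through the identity $G[R,\sigma_i]=(G[A,\sigma_i])[R\setminus A,\sigma_i]$, Lemma~\ref{EqTransfo} and positional determinacy, where the paper simply asserts it from $A\subseteq R$ — a welcome tightening rather than a different argument (your closing claim that IJMA's iterates reach $v^*|_{V_R}$ exactly in finitely many steps is not quite accurate for a value iteration scheme, but the sandwich $w_i\leq u_i\leq v^*|_{V_R}$ makes the comparison go through under any reasonable termination criterion).
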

    
    \begin{proof}
        We denote by $\sigma_i$ the strategy obtained after $i$ steps of an instance of Opt-GSIA, where $A \subseteq R$.
        We prove by induction on $i$ that $v_{\sigma_i} \geq v^{\text{IJMA}}_i$ (on random vertices).
        In IJMA, the value vector is initialised to $0$ at the first step, hence any choice of initial strategy for Opt-GSIA guarantees a larger value on random vertices and satisfies the induction hypothesis.
        
        Now assume that $v_{\sigma_i} \geq v^{\text{IJMA}}_i$ for some $i$.
        First, we have
        \begin{equation}
        v_{\sigma_{i+1}} = v^{G[A,\sigma_{i+1}]}_{\sigma_{i+1}}
        \end{equation}
        by Lemma~\ref{NonChangement} and 
        \begin{equation}
        v^{G[A,\sigma_{i+1}]}_{\sigma_{i+1}} \geq v^{G[A,\sigma_i]}_{\sigma_{i+1}} 
        \end{equation}
        since $\sigma_{i+1} < \sigma_i$ by Proposition~\ref{proposition:increasing_transfer}. Now
        $v^{G[A,\sigma_i]}_{\sigma_{i+1}}$ is, by definition of Opt-GSIA, an optimal value vector of $G[A,\sigma_i]$, but 
        optimal values are larger in $G[A,\sigma_i]$  than in $G[R,\sigma_i]$
        since $A \subseteq R$, and on the other hand, optimal values of $G[R,\sigma_i]$ are larger than those of 
        $G[R,v^{\text{IJMA}}_i]$, using Lemma~\ref{IncreaseSink} and the induction hypothesis,
        the latter being $v'^{\text{IJMA}}$ by definition of IJMA. Putting these together, 
        we have proved that 
        \begin{equation}
        v_{\sigma_{i+1}} \geq v'_{\text{IJMA}}.
        \end{equation}

        Consider now a random vertex $x$. Considering
        the optimality conditions of Lemma~\ref{lemma:NSOptConditions}
        for $v_{\sigma_{i+1}}$ and the definition of $v^{\text{IJMA}}_{i+1}(x)$, 
        we see that
        \begin{align*}
        v_{\sigma_{i+1}}(x) & = \sum\limits_{y \in N^+(x)} p_x(y) v_{\sigma_{i+1}}(y)
                            &  \geq \sum\limits_{y \in N^+(x)} p_x(y) v'^{\text{IJMA}}_{i+1}(y)
                            & = v^{\text{IJMA}}_{i+1}
        \end{align*}

        This concludes the induction and the result follows.
        \end{proof}
    
    We have proved that on every game, instances of Opt-GSIA with $A\subseteq R$ make less iteration than IJMA; and it can need dramatically less of them. Indeed, the analysis of IJMA~\cite{ibsen2012solving} relies on finding an extremal input for the algorithm, which happens to have no \M nor \m vertices. This extremal input is solved in one iteration of Opt-GSIA with the help of the "best response" step, and Opt-GSIA is then exponentially faster. We have yet no result to quantify how faster Opt-GSIA is in the general case, but we suspect the number of iterations of Opt-GSIA to be much smaller in many cases.
    
    While the number of iterations of Opt-GSIA is better than the number of iterations of IJMA, one should take into account the complexity of a single iteration.
    In general, there is no algorithm for solving an iteration of Opt-GSIA
    in polynomial time, since when 
    $A = \emptyset$ it is equivalent to solving any SSG; but
    let us consider a mild condition ensuring that it will be the case. 
    Suppose that $A$ contains at least one arc with transition
    probability at least $1/n$ out of every random vertex of
    $G$. This is not a very restrictive property, since there is a least one such arc for each random vertex, and thus at least $2^r$ sets $A$ have this property. For $q$-SSGs, with $q$ fixed, the condition can be simplified into saying that $A$ contains at least an arc out of each random vertex.
    In this case, in the subgame $G[A,\sigma]$, there is a probability of stopping on a sink of a least $1/n$ when going through a random vertex. Hence, for this game, the value iteration algorithm
    (as presented in~\cite{ibsen2012solving}), converges in polynomial time in $n$ to a value vector which is close enough to the optimal value vector so that we can recover it. 
    Then, at the end of an iteration of Opt-GSIA, a best response must be computed,  in time $O(rn^{\omega})$. This should be compared to IJMA, whose iterations only requires an almost linear time.

    To remedy this, we propose an hybrid version between Opt-GSIA and IJMA that
    combines the good properties of both algorithms:
    do the same value iteration algorithm as IJMA, but once every 
    $rn^{\omega-1}$ iterations, compute a best response, in time $O\left(rn^{\omega-1}\right)$, to update the values as in Opt-GSIA rather than doing a value propagation. This hybrid version enjoys the same complexity as IJMA since the overhead from the best response computation is in constant time per iteration. Moreover, the proof of Th.~\ref{theorem:comparison} shows that it needs less iterations than
    IJMA, and exponentially so for the extremal input which is a cycle of random vertices.
    
    \subsection{Condon's Converge From Below Algorithm}
    
    In~\cite{condon1993algorithms}, Condon  first presents a faulty algorithm
    (the Naive Converge From Below Algorithm) and then a correct modified version, the Converge From Below (CFB) Algorithm. This algorithm proceeds by improving a value vector iteratively, but we show here that is in fact
    a disguised strategy improvement algorithm, that can be seen as an instance of
    Opt-GSIA. This gives us a proof of convergence of the CFB algorithm
    in the general, non-stopping case (whereas Condon has the assumption that the game is stopping in her proof), and also
    bounds on the number of iterations (none are given in the original paper) by Theorem \ref{th:iteration} and Theorem \ref{theorem:fstrat}.
    
    The CFB algorithm is restated with some clarifications on listing \ref{alg:CVB}
    (we omit the details of the linear program, see~\cite{condon1993algorithms}).
    The algorithm uses two properties of a vector, that we now define. First,
    vector $v$ is {\it feasible} if 
    \begin{romanenumerate}
                \item For $s \in V_S$, $v(s) = \val(s)$
                \item For $r \in V_R$, $v(r) = \sum\limits_{x \in N_r} p_r(x)v(r)$
                \item For $x \in V_{\m}$, $v(x)  \leq \min\limits_{y \in N^{+}(x)} v(y)$
                \item For $x \in V_{\M}$, $v(x) \geq  \max\limits_{y \in N^{+}(x)} v(y)$.
    \end{romanenumerate}
    A feasible vector is {\it stable} at $x$ a $\m$ vertex (resp. $\M$ vertex) if satisfies condition $(iii)$ (resp. condition $(iv)$) of feasibility for $x$ with an equality. 
    
    We now show by induction that the CFB algorithm is
    equivalent to the instance of Opt-GSIA where all $\m$ vertices are fixed, i.e. $A$ is the set of arcs entering \m vertices.
    Let $A_\m$ denote this set.
    
    To see this, suppose that at the beginning of line 5 of CFB, 
    Vector $v_r$ is the value vector of a $\M$-strategy $\sigma$ in $G$. Then:
    
    \begin{itemize}
    \item at Line 5, we ``update $v$ as the feasible vector where all $\m$ vertices $x$ have value $v_r(x)$ and all $\M$ vertices are stable’’. This amounts to finding
    a $\M$-strategy $\sigma'$ which satisfies optimality conditions
    in $G[A_\m, \sigma]$, i.e. an 
    optimal strategy for $\M$ in this subgame. This
    is exactly the subgame improvement step of
    Opt-GSIA. At 
    the end of this step, $v$ is the optimal value vector
    in $G[A_\m, \sigma]$ ;
    
    \item in the next loop, at Line 4 of CFB, we ``compute the value vector $v_r$ of an optimal response  to the $\M$ strategy that plays greedily according to $v$’’,
    i.e. $v_r$ is updated to
    the value vector $v^G_{\sigma'}$.
    This is precisely Line 6 of GSIA when we 
    update values in the subgame.
    \end{itemize}
    
    Hence, we see that except for the initialisation
    where $v_r$ may not correspond to a $\M$-strategy,
    it will be the case as soon as we reach Line 5 of
    the first loop, and from this point on CFB will
    correspond exactly to the instance of Opt-GSIA described above. 
    
        \begin{algorithm}
        	\caption{Converge From Below Algorithm\label{alg:CVB}}
        	\DontPrintSemicolon
        	\KwData{$G$ an SSG}
        	\KwResult{The optimal value vector $v^*$ of $G$}
        	\Begin{
        	    $\cdot$ let $v$ be a feasible vector in which all \m vertices
        	    have value $0$ and all \M vertices are stable \;
        	    \While{$v$ is not an optimal value vector}{
        	        $\cdot$ use linear programming to compute
        	        the value vector $v_r$ of 
        	        an optimal response 
        	        to the \M strategy that
        	        plays greedily according to $v$\;
        	        $\cdot$ update $v$ as the feasible vector where all \m vertices $x$
        	        have value $v_r(x)$ and all \M vertices
        	        are stable\;
        	    }
        	    
        	    \KwRet{$v$}
        	}
        \end{algorithm}
    
    \section{Algorithms Derived from GSIA} \label{section:salad}

    We show that all known strategy improvement algorithms can be expressed as instances of GSIA and we also propose several new algorithms, derived from choices of $A$ which make the transformed game polynomial time solvable. 
    The only algorithms which are not instances of GSIA are based on values rather than strategies:
    value propagation~\cite{Chatterjee2008valueiteration,condon1993algorithms,ibsen2012solving}, quadratic programming~\cite{condon1993algorithms,kvretinsky2020comparison} and dichotomy~\cite{auger2014finding}.
    
    \subsection{Hoffman-Karp Algorithms}
    
         The most classical method to solve an SSG, called the Hoffman-Karp algorithm, repeatedly applies switches to the strategy until finding the optimal one. It is also a generic algorithm, since the choice of the set of vertices to switch at each step is not specified nor the choice of the initial strategy. Many details on these algorithms can be found in~\cite{condon1993algorithms} or~\cite{tripathi2011strategy}.

        Hoffman-Karp algorithms are instances of GSIA, where $A$ is the set of all arcs of the SSG. Indeed, as proved in Lemma~\ref{lemma:switch}, a switch $\sigma'$ of $\sigma$ satisfies $\sigma \underset{G[A,\sigma]}{\succ} \sigma'$. Interpreting Hoffman-Karp algorithms as instances of GSIA proves that they work on non-stopping games, while in most article the stopping condition is required. Moreover, it shows that their number of iterations is $O(nq^r)$ on $q$-SSGs, a complexity exponential in $r$ only, which was known only for algorithms specially designed for this purpose~\cite{lmcs1119,dai2009new,ibsen2012solving,auger2019solving}.

        Ludwig's Algorithm~\cite{ludwig1995subexponential}, which is the best randomised algorithm to solve SSGs, can be seen as an Hoffman-Karp algorithm using Bland's rule as shown in~\cite{auger2019solving}: a random order on the vertices is drawn, and at each step, the first switchable vertex in the order is switched. Two other Hoffman-Karp algorithms are presented in~\cite{tripathi2011strategy}: switching all switchable vertices at each step or switching a random subset. Seeing these three algorithms as instances of GSIA yields $O(nq^r)$ as a deterministic bound on their number of iterations, which was unknown. However, the analysis of~\cite{ludwig1995subexponential,tripathi2011strategy} is required to obtain a good complexity in $n$ for these algorithms.

        \subsection{Selection of the Initial Strategy}

        In~\cite{dai2009new}, Dai and Ge give a randomised improvement of GHA simply by choosing a better initial strategy. To do so, they choose randomly $\sqrt{r!}\log(r!)$ strategies and choose the one with the highest value. This ensures, with high probability, that at most $\sqrt{r!}$ iterations will be done in GHA. Thus, their algorithm runs in $O(\left(\sqrt{r!}\right)$ iterations. This algorithm is also captured by GSIA by selecting the initial strategy in the same way, however it seems hard to combine the gain made by the random selection of the strategy and the bound in $O(q^r)$ of GSIA, since even a strategy close to the optimal one may have values far from it. Remark that it is trivial to extend this method to any instance of Opt-GSIA to improve on the complexity of Th.~\ref{theorem:fstrat}.

    \subsection{New Algorithms}
    
    We can use GSIA to design many strategy improvement algorithms. We present three of them, all based on a choice of $A$ which makes $G[A,\sigma]$ solvable in polynomial time. The initial strategy can be anything 
    and $\sigma'$ is always chosen to be the optimal strategy in $G[A,\sigma]$. Most of them can be seen as generalisations of known algorithms. 
    
    \begin{enumerate}
    \item  Let $A$ be a feedback arc set of $G$, then $G[A,\sigma]$ is acyclic and it can be solved in linear time. It seems intuitively appealing to think that this algorithm will be faster if the feedback arc set
    is small but we have no proof to sustain such a proposition.
    \item  A \M acyclic SSG is an SSG such that that every \M vertex has at most one outgoing arc in a cycle. \M acyclic SSG can be solved in polynomial time, see~\cite{auger2014finding}. If we let $A$ be a set of arc that contains all but one outgoing arcs of each \M vertex, then $G[A,\sigma]$ is \M acyclic and can be solved in polynomial time. Moreover, such a game can be solved by strategy improvement in at most $n$ iterations. This can be seen as a generalisation of Hoffman-Karp algorithm, in which $A$ contains \emph{all} outgoing arcs of \M vertices. 
    \item As an intermediate between acyclic games and \M acyclic games, we may consider almost acyclic games, where all vertices have at most one outgoing arc in a cycle. Almost acyclic SSGs can be solved in linear time~\cite{auger2014finding}.
    \end{enumerate}
 
\bibliography{algoseq.bib}

\begin{thebibliography}{10}

\bibitem{andersson2008deterministic}
Daniel Andersson, Kristoffer~Arnsfelt Hansen, Peter~Bro Miltersen, and
  Troels~Bjerre S{\o}rensen.
\newblock Deterministic graphical games revisited.
\newblock In {\em Conference on Computability in Europe}, pages 1--10.
  Springer, 2008.

\bibitem{Anderson2009}
Daniel Andersson and Peter~Bro Miltersen.
\newblock The complexity of solving stochastic games on graphs.
\newblock In {\em International Symposium on Algorithms and Computation}, pages
  112--121, 2009.

\bibitem{auger2014finding}
David Auger, Pierre Coucheney, and Yann Strozecki.
\newblock Finding optimal strategies of almost acyclic simple stochastic games.
\newblock In {\em International Conference on Theory and Applications of Models
  of Computation}, pages 67--85, 2014.

\bibitem{auger2019solving}
David Auger, Pierre Coucheney, and Yann Strozecki.
\newblock {Solving Simple Stochastic Games with Few Random Nodes Faster Using
  Bland's Rule}.
\newblock In {\em 36th International Symposium on Theoretical Aspects of
  Computer Science (STACS 2019)}, pages 9:1--9:16, 2019.

\bibitem{calude2020deciding}
Cristian~S Calude, Sanjay Jain, Bakhadyr Khoussainov, Wei Li, and Frank
  Stephan.
\newblock Deciding parity games in quasi-polynomial time.
\newblock {\em SIAM Journal on Computing}, 0(0):STOC17--152, 2020.

\bibitem{chaiken1978matrix}
Seth Chaiken and Daniel~J Kleitman.
\newblock Matrix tree theorems.
\newblock {\em Journal of combinatorial theory, Series A}, 24(3):377--381,
  1978.

\bibitem{CHATTERJEE2013reachability}
Krishnendu Chatterjee, Luca {de Alfaro}, and Thomas~A. Henzinger.
\newblock Strategy improvement for concurrent reachability and turn-based
  stochastic safety games.
\newblock {\em Journal of Computer and System Sciences}, 79(5):640 -- 657,
  2013.

\bibitem{chatterjee2011parity}
Krishnendu Chatterjee and Nathanaël Fijalkow.
\newblock A reduction from parity games to simple stochastic games.
\newblock {\em Electronic Proceedings in Theoretical Computer Science}, 54,
  2011.

\bibitem{Chatterjee2008valueiteration}
Krishnendu Chatterjee and Thomas~A. Henzinger.
\newblock {\em Value Iteration}, pages 107--138.
\newblock 2008.

\bibitem{Chen2013urbandriving}
Taolue Chen, Marta Kwiatkowska, Aistis Simaitis, and Clemens Wiltsche.
\newblock Synthesis for multi-objective stochastic games: An application to
  autonomous urban driving.
\newblock In {\em Quantitative Evaluation of Systems}, pages 322--337, 2013.

\bibitem{colcombet2019universal}
Thomas Colcombet and Nathana{\"e}l Fijalkow.
\newblock Universal graphs and good for games automata: New tools for infinite
  duration games.
\newblock In {\em International Conference on Foundations of Software Science
  and Computation Structures}, pages 1--26. Springer, 2019.

\bibitem{condon1992complexity}
Anne Condon.
\newblock The complexity of stochastic games.
\newblock {\em Information and Computation}, 96(2):203--224, 1992.

\bibitem{condon1993algorithms}
Anne Condon.
\newblock On algorithms for simple stochastic games.
\newblock {\em Advances in computational complexity theory}, 13:51--73, 1993.

\bibitem{dai2009new}
Decheng Dai and Rong Ge.
\newblock New results on simple stochastic games.
\newblock In {\em International Symposium on Algorithms and Computation}, pages
  1014--1023. Springer, 2009.

\bibitem{gimbert2008simple}
Hugo Gimbert and Florian Horn.
\newblock Simple stochastic games with few random vertices are easy to solve.
\newblock In {\em Foundations of Software Science and Computational
  Structures}, pages 5--19. Springer, 2008.

\bibitem{lmcs1119}
Hugo Gimbert and Florian Horn.
\newblock Solving simple stochastic games with few random vertices.
\newblock {\em Logical Methods in Computer Science}, {Volume 5, Issue 2}, 2009.

\bibitem{ibsen2012solving}
Rasmus Ibsen-Jensen and Peter~Bro Miltersen.
\newblock Solving simple stochastic games with few coin toss positions.
\newblock In {\em European Symposium on Algorithms}, pages 636--647. Springer,
  2012.

\bibitem{jiang2020faster}
Shunhua Jiang, Zhao Song, Omri Weinstein, and Hengjie Zhang.
\newblock Faster dynamic matrix inverse for faster lps.
\newblock {\em arXiv preprint arXiv:2004.07470}, 2020.

\bibitem{juba2005hardness}
Brendan Juba.
\newblock On the hardness of simple stochastic games.
\newblock {\em Master's thesis, CMU}, 2005.

\bibitem{kvretinsky2020comparison}
Jan K{\v{r}}et{\'\i}nsk{\`y}, Emanuel Ramneantu, Alexander Slivinskiy, and
  Maximilian Weininger.
\newblock Comparison of algorithms for simple stochastic games.
\newblock {\em arXiv preprint arXiv:2009.10882}, 2020.

\bibitem{ludwig1995subexponential}
Walter Ludwig.
\newblock A subexponential randomized algorithm for the simple stochastic game
  problem.
\newblock {\em Information and computation}, 117(1):151--155, 1995.

\bibitem{Shapley1095}
L.~S. Shapley.
\newblock Stochastic games.
\newblock {\em Proceedings of the National Academy of Sciences},
  39(10):1095--1100, 1953.

\bibitem{stirling1999bisimulation}
C~Stirling.
\newblock {Bisimulation, modal logic and model checking games}.
\newblock {\em Logic Journal of the IGPL}, 7(1):103--124, 1999.

\bibitem{tripathi2011strategy}
Rahul Tripathi, Elena Valkanova, and VS~Anil Kumar.
\newblock On strategy improvement algorithms for simple stochastic games.
\newblock {\em Journal of Discrete Algorithms}, 9(3):263--278, 2011.

\end{thebibliography}

\end{document}